\documentclass[a4paper,UKenglish,numberwithinsect,cleveref,thm-restate]{lipics-v2021}
\usepackage{xspace,stmaryrd,import,mathtools}
\hypersetup{hypertexnames=false} 
\pdfsuppresswarningpagegroup=1

\newcommand{\U}{\ensuremath{\mathsf{U}}\xspace}
\newcommand{\mso}{\textup{MSO}\xspace}

\newcommand{\msou}{\textup{MSO+}\U}
\newcommand{\wmsou}{\textup{WMSO+}\U}
\newcommand{\wmsouk}{\textup{WMSO+}$\U_{\mathsf{tup}}$\xspace}
\renewcommand{\phi}{\varphi}
\renewcommand{\L}{\mathsf{L}}
\newcommand{\R}{\mathsf{R}}
\renewcommand{\S}{\mathsf{S}}
\newcommand{\D}{\mathsf{D}}
\newcommand{\F}{\mathsf{F}}
\newcommand{\G}{\mathsf{G}}
\newcommand{\X}{\mathsf{X}}
\newcommand{\Y}{\mathsf{Y}}
\newcommand{\Z}{\mathsf{Z}}
\renewcommand{\a}{\mathsf{a}}
\renewcommand{\b}{\mathsf{b}}
\renewcommand{\c}{\mathsf{c}}
\renewcommand{\d}{\mathsf{d}}

\newcommand{\g}{\mathsf{g}}

\newcommand{\x}{\mathsf{x}}
\newcommand{\Vv}{\mathcal{V}}
\newcommand{\Ff}{\mathcal{F}}
\newcommand{\Pp}{\mathcal{P}}
\newcommand{\Gg}{\mathcal{G}}
\newcommand{\Ll}{\mathcal{L}}
\newcommand{\Tt}{\mathcal{T}}
\newcommand{\Uu}{\mathcal{U}}
\newcommand{\AlA}{\mathbb{A}}
\newcommand{\AlB}{\mathbb{B}}
\newcommand{\Nat}{\mathbb{N}}
\newcommand{\set}[1]{\{#1\}}
\newcommand{\symb}[1]{\langle #1\rangle}
\newcommand{\restr}{{\upharpoonright}}
\newcommand{\dom}{\mathrm{dom}}
\renewcommand{\epsilon}{\varepsilon}
\newcommand{\child}{\curlywedgedownarrow} 
\newcommand{\efin}{{\exists_{\mathsf{fin}}}}
\newcommand{\lamdots}{.\cdots{}.}
\newcommand{\Pht}[1]{\mathit{Typ}_{#1}}
\newcommand{\pht}[3]{\llbracket #2 \rrbracket_{#1}^{#3}}
\newcommand{\SUP}{\mathit{SUP}}
\newcommand{\nd}{\mathsf{nd}}
\newcommand{\refl}{\mathsf{refl}}
\newcommand{\rednd}{\to_\nd}
\newcommand{\true}{\mathsf{tt}}
\newcommand{\false}{\mathsf{ff}}
\newcommand{\Comp}{\mathit{Comp}}
\newcommand{\interval}[1]{[#1]}
\Crefname{equation}{Equality}{Equalities}
\newtheorem{fact}[theorem]{Fact}
\Crefname{fact}{Fact}{Facts}
\Crefname{lemma}{Lemma}{Lemmata}

\author{Anita Badyl}{Institute of Informatics, University of Warsaw, Poland}{}{}{}
\author{Paweł Parys}{Institute of Informatics, University of Warsaw, Poland}{parys@mimuw.edu.pl}{https://orcid.org/0000-0001-7247-1408}{}
\authorrunning{A. Badyl and P. Parys}
\Copyright{Anita Badyl and Paweł Parys}
\funding{Work supported by the National Science Centre, Poland (grant no.\@ 2016/22/E/ST6/00041).}
\relatedversion{This is an extended version of a paper published at the CSL 2024 conference.} 
\relatedversiondetails{Published Version}{https://doi.org/10.4230/LIPIcs.CSL.2024.2} 
\nolinenumbers
\hideLIPIcs

\title{Extending the WMSO+U Logic With Quantification Over Tuples}

\ccsdesc[500]{Theory of computation~Logic and verification}
\ccsdesc[300]{Theory of computation~Verification by model checking}
\ccsdesc[100]{Theory of computation~Rewrite systems}

\keywords{Boundedness, logic, decidability, expressivity, recursion schemes}

\EventEditors{Aniello Murano and Alexandra Silva}
\EventNoEds{2}
\EventLongTitle{32nd EACSL Annual Conference on Computer Science Logic (CSL 2024)}
\EventShortTitle{CSL 2024}
\EventAcronym{CSL}
\EventYear{2024}
\EventDate{February 19--23, 2024}
\EventLocation{Naples, Italy}
\EventLogo{}
\SeriesVolume{288}
\ArticleNo{2}

\begin{document}

\maketitle

\begin{abstract}
	We study a new extension of the weak \mso logic, talking about boundedness.
	Instead of a previously considered quantifier \U, expressing the fact that there exist arbitrarily large finite sets satisfying a given property,
	we consider a generalized quantifier $\U$, expressing the fact that there exist tuples of arbitrarily large finite sets satisfying a given property.
	First, we prove that the new logic \wmsouk is strictly more expressive than \wmsou.
	In particular, \wmsouk is able to express the so-called simultaneous unboundedness property,
	for which we prove that it is not expressible in \wmsou.
	Second, we prove that it is decidable whether the tree generated by a given higher-order recursion scheme satisfies a given sentence of \wmsouk.
\end{abstract}

\section{Introduction}

In the field of logic in computer science, one of the goals is to find logics that, on the one hand, have decidable properties and, on the other hand, are as expressive as possible.
An important example of such a logic is the monadic second-order logic, \mso,
which defines exactly all regular properties of finite and infinite words~\cite{Buchi,Elgot,Trakhtenbrot} and trees~\cite{RabinMSO},
and is decidable over these structures.

A natural question that arises is whether \mso can be extended in a decidable way.
Particular hopes were connected with expressing boundedness properties.
Bojańczyk~\cite{BojanczykU} introduced a logic called \msou, which extends \mso with a new quantifier \U,
with $\U X.\phi$ saying that the subformula $\phi$ holds for arbitrarily large finite sets $X$.
Originally, it was only shown that satisfiability over infinite trees is decidable for formulae where the $\U$ quantifier is only used once and not under the scope of set quantification.
A significantly more powerful fragment of the logic, albeit for infinite words, was shown decidable by Bojańczyk and Colcombet~\cite{bounds} using automata with counters.
These automata  were further developed into the theory of cost functions initiated by Colcombet~\cite{regular-cost-functions}.

The difficulty of \msou comes from the interaction between the \U quantifier and quantification over possibly infinite sets.
This motivated the study of \wmsou, which is a variant of \msou where set quantification is restricted to finite sets (the ``W'' in the name stands for \emph{weak}).
On infinite words, satisfiability of \wmsou is decidable, and the logic has an automaton model~\cite{wmsou-words}.
Similar results hold for infinite trees~\cite{wmsou-trees}, and have been used to decide properties of \textsc{ctl*}~\cite{CarapelleKL13}.
Currently, the strongest decidability result in this line is about \wmsou over infinite trees extended with quantification over infinite paths~\cite{wmsou-path}.
The latter result entails decidability of problems such as the realisability problem for prompt \textsc{ltl}~\cite{KupfermanPitermanVardi09},
deciding the winner in cost parity games~\cite{FijalkowZ12}, or deciding certain properties of energy games~\cite{BrazdilCKN12}.

The results mentioned so far concern mostly the satisfiability problem (is there a model in which a given formula is true?),
but arguably the problem more relevant in practice is the model-checking problem: is a given formula satisfied in a given model?
In a typical setting, the model represents (possible computations of) some computer system, and the formula expresses some desired property of the system, to be verified.
The model is thus usually infinite, although described in a finite way.
In this paper, as the class of considered models we choose trees generated by higher-order recursion schemes,
which is a very natural and highly expressive choice.

Higher-order recursion schemes (recursion schemes in short) are used to faithfully represent the control flow of programs in languages with higher-order functions~%
\cite{Damm82,KNU-hopda,Ong-hoschemes,KobayashiACM}.
This formalism is equivalent via direct translations to simply-typed $\lambda Y$-calculus~\cite{schemes-lY}.
Collapsible pushdown systems~\cite{collapsible} and ordered tree-pushdown systems~\cite{Ordered-Tree-Pushdown} are other equivalent formalisms.
Recursion schemes easily cover finite and pushdown systems, but also some other models such as indexed grammars~\cite{AhoIndexed} and ordered multi-pushdown automata~\cite{OrderedMultiPushdown}.

A classic result, with several proofs and extensions, says that model-checking trees generated by recursion schemes against \mso formulae is decidable:
given a recursion scheme $\Gg$ and a formula $\phi\in\mso$, one can say whether $\phi$ holds in the tree generated by $\Gg$~%
\cite{Ong-hoschemes,collapsible,KobayashiOngtypes,KrivineWS,reflection-selection,ModelSW}.
When it comes to boundedness properties, one has to first mention decidability of the simultaneous unboundedness property
(a.k.a.~diagonal property)~\cite{diagonal-safe,diagonal,types-diagonal-journal}.
In this problem one asks whether, in the tree generated by a given recursion scheme $\Gg$, there exist branches containing arbitrarily many occurrences of each of the labels $a_1,\dots,a_k$
(i.e., whether for every $n\in\Nat$ there exists a branch on which every label from $\set{a_1,\dots,a_k}$ occurs at least $n$ times).
This result turns out to be interesting, because it entails other decidability results for recursion schemes,
concerning in particular computability of the downward closure of recognized languages~\cite{Zetzsche-dc},
and the problem of separability by piecewise testable languages~\cite{sep-piecewise-test}.
Then, we also have decidability for logics talking about boundedness.
Namely, it was shown recently that model-checking for recursion schemes is decidable against formulae from \wmsou~\cite{wmsou-schemes}
(and even from a mixture of \mso and \wmsou, where quantification over infinite sets is allowed but cannot be arbitrarily nested with the \U quantifier~\cite{wmsou-schemes-journal}).
Another paper~\cite{wcmso-safe-schemes} shows decidability of model-checking for a subclass of recursion schemes
against alternating B-automata and against weak cost monadic second-order logic (WCMSO);
these are other formalisms allowing to describe boundedness properties, but in a different style than the \U quantifier.

Interestingly, the decidability of model-checking for \wmsou is obtained by a reduction to (a variant of) the simultaneous unboundedness problem.
On the other hand, it seems that the simultaneous unboundedness property cannot be expressed in \wmsou
(except for the case of a single distinguished letter $a_1$), which is very intriguing.

\subparagraph{Our contribution.}

As a first contribution, we prove the fact that was previously only a hypothesis: \wmsou is indeed unable to express the simultaneous unboundedness property.
Then, we define a new logic, \wmsouk; it is an extension of \wmsou, where the \U quantifier can be used with a tuple of set variables, instead of just one variable.
A construct with the extended quantifier, $\U(X_1,\dots,X_k).\phi$, says that the subformula $\phi$ holds for tuples of sets in which each of $X_1,\dots,X_k$ is arbitrarily large.
This logic is capable of easily expressing properties in which multiple quantities are simultaneously required to be unbounded.
In particular, it can express the simultaneous unboundedness property, and thus it is strictly more expressive than the standard \wmsou logic:

\begin{theorem}\label{thm:expressivity}
	The \wmsouk logic can express some properties of trees that are not expressible in \wmsou;
	in particular, this is the case for the simultaneous unboundedness property.
\end{theorem}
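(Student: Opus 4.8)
My plan is to treat the two assertions separately: first I would exhibit a \wmsouk sentence defining the simultaneous unboundedness property, then prove that no \wmsou sentence defines it. For the positive part, with labels $a_1,\dots,a_k$ fixed, I would take the sentence
\[
	\U(X_1,\dots,X_k).\Big(\mathrm{chain}(X_1\cup\dots\cup X_k)\wedge\bigwedge_{i=1}^{k}\forall x\,(x\in X_i\to a_i(x))\Big),
\]
where $\mathrm{chain}(Y)$ is the plain \wmso formula expressing that $Y$ is linearly ordered by the descendant relation, hence contained in a single branch. The tuple quantifier forces each $X_i$ to be simultaneously arbitrarily large; the chain conjunct places all chosen nodes on one branch; and the labelling conjunct ties $|X_i|$ to the number of $a_i$-nodes on that branch. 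Verifying that the sentence holds exactly when, for every $n$, some branch carries at least $n$ occurrences of each $a_i$ is then a routine unfolding of the semantics of the new quantifier, so this direction is essentially immediate.

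The second part is the real work, and I would argue by contradiction. Suppose some $\phi\in\wmsou$ of quantifier rank $r$ defined the property (say $k=2$, with labels $\a,\b$). I would then construct two trees $A$ and $B$ with $A\models\phi$ and $B\not\models\phi$ that are indistinguishable by rank-$r$ \wmsou formulas. Take $A$ to be a \emph{comb}: along an infinite spine, hang for each $n$ a finite branch carrying $n$ occurrences of $\a$ followed by $n$ of $\b$; thus $A$ satisfies the property. For $B$ use the same comb shape but hang, for each $n$, one branch with $n$ copies of $\a$ and a single $\b$, and one branch with a single $\a$ and $n$ copies of $\b$; then $\a$ and $\b$ are each individually unbounded along branches, yet the per-branch minimum $\min(\#_\a,\#_\b)$ is bounded, so $B$ fails the property. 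The point of this design is that $A$ and $B$ agree on every ``one-dimensional'' feature, so any distinguishing formula would have to detect genuine simultaneity rather than mere unboundedness of one letter.

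The main obstacle, and the technical heart of the argument, is a Feferman--Vaught-style composition theorem for \wmsou yielding $A\equiv_r B$. The subtlety is the \U quantifier: the rank-$r$ type of a subtree cannot merely record which rank-$(r-1)$ types are realizable by a finite set $X$, but must also record whether those types are realizable by sets of \emph{unbounded} size. I would define such size-augmented types, prove that they compose under the comb operation, and then show that the gadgets $\{G_{n,n}\}$ used in $A$ and the gadgets $\{G_{n,1},G_{1,n}\}$ used in $B$ aggregate to the same type at every rank. The crux is that \U only tracks $|X|\to\infty$, while \wmso cannot compare the cardinalities $|X\cap\a|$ and $|X\cap\b|$; hence a single growing set witnesses unboundedness in $A$ and in $B$ alike, and no rank-$r$ formula can separate the coupled gadget from the pair of decoupled ones.

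An alternative, possibly shorter, route is to encode branches as an $\omega$-word and invoke the known equivalence of \wmsou over words with deterministic max-automata, reducing the claim to a pumping argument: such automata, whose counters support only increment, reset, and $\max$, cannot check that $\min(\#_\a,\#_\b)$ is unbounded across branches, essentially because $\max$-based acceptance cannot enforce that two independently driven counters grow in lockstep. I expect the composition argument on trees to be the cleaner of the two and to generalize directly from $k=2$ to arbitrary $k\ge 2$, whereas the automaton route has the advantage of reusing an off-the-shelf characterization.
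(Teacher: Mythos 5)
Your overall strategy coincides with the paper's: the positive direction is the same one-line use of the tuple quantifier together with a \wmso formula selecting a single branch, and the negative direction is by contradiction, building one ``coupled'' comb satisfying SUP and one ``decoupled'' comb violating it, and proving them equivalent via compositional, size-augmented logical types. However, your concrete choice of the tree $B$ breaks the argument. If the decoupled gadgets are $G_{n,1}$ (with a \emph{single} $\b$) and $G_{1,n}$ (with a \emph{single} $\a$), then every branch of $B$ carries at most one occurrence of one of the two letters, whereas $A$ has a branch with at least two $\a$'s and at least two $\b$'s. The first-order sentence ``some branch contains at least two $\a$-nodes and at least two $\b$-nodes'' therefore separates $A$ from $B$, so no composition theorem can yield $A\equiv_r B$ for any $r\geq$ the rank of that sentence. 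The point is that ``$\min(\#_\a,\#_\b)$ is bounded on $B$'' is not enough: the bound must be made invisible to the formula. The paper fixes a number $N$ bounding the number of logical types of all subformulae and uses gadgets $S_{m,n}$ with $mN!$ $\a$'s followed by $nN!$ $\b$'s, so that in the SUP-violating tree every branch still has at least $N!$ occurrences of \emph{each} letter; $N!$ is large enough that the formula cannot distinguish it from $jN!$.

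A second, related omission: even after replacing ``$1$'' by a large constant, showing that the two combs have equal types requires the gadget lengths to be commensurable, because the proof proceeds by locating two levels at distance at most $N$ with the same ($\efin\X.\psi$-)type and pumping the segment between them; this changes a path's length only by multiples of that distance. The paper makes every relevant length a multiple of $N!$ precisely so that any such period divides the required stretch, and it needs two asymmetric transfer lemmata (with an explicit function $f(n)\to\infty$) to move not just realizability of a type but realizability \emph{by arbitrarily large sets} between the two trees --- this is what makes the $\U$-coordinate of the types agree. Your sketch correctly identifies that the types must record unbounded realizability, but the quantitative transfer and the divisibility discipline are where the actual work lies, and neither is addressed. (Your alternative route via max-automata over $\omega$-words also does not apply as stated, since the property quantifies over a set of branches of a tree rather than over a single $\omega$-word.)
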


In fact, to separate the two logics it is enough to consider \wmsouk only with \U quantifiers for pairs of variables (i.e., with $k=2$).
Actually, we are convinced that the proof of \cref{thm:expressivity} contained in this paper can be modified for showing that, for every $k\geq 2$,
\wmsouk without \U quantifiers for tuples of length at least $k$ is less expressive than \wmsouk with such quantifiers (cf.\@\cref{remark3}).

Our main theorem says that the model-checking procedure for \wmsou can be extended to the new logic:

\begin{theorem}\label{thm:decidability}
	Given an \wmsouk sentence $\phi$ and a recursion scheme $\Gg$ one can decide whether $\phi$ is true in the tree generated by $\Gg$.
\end{theorem}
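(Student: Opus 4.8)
The plan is to follow the compositional, type-theoretic route that underlies the known decidability of \wmsou model-checking~\cite{wmsou-schemes}, and to enrich the carried information so that it can cope with the \emph{simultaneous} character of the new quantifier. Concretely, for every subformula $\psi$ of $\phi$ I would fix a finite set of \emph{types}, where the type of a tree (together with a valuation of the free set variables of $\psi$) records just enough to determine, compositionally, the type of any larger tree built from it. The key requirement is a composition lemma: if a tree is obtained by plugging a subtree $s$ into a context, its type is a function of the type of $s$ and of the context, so that types propagate through the rewriting carried out by $\Gg$. Since $\phi$ is a sentence, the free-variable valuation is trivial at the top level, so the type of the generated tree directly decides whether $\phi$ holds.

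For the Boolean connectives and the (weak, i.e.\ finite) set quantifiers this is essentially the weak-\mso automaton read as a type system, and it is routine. The two quantitative cases are the ordinary \U quantifier and the new tuple quantifier $\U(X_1,\dots,X_k)$. For the latter, the type of a tree can no longer be a single ``how large can $X$ get'' value; it must abstract the set of achievable size vectors, i.e.\ which tuples $(|X_1|,\dots,|X_k|)$ can be realized \emph{simultaneously} while the body holds. I would represent this set by a finite skeleton recording, for each coordinate, whether it is bounded, and, among the unbounded coordinates, which ones can be made large jointly as opposed to only in a trade-off fashion. A Ramsey/pumping argument is needed here to show that, across the infinitely many possible sizes, only finitely many qualitatively different behaviours occur, so that the abstraction is finite and closed under composition.

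Having fixed the types, I would evaluate them on the tree generated by $\Gg$ using the standard correspondence between (intersection-)type derivations and runs of a scheme: assign the computed types to the nonterminals of $\Gg$ and search for a consistent, well-founded assignment, which amounts to a fixed-point computation over a finite type lattice. Because the tuple quantifier asks for \emph{arbitrarily large} witnesses, this cannot be a plain reachability check; instead, whether the required size vectors are unboundedly and jointly realizable reduces to a generalization of the simultaneous unboundedness problem, which is decidable for recursion schemes~\cite{diagonal,types-diagonal-journal}. I would therefore phrase the final query as such a boundedness question and invoke that decidability. Note that the tuple quantifier is in fact morally closer to \SUP (which is itself about simultaneous largeness) than the single \U quantifier is, which should make this last reduction natural.

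The main obstacle I expect is precisely the tuple quantifier and its interaction with nested set quantification and other \U's. In the single-\U case one tracks one counter and a tree is ``good'' as soon as that counter is unbounded; for tuples the coordinates may be individually unbounded yet not jointly unbounded, and composition can both create and destroy such joint unboundedness. The heart of the proof is thus the composition lemma for the size-vector abstraction, together with the Ramsey-type argument guaranteeing its finiteness. Once these are in place, the reduction to simultaneous unboundedness and the fixed-point computation over $\Gg$ should be comparatively standard extensions of the \wmsou machinery.
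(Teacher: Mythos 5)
Your first two steps coincide with the paper's: the type of a tree for $\U(\X_1,\dots,\X_k).\psi$ is exactly the abstraction you describe, namely a tuple indexed by the subsets $I\subseteq\interval{1,k}$ recording which $\psi$-types remain realizable when all coordinates in $I$ are required to be simultaneously large (this is finite by construction, $2^k$ subsets of the finite set $\Pht\psi$), and the composition lemma you ask for is the paper's \cref{lem:compositionality}, whose $\U$-case is proved by precisely the pigeonhole argument you anticipate. So the ``heart of the proof'' you identify is real and is carried out in the paper.

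The gap is in how you evaluate these types on the tree generated by $\Gg$. The paper does \emph{not} build an intersection-type system on the scheme and does not end with a single boundedness query; it instead compiles $\phi$, by induction on its structure, into a \emph{sequence} of tree operations (finite tree transducers, \mso reflection, and SUP \emph{reflection}) under which the class of scheme-generated trees is effectively closed, so that after the sequence every node $u$ carries $\pht{\psi}{T\restr u}{\varnothing}$ for the relevant subformula $\psi$. The reason reflection (node-wise annotation) rather than mere decidability of the diagonal problem is needed is nesting: when a $\U$ quantifier sits under $\efin$, negation, or another $\U$, the answer to the inner simultaneous-unboundedness question is required \emph{at every node of the infinite tree} as an input to the compositional computation of the types of the enclosing subformula; a single ``final query'' to a SUP decision procedure, or a fixed-point computation over a finite lattice, cannot supply this, and your sketch does not say how the per-node answers would be produced or consumed. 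Concretely, the paper handles a $\U(\X_1,\dots,\X_k).\psi$ node by having a transducer graft below every node $u$ a nondeterministic ($\nd$-labelled) gadget whose language $\Ll(\cdot)$ contains, for each realizable choice of $X_1,\dots,X_k$, a finite tree with exactly $|X_i|$ occurrences of the letter $\X_i$ (using the already-computed $\psi$-type annotations and $\Comp_{a,\psi}$ to enumerate consistent type decompositions); then SUP reflection (\cref{fact:sup-reflection}) stamps onto each gadget root which subsets of $\set{\X_1,\dots,\X_k}$ are jointly unbounded, which is read back into the $I$-indexed type by an \mso formula and a final transducer. Without this (or an equivalent reflection mechanism built into your type system on the scheme), the induction through nested quantifiers does not go through.
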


\section{Preliminaries}

The powerset of a set $X$ is denoted $\Pp(X)$.
For $i,j\in\Nat$ we define $\interval{i,j}=\set{k\in\Nat\mid i\leq k\leq j}$.
The domain of a function $f$ is denoted $\dom(f)$.
When $f$ is a function, by $f[x\mapsto y]$ we mean the function that maps $x$ to $y$ and every other $z\in\dom(f)$ to $f(z)$.

\subparagraph{Trees.}

We consider rooted, potentially infinite trees, where children are ordered.
For simplicity of the presentation, we consider only binary trees, where every node has at most two children.
This is not really a restriction.
Indeed, it is easy to believe that our proofs can be generalized to trees of arbitrary bounded finite arity without any problem (except for notational complications).
Alternatively, a tree of arbitrary bounded finite arity can be converted into a binary tree using the first child / next sibling encoding,
and a logical formula can be translated as well to a formula talking about the encoding;
this means that the \wmsouk model-checking problem over trees of arbitrary bounded finite arity can be reduced to such a problem over binary trees.

Formally, a \emph{tree domain} (a set of tree nodes) is a set $D\subseteq\set{\L,\R}^*$ that is closed under taking prefixes (i.e., if $uv\in D$ then also $u\in D$).
A \emph{tree} over an alphabet $\AlA$ is a function $T\colon D\to\AlA$, for some tree domain $D$.
The set of trees over $\AlA$ is denoted $\Tt(\AlA)$.
The \emph{subtree} of $T$ starting in a node $v$ is denoted $T\restr v$ and is defined by $(T\restr v)(u)=T(vu)$ (with domain $\set{u\in\set{\L,\R}^*\mid vu\in\dom(T)}$).
For nodes we employ the usual notions of child, parent, ancestor, descendant, etc.\@ (where we assume that a node is also an ancestor and a descendant of itself).

For trees $T_1, T_2$, and for $a\in\AlA$ we write $a\symb{T_1,T_2}$ for the tree $T$ such that $T\restr\L=T_1$, $T\restr\R=T_2$, and $T(\epsilon)=a$.
We also write $\bot$ for the tree with empty domain.

\subparagraph{Recursion schemes.}

Recursion schemes are grammars used to describe some infinite trees in a finitary way.
We introduce recursion schemes only by giving an example, rather than by defining them formally.
This is enough, because this paper does not work with recursion schemes directly; it only uses some facts concerning them.

A recursion scheme is given by a set of rules, like this:
\begin{align*}
	&\S\to \F\,\G\,, &
	&\D\,\g\,\x\to \g\,(\g\,\x)\,,\\
	&\F\,\g\to\a\symb{\g\,\bot,\F\,(\D\,\g)}\,, &
	&\G\,\x\to\b\symb{\x,\bot}\,.
\end{align*}
Here $\S,\D,\F,\G$ are nonterminals, with $\S$ being the starting nonterminal, $\x,\g$ are variables, and $\a,\b$ are letters from $\AlA$.
To generate a tree, we start with $\S$, which reduces to $\F\,\G$ using the first rule.
We now use the rule for $\F$, where the parameter $\g$ is instantiated to be $\G$; we obtain $\a\symb{\G\,\bot,\F\,(\D\,\G)}$.
This already defines the root of the tree, which should be $\a$-labeled;
its two subtrees should be generated from $\G\,\bot$ and $\F\,(\D\,\G)$, respectively.
We see that $\G\,\bot$ reduces to $\b\symb{\bot,\bot}$, which is a tree with a single $\b$-labeled node.
On the other hand, $\F\,(\D\,\G)$ reduces to $\a\symb{\D\,\G\,\bot,\F\,(\D\,(\D\,\G))}$,
which means that the right child of the root is $\a$-labeled,
and its left subtree generated from $\D\,\G\,\bot$ (which reduces to $\G\,(\G\,\bot)$, then to $\b\symb{\G\bot,\bot}$, and then to $\b\symb{\b\symb{\bot,\bot},\bot}$)
is a path consisting of two $\b$-labeled nodes.
Continuing like this, when going right we always obtain a next $\a$-labeled node (we thus have an infinite $\a$-labeled branch),
and to the left of the $i$-th such node we have a tree generated from $\underbrace{\D\,(\D\,(\dots(\D}_{i-1}\G)\dots))\,\bot$,
which is a finite branch consisting of $2^{i-1}$ $\b$-labeled nodes (note that every $\D$ applies its argument twice, and hence doubles the number of produced $\b$-labeled nodes).
The resulting tree is depicted on \cref{fig:example-scheme}.
\begin{figure}
	\centering
	\def\svgscale{0.5}\import{pics/}{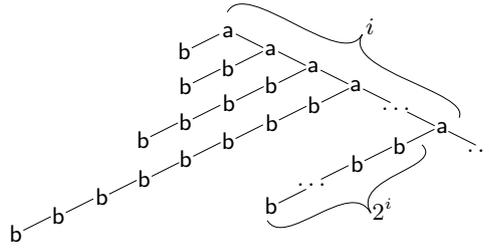}\vspace{-1ex}
	\caption{The tree generated by the example recursion scheme}
	\label{fig:example-scheme}
\end{figure}

For a formal definition of recursion schemes consult prior work (e.g., \cite{KNU-hopda,KobayashiACM,KrivineWS,wmsou-schemes}).
Some of these papers use a lambda-calculus notation, where our rule for $\D$ would be rather written as $\D\to\lambda\g.\lambda\x.\g\,(\g\,\x)$.
Sometimes it is also allowed to have $\lambda$ inside a rule, like $\S\to \F\,(\lambda\,\x.\b\symb{\x,\bot})$;
this does not make the definition more general, because subterms starting with $\lambda$ can be always extracted to separate nonterminals.

\section{The \texorpdfstring{\wmsouk}{WMSOU+tup} logic}

In this section we introduce the logic under consideration: the \wmsouk logic.

\subparagraph{Definition.}

For technical convenience, we use a syntax in which there are no first-order variables.
It is easy to translate a formula from a more standard syntax to ours:
first-order variables may be simulated by set variables for which we check that they contain exactly one node
(i.e., that they are nonempty and that every subset thereof is either empty or equal to the whole set).

We assume an infinite set $\Vv$ of variables, which can be used to quantify over finite sets of tree nodes.
In order to distinguish variables from sets to which these variables are valuated, we denote variables using Sans Serif font (e.g., $\X,\Y,\Z$).
In the syntax of \wmsouk we have the following constructions:
\begin{align*}
	\phi::= a(\X)\mid
		\X\child_d \Y\mid
		\X\subseteq \Y\mid
		\phi_1\land\phi_2\mid
		\neg\phi'\mid
		\efin\X.\phi'\mid
		\U(\X_1,\dots,\X_k).\phi'\,,
\end{align*}
where $a\in\AlA$, $d\in\set{\L,\R}$, $k\in\Nat$, and $\X,\Y,\X_1,\dots,\X_k\in\Vv$.
Free variables of a formula are defined as usual; in particular $\U(\X_1,\dots,\X_k)$ is a quantifier that bounds the variables $\X_1,\dots,\X_k$.

We evaluate formulae of \wmsouk in $\AlA$-labeled trees.
In order to evaluate a formula $\phi$ in a tree $T$, we also need a \emph{valuation}, that is, a function $\nu$ from $\Vv$ to finite sets of nodes of $T$
(its values are meaningful only for free variables of $\phi$).
The semantics of formulae is defined as follows:
\begin{itemize}
\item	$a(\X)$ holds when every node in $\nu(\X)$ is labeled with $a$,
\item	$\X\child_d \Y$ holds when both $\nu(\X)$ and $\nu(\Y)$ are singletons, and the unique node in $\nu(\Y)$ is the left (if $d=\L$) / right (if $d=\R$) child of the unique node in $\nu(\X)$,
\item	$\X\subseteq \Y$ holds when $\nu(\X)\subseteq\nu(\Y)$,
\item	$\phi_1\land\phi_2$ holds when both $\phi_1$ and $\phi_2$ hold,
\item	$\neg\phi'$ holds when $\phi'$ does not hold,
\item	$\efin\X.\phi'$ holds if there exists a finite set $X$ of nodes of $T$ for which $\phi'$ holds under the valuation $\nu[\X\mapsto X]$, and
\item	$\U(\X_1,\dots,\X_k).\phi'$ holds if for every $n\in\Nat$ there exist finite sets $X_1,\dots,X_k$ of nodes of $T$, each of cardinality at least $n$,
	such that $\phi'$ holds under the valuation $\nu[\X_1\mapsto X_1,\dots,\X_k\mapsto X_k]$.
\end{itemize}
We write $T,\nu\models\phi$ to denote that $\phi$ holds in $T$ under the valuation $\nu$.

\subparagraph{Logical types.}

In proofs of both our results, \cref{thm:expressivity} and \cref{thm:decidability}, we use logical types, which we now define.

Let $\phi$ be a formula of \wmsouk, let $T$ be a tree, and let $\nu$ be a valuation.
We define the \emph{$\phi$-type} of $T$ under valuation $\nu$, denoted $\pht{\phi}{T}{\nu}$, by induction on the size of $\phi$ as follows:
\begin{itemize}
\item	if $\phi$ is of the form $a(\X)$ (for some letter $a\in\AlA$) or $\X\subseteq \Y$ then $\pht{\phi}{T}{\nu}$ is the logical value of $\phi$ in $T,\nu$, that is,
	$\true$ if $T,\nu\models\phi$ and $\false$ otherwise,
\item	if $\phi$ is of the form $\X\child_d \Y$, then $\pht{\phi}{T}{\nu}$ equals:
	\begin{itemize}
	\item	$\true$ if $T,\nu\models\phi$,
	\item	$\mathsf{empty}$ if $\nu(\X)=\nu(\Y)=\emptyset$,
	\item	$\mathsf{root}$ if $\nu(\X)=\emptyset$ and $\nu(Y)=\set{\epsilon}$, and
	\item	$\false$ otherwise,
	\end{itemize}
\item	if $\phi=(\psi_1\land\psi_2)$, then $\pht{\phi}{T}{\nu}=(\pht{\psi_1}{T}{\nu},\pht{\psi_2}{T}{\nu})$,
\item	if $\phi=(\neg\psi)$, then $\pht{\phi}{T}{\nu}=\pht{\psi}{T}{\nu}$,
\item	if $\phi=\efin\X.\psi$, then
	\begin{align*}
		\pht{\phi}{T}{\nu}=\set{\sigma\mid\exists X.\,\pht{\psi}{T}{\nu[\X\mapsto X]}=\sigma}\,,
	\end{align*}
	where $X$ ranges over finite sets of nodes of $T$, and
\item	if $\phi=\U(\X_1,\dots,\X_k).\psi$, then
	\begin{align*}
		\pht{\phi}{T}{\nu}=\big(\set{\sigma\mid\forall n\in\Nat.\,\exists X_1\lamdots\exists X_k.\,&\pht{\psi}{T}{\nu[\X_1\mapsto X_1,\dots,\X_k\mapsto X_k]}=\sigma\\
			&\land\forall i\in I.\, |X_i|\geq n}\big)_{I\subseteq\interval{1,k}}\,,
	\end{align*}
	where $X_1,\dots,X_k$ range over finite sets of nodes of $T$ (the above $\phi$-type is a tuple of $2^k$ sets, indexed by subsets $I$ of $[1,k]$).
\end{itemize}

For each $\phi$, let $\Pht\phi$ denote the set of all potential $\phi$-types.
Namely, $\Pht\phi=\set{\true,\false}$ if $\phi=a(\X)$ or $\phi=(\X\subseteq\Y)$, $\Pht\phi=\set{\true, \mathsf{empty}, \mathsf{root}, \false}$ if $\phi=\X\child_d\Y$,
$\Pht\phi=\Pht{\psi_1}\times\Pht{\psi_2}$ if $\phi=(\psi_1\land\psi_2)$, $\Pht\phi=\Pht{\psi}$ if $\phi=(\neg\psi)$; $\Pht\phi=\Pp(\Pht\psi)$ if $\phi=\efin\X.\psi$, and
$\Pht\phi=(\Pp(\Pht\psi))^{2^k}$ if $\phi=\U(\X_1,\dots,\X_k).\psi$.

The following two \lcnamecrefs{prop:pht-finite} can be shown by a straightforward induction on the structure of a considered formula:

\begin{fact}\label{prop:pht-finite}
	For every \wmsouk formula $\phi$ the set $\Pht\phi$ is finite.
\end{fact}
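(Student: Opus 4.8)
The plan is to prove \cref{prop:pht-finite} by structural induction on the formula $\phi$, following exactly the recursive definition of $\Pht\phi$ given just above the statement. The key observation is that the set $\Pht\phi$ is defined by explicit recursion on the structure of $\phi$, and each clause builds $\Pht\phi$ from the type-sets of strictly smaller subformulae using only operations that preserve finiteness. So the entire argument is a matter of checking, case by case, that a finite input yields a finite output at each construct.

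\medskip

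\noindent\textbf{Base cases.} For atomic formulae the set $\Pht\phi$ is given outright as a fixed finite set: $\Pht\phi=\set{\true,\false}$ when $\phi=a(\X)$ or $\phi=(\X\subseteq\Y)$, and $\Pht\phi=\set{\true,\mathsf{empty},\mathsf{root},\false}$ when $\phi=\X\child_d\Y$. These have cardinality $2$ and $4$ respectively, so finiteness is immediate and requires no induction hypothesis.

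\medskip

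\noindent\textbf{Inductive cases.} For each compound construct I would invoke the induction hypothesis on the immediate subformulae and then argue that the combining operation preserves finiteness. Concretely: if $\phi=(\psi_1\land\psi_2)$ then $\Pht\phi=\Pht{\psi_1}\times\Pht{\psi_2}$, a Cartesian product of two finite sets, hence finite; if $\phi=(\neg\psi)$ then $\Pht\phi=\Pht\psi$ is finite directly by hypothesis; if $\phi=\efin\X.\psi$ then $\Pht\phi=\Pp(\Pht\psi)$ is the powerset of a finite set, hence finite (of size $2^{|\Pht\psi|}$); and if $\phi=\U(\X_1,\dots,\X_k).\psi$ then $\Pht\phi=(\Pp(\Pht\psi))^{2^k}$ is a finite power of the finite set $\Pp(\Pht\psi)$, hence finite. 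In each case the bound on $|\Pht\phi|$ is an explicit finite function of $|\Pht\psi|$ (or of $|\Pht{\psi_1}|$ and $|\Pht{\psi_2}|$), so the induction goes through cleanly.

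\medskip

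Since there is genuinely no obstacle here — every clause in the definition of $\Pht\phi$ uses only products, powersets, finite powers, and identity, all of which map finite sets to finite sets — the proof is entirely routine, which is precisely why the paper flags it as provable ``by a straightforward induction.'' The only point worth stating carefully is that the recursion defining $\Pht\phi$ descends to strictly smaller subformulae, so the structural induction is well-founded; once that is noted, the case analysis above completes the argument.
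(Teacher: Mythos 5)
Your proof is correct and matches the paper's intended argument exactly: the paper dispatches this fact with the remark that it follows ``by a straightforward induction on the structure of a considered formula,'' and your case analysis is precisely that induction spelled out. Nothing is missing.
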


The second \lcnamecref{prop:pht-2-form} says that whether or not $\phi$ holds in $T,\nu$ is determined by $\pht{\phi}{T}{\nu}$:

\begin{fact}\label{prop:pht-2-form}
	For every \wmsouk formula $\phi$ there is a computable function $\mathit{tv}_\phi\colon\Pht\phi\to\set{\true,\false}$ such that
	for every tree $T\in\Tt(\AlA)$ and every valuation $\nu$ in $T$, it holds that $\mathit{tv}_\phi(\pht{\phi}{T}{\nu})=\true$ if, and only if, $T,\nu\models\phi$.
\end{fact}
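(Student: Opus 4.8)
The plan is to prove the statement by structural induction on $\phi$, defining the function $\mathit{tv}_\phi$ simultaneously with the induction and reading off computability directly from the construction. For the two atomic cases $\phi = a(\X)$ and $\phi = (\X \subseteq \Y)$ the type $\pht{\phi}{T}{\nu}$ is by definition already the truth value of $\phi$, so $\mathit{tv}_\phi$ is the identity. For $\phi = \X \child_d \Y$ the type carries extra information (the markers $\mathsf{empty}$ and $\mathsf{root}$), but $T,\nu\models\phi$ holds precisely when the type is $\true$, so I let $\mathit{tv}_\phi$ send $\true$ to $\true$ and each of the three remaining values to $\false$.

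For the Boolean and $\efin$ cases I simply compose the functions supplied by the induction hypothesis. If $\phi = \psi_1 \land \psi_2$, then $\pht{\phi}{T}{\nu}$ is the pair of the component types, and I set $\mathit{tv}_\phi(\tau_1,\tau_2) = \mathit{tv}_{\psi_1}(\tau_1)\wedge\mathit{tv}_{\psi_2}(\tau_2)$; the case $\phi = \neg\psi$ is analogous with negation, using that $\pht{\phi}{T}{\nu} = \pht{\psi}{T}{\nu}$. If $\phi = \efin\X.\psi$, then $\pht{\phi}{T}{\nu}$ is exactly the set of $\psi$-types realisable by some finite valuation of $\X$, so by the induction hypothesis $T,\nu\models\phi$ holds if and only if this set contains some $\sigma$ with $\mathit{tv}_\psi(\sigma)=\true$; I take this existential condition as the definition of $\mathit{tv}_\phi$. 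In each of these cases the search ranges only over the finite sets $\Pht{\psi_1}$, $\Pht{\psi_2}$, $\Pht\psi$ (\cref{prop:pht-finite}), so $\mathit{tv}_\phi$ remains computable.

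The only genuinely delicate case is $\phi = \U(\X_1,\dots,\X_k).\psi$, and this is where I expect the main obstacle. The type is a tuple indexed by $I\subseteq\interval{1,k}$, and I will need only its component at $I = \interval{1,k}$, namely the set of $\psi$-types $\sigma$ such that for every $n$ there are finite sets $X_1,\dots,X_k$ all of cardinality at least $n$ realising $\sigma$. I claim that $T,\nu\models\phi$ if and only if this component contains some $\sigma$ with $\mathit{tv}_\psi(\sigma)=\true$, and I take this as the definition of $\mathit{tv}_\phi$. The direction from such a $\sigma$ to satisfaction is immediate from the induction hypothesis. The converse is the subtle point: the semantics of $\U$ only guarantees that for each $n$ there is \emph{some} realisable type with truth value $\true$, and a priori this type could vary with $n$. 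Since $\Pht\psi$ is finite (\cref{prop:pht-finite}), a pigeonhole argument yields a single type $\sigma$ that is realised with all cardinalities at least $n$ for infinitely many $n$, and hence, by monotonicity of the cardinality constraint, for every $n$; this $\sigma$ then lies in the $I=\interval{1,k}$ component and satisfies $\mathit{tv}_\psi(\sigma)=\true$. Computability again follows because the whole condition is a finite search over $\Pht\psi$.
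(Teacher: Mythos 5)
Your proof is correct and is precisely the "straightforward induction on the structure of the formula" that the paper invokes without spelling out; every case, including the pigeonhole argument reducing the $\U$ case to the $I=\interval{1,k}$ component of the type, works as you describe.
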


Next, we observe that types behave in a compositional way, as formalized below.
Here, for a node $w$ we write $X\restr w$ and $\nu\restr w$ to denote the restriction of a set $X$ and of a valuation $\nu$ to the subtree starting at $w$;
formally, $X\restr w=\set{u\mid wu\in X}$ and $\nu\restr w$ maps every variable $\X\in\Vv$ to $\nu(\X)\restr w$.

\begin{restatable}{proposition}{compositionality}\label{lem:compositionality}
	For every letter $a\in\AlA$ and every formula $\phi$, one can compute a function $\Comp_{a,\phi}\colon\Pp(\Vv)\times\Pht\phi\times\Pht\phi\to\Pht\phi$ such that
	for every tree $T$ whose root has label $a$ and for every valuation $\nu$,
	\begin{align}
		\pht{\phi}{T}{\nu}=\Comp_{a,\phi}(\set{\X\mid\epsilon\in\nu(\X)},\pht{\phi}{T\restr\L}{\nu\restr\L},\pht{\phi}{T\restr\R}{\nu\restr\R})\,.\label{eq:compos}
	\end{align}
\end{restatable}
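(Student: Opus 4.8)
The plan is to prove \cref{lem:compositionality} by induction on the structure of $\phi$, producing at each step an explicit finite lookup table; computability then comes for free, since by \cref{prop:pht-finite} every set $\Pht\psi$ occurring is finite. Throughout I abbreviate the first argument as $S=\set{\X\mid\epsilon\in\nu(\X)}$ and note that the type of $\phi$ depends only on $\nu$ restricted to the (finitely many) free variables of $\phi$, so only $S$ intersected with those variables matters and $\Comp_{a,\phi}$ may be extended to all of $\Pp(\Vv)$ arbitrarily. The recurring device is that any finite set $X$ of nodes of $T=a\symb{T_L,T_R}$ decomposes as a root-membership bit $b=[\epsilon\in X]\in\set{0,1}$ together with its restrictions $X\restr\L,X\restr\R$, that $|X|=b+|X\restr\L|+|X\restr\R|$, and that conversely every choice of $b$ and of finite $X\restr\L,X\restr\R$ arises this way.

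The base cases $a'(\X)$ and $\X\subseteq\Y$ are read off directly: the value is a conjunction of a root-level condition decided by $S$ (and, for $a'(\X)$, by whether $a=a'$) with the two conditions on the subtrees, which are exactly $\pht{\phi}{T_L}{\nu\restr\L}$ and $\pht{\phi}{T_R}{\nu\restr\R}$. The one delicate base case is $\X\child_d\Y$, and this is precisely where the auxiliary values $\mathsf{empty}$ and $\mathsf{root}$ earn their keep, since the child relation can cross the root. Casing on where the purported parent lies: if it is the root (so $\X\in S$, $\Y\notin S$) then $T\models\X\child_d\Y$ exactly when the $d$-subtree has type $\mathsf{root}$ and the other has type $\mathsf{empty}$; if it lies inside a subtree, the relation holds iff that subtree already has type $\true$ and the other has type $\mathsf{empty}$ (with $\X,\Y\notin S$); the output values $\mathsf{empty}$ and $\mathsf{root}$ of $T$ are determined by $S$ together with both subtrees having type $\mathsf{empty}$. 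The Boolean cases are immediate: $\Comp_{a,\neg\psi}=\Comp_{a,\psi}$, while for $\psi_1\land\psi_2$ the subtree types are pairs whose components I feed separately into $\Comp_{a,\psi_1}$ and $\Comp_{a,\psi_2}$. For $\phi=\efin\X.\psi$, letting $S_b$ denote $S$ updated at $\X$ according to $b$ (adding $\X$ if $b=1$, removing it otherwise), the decomposition of $X$ and the inductive hypothesis give $\pht{\psi}{T}{\nu[\X\mapsto X]}=\Comp_{a,\psi}(S_b,\pht{\psi}{T_L}{\dots},\pht{\psi}{T_R}{\dots})$, so the set of realizable $\psi$-types in $T$ is exactly the image under $\Comp_{a,\psi}$ of $\set{0,1}$ times the realizable sets in $T_L$ and $T_R$ — and these last two sets are the arguments $\pht{\phi}{T_L}{\nu\restr\L}$ and $\pht{\phi}{T_R}{\nu\restr\R}$.

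I expect the genuine difficulty to be the case $\phi=\U(\X_1,\dots,\X_k).\psi$, where realizability of a $\psi$-type must be tracked simultaneously with unboundedness of several sets. The key observation is that, since $|X_i|=b_i+|X_i\restr\L|+|X_i\restr\R|$, forcing $X_i$ to be large forces one of its two restrictions to be large. I will prove that a $\psi$-type $\sigma$ lies in the $I$-component of $\pht{\phi}{T}{\nu}$ if and only if there exist a bit vector $\vec b=(b_1,\dots,b_k)\in\set{0,1}^k$, a partition $I=I_L\sqcup I_R$, and types $\sigma_L,\sigma_R$ lying respectively in the $I_L$-component of $\pht{\phi}{T_L}{\nu\restr\L}$ and the $I_R$-component of $\pht{\phi}{T_R}{\nu\restr\R}$, such that $\sigma=\Comp_{a,\psi}(S_{\vec b},\sigma_L,\sigma_R)$ where $S_{\vec b}=(S\setminus\set{\X_1,\dots,\X_k})\cup\set{\X_i\mid b_i=1}$. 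This condition is manifestly computable, all its ranges being finite. The ``if'' direction is direct: for each $n$, I combine a left witness making the sets indexed by $I_L$ of size at least $n$ with a right witness for $I_R$; the inductive hypothesis guarantees the combined type is $\sigma$, while each $X_i$ with $i\in I$ inherits size at least $n$ from its chosen side. The ``only if'' direction carries the real work and is the main obstacle: from the infinite family of witnesses (one per $n$) I assign each $i\in I$ to a side on which it is large, and because there are only finitely many choices of $\vec b$, of the partition $(I_L,I_R)$, and of the subtree types $\sigma_L,\sigma_R$ (here finiteness of $\Pht\psi$ is essential), a pigeonhole argument isolates a single such choice used for an unbounded set of $n$; along that subsequence the left (resp.\ right) witnesses realize $\sigma_L$ (resp.\ $\sigma_R$) with the sets indexed by $I_L$ (resp.\ $I_R$) arbitrarily large, which is exactly what witnesses membership of $\sigma_L$ and $\sigma_R$ in the required subtree components. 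Making this extraction rigorous — in particular verifying that the surviving witnesses achieve the precise types $\sigma_L,\sigma_R$ for arbitrarily large sizes on the correct index sets — is the step that needs the most care.
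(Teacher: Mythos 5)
Your proposal is correct and follows essentially the same route as the paper's proof: the same structural induction, the same treatment of $\mathsf{empty}$/$\mathsf{root}$ for the child relation, and for the $\U$ case the same characterization via a root-membership bit vector, a splitting of $I$ between the two subtrees, and a pigeonhole argument over the finitely many combinations of bits, splitting, and subtree types (your partition $I=I_L\sqcup I_R$ is interchangeable with the paper's cover $I_\L\cup I_\R=I$, since shrinking an index set only enlarges the corresponding type component). The final extraction step you flag as delicate is handled in the paper exactly as you outline, by including the two subtree $\psi$-types in the pigeonholed data so that they are constant along the surviving subsequence while the sizes on the chosen sides grow without bound.
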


We remark that \textit{a priori} the first argument of $\Comp_{a,\phi}$ is an arbitrary subset of $\Vv$, but in fact we only need to know which free variables of $\phi$ it contains;
in consequence, $\Comp_{a,\phi}$ can be seen as a finite object.

\begin{proof}[Proof of \cref{lem:compositionality}]
	We proceed by induction on the size of $\phi$.

	When $\phi$ is of the form $b(\X)$ or $\X\subseteq\Y$, then we see that $\phi$ holds in $T,\nu$ if, and only if,
	it holds in the subtrees $T\restr\L,\nu\restr\L$ and $T\restr\R,\nu\restr\R$, and in the root of $T$.
	Thus for $\phi=b(\X)$ as $\Comp_{a,\phi}(S,\tau_\L,\tau_\R)$ we take $\true$ when $\tau_\L=\tau_\R=\true$ and either $a=b$ or $\X\not\in S$.
	For $\phi=(\X\subseteq \Y)$ the last part of the condition is replaced by ``if $\X\in S$ then $\Y\in S$''.

	Next, suppose that $\phi=(\X\child_d\Y)$.
	Then as $\Comp_{a,\phi}(S,\tau_\L,\tau_\R)$ we take
	\begin{itemize}
	\item	$\true$ if $\X\not\in S$, $\Y\not\in S$, and either $\tau_\L=\true$ and $\tau_\R=\mathsf{empty}$ or $\tau_\L=\mathsf{empty}$ and $\tau_\R=\true$,
	\item	$\true$ also if $\X\in S$, $\Y\not\in S$, $\tau_d=\mathsf{root}$, and $\tau_i=\mathsf{empty}$ for the direction $i$ other than $d$,
	\item	$\mathsf{empty}$ if $\X\not\in S$, $\Y\not\in S$, and $\tau_L=\tau_S=\mathsf{empty}$,
	\item	$\mathsf{root}$ if $\X\not\in S$, $\Y\in S$, and $\tau_L=\tau_S=\mathsf{empty}$, and
	\item	$\false$ otherwise.
	\end{itemize}
	By comparing this definition with the definition of the type we immediately see that \cref{eq:compos} is satisfied.

	When $\phi=(\neg\psi)$, we simply take $\Comp_{a,\phi}=\Comp_{a,\psi}$,
	and when $\phi=(\psi_1\land\psi_2)$, as $\Comp_{a,\phi}(S,(\tau_\L^1,\tau_\L^2),(\tau^1_\R,\tau^2_\R))$ we take the pair of $\Comp_{a,\psi_i}(S,\tau_\L^i,\tau^i_\R)$ for $i\in\set{1,2}$.

	Suppose now that $\phi=\efin\X.\psi$.
	We define $\Comp_{a,\phi}(S,\tau_\L,\tau_\R)$ to be
	\begin{align*}
		\set{\Comp_{a,\psi}(S',\sigma_\L,\sigma_\R)\mid S\setminus\set{\X}\subseteq S'\subseteq S\cup\set{\X},\sigma_\L\in\tau_\L,\sigma_\R\in\tau_\R}\,.
	\end{align*}
	Let us check \cref{eq:compos} in details.
	Denote $S=\set{\Y\mid\epsilon\in\nu(\Y)}$.
	In order to show the left-to-right inclusion recall that, by definition,
	$\pht{\phi}{T}{\nu}$ is a set of $\psi$-types, whose every element is of the form $\pht{\psi}{T}{\nu[\X\mapsto X]}$ for some finite set of nodes $X$.
	For every such $X$ by the induction hypothesis we have
	$\pht{\psi}{T}{\nu[\X\mapsto X]}=\Comp_{a,\psi}(S',\pht{\phi}{T\restr\L}{\nu[\X\mapsto X]\restr\L},\pht{\phi}{T\restr\R}{\nu[\X\mapsto X]\restr\R})$,
	where $S'=S\cup\set{\X}$ if $\epsilon\in X$ and $S'=S\setminus\set{\X}$ if $\epsilon\not\in X$;
	moreover $\pht{\psi}{T\restr\L}{\nu[\X\mapsto X]\restr\L}\in\pht{\phi}{T\restr L}{\nu\restr\L}$
	and $\pht{\psi}{T\restr\R}{\nu[\X\mapsto X]\restr\R}\in\pht{\phi}{T\restr L}{\nu\restr\R}$,
	which implies that $\pht{\psi}{T}{\nu[\X\mapsto X]}\in\Comp_{a,\phi}(S,\pht{\phi}{T\restr\L}{\nu\restr\L},\pht{\phi}{T\restr\R}{\nu\restr\R})$, as required.
	For the opposite inclusion take some $\sigma\in\Comp_{a,\phi}(S,\pht{\phi}{T\restr\L}{\nu\restr\L},\pht{\phi}{T\restr\R}{\nu\restr\R})$;
	it is of the form $\Comp_{a,\psi}(S',\sigma_\L,\sigma_\R)$ for some $\sigma_\L\in\pht{\phi}{T\restr\L}{\nu\restr\L}$ and $\sigma_\R\in\pht{\phi}{T\restr\R}{\nu\restr\R}$,
	where $S'$ is either $S\cup\set{\X}$ or $S\setminus\set{\X}$.
	Then, by definition, $\sigma_\L$ and $\sigma_\R$ are of the form $\pht{\psi}{T\restr\L}{(\nu\restr\L)[\X\mapsto X_\L]}$
	and $\pht{\psi}{T\restr\R}{(\nu\restr\R)[\X\mapsto X_\R]}$, respectively, for some finite sets of nodes $X_\L$ and $X_\R$.
	We now take $X$ such that $X\restr\L=X_\L$ and $X\restr\R=X_\R$, and $\epsilon\in X$ if, and only if, $S'=S\cup\set{\X}$;
	we have $(\nu\restr\L)[\X\mapsto X_\L]=\nu[\X\mapsto X]\restr\L$ and $(\nu\restr\R)[\X\mapsto X_\R]=\nu[\X\mapsto X]\restr\R$.
	By the induction hypothesis we then have $\sigma=\pht{\psi}{T}{\nu[\X\mapsto X]}$, which by definition is an element of $\pht{\phi}{T}{\nu}$, as required.

	Finally, suppose that $\phi=\U(\X_1,\dots,\X_k).\psi$.
	For $\tau_\L=(\rho_{\L,I})_{I\subseteq\interval{1,k}}$ and $\tau_\R=(\rho_{\R,I})_{I\subseteq\interval{1,k}}$
	we define $\Comp_{a,\phi}(S,\tau_\L,\tau_\R)$ to be $(\rho_I)_{I\subseteq\interval{1,k}}$, where
	\begin{align*}
		\rho_I=\set{\Comp_{a,\psi}(S',\sigma_\L,\sigma_\R)\mid{} &S\setminus\set{\X_1,\dots,\X_k}\subseteq S'\subseteq S\cup\set{\X_1,\dots,\X_k},\\
			&\sigma_\L\in\rho_{\L,I_\L},\sigma_\R\in\rho_{\R,I_\R},I_\L\cup I_\R=I}\,.
	\end{align*}
	In order to check \cref{eq:compos}, denote $\pht{\phi}{T}{\nu}=(\rho_I')_{I\subseteq\interval{1,k}}$,
	$\pht{\phi}{T\restr\L}{\nu\restr\L}=(\rho_{\L,I})_{I\subseteq\interval{1,k}}$,
	$\pht{\phi}{T\restr\R}{\nu\restr\R}=(\rho_{\R,I})_{I\subseteq\interval{1,k}}$, and $S=\set{\Y\mid\epsilon\in\nu(\Y)}$;
	we then have to prove that $\rho'_I=\rho_I$ for all $I\subseteq\interval{1,k}$ (where $\rho_I$ is as defined above).

	For the left-to-right inclusion, take some $\sigma\in\rho'_I$.
	By definition, it is a $\psi$-type such that for every $n\in\Nat$ there exist finite sets $X_{n,1},\dots,X_{n,k}$ for which
	$\pht{\psi}{T}{\nu[\X_1\mapsto X_{n,1},\dots,\X_k\mapsto X_{n,k}]}=\sigma$, where the cardinality of the sets $X_{n,i}$ with $i\in I$ is at least $n$.
	To every $n$ let us assign the following information, called \emph{characteristic}, and consisting of $2k$ bits and $2$ $\psi$-types:
	\begin{itemize}
	\item	for every $i\in\interval{1,k}$, does the root $\epsilon$ belong to $X_{n,i}$?
	\item	for every $i\in\interval{1,k}$, is $X_{n,i}\restr\L$ larger than $X_{n,i}\restr\R$?
	\item	the $\psi$-types $\pht{\psi}{T\restr\L}{\nu[\X_1\mapsto X_{n,1},\dots,\X_k\mapsto X_{n,k}]\restr\L}$ and
		$\pht{\psi}{T\restr\R}{\nu[\X_1\mapsto X_{n,1},\dots,\X_k\mapsto X_{n,k}]\restr\R}$.
	\end{itemize}
	Because there are only finitely many possible characteristics,
	by the pigeonhole principle we may find an infinite set $G\subseteq\Nat$ of indices $n$ such that the same characteristic is assigned to every $n\in G$.
	We then take
	\begin{align*}
		&S'=S\setminus\set{\X_1,\dots,\X_k}\cup\set{\X_i\mid\epsilon\in X_{n,i}\mbox{ for }n\in G}\,,\\
		&I_\L=\set{i\in I\mid |X_{n,i}\restr\L|>|X_{n,i}\restr\R|\mbox{ for }n\in G}\,,\qquad
			I_\R=I\setminus I_\L\,,\\
		&\sigma_\L=\pht{\psi}{T\restr\L}{\nu[\X_1\mapsto X_{n,1},\dots,\X_k\mapsto X_{n,k}]\restr\L}\,,\qquad\!
		\sigma_\R=\pht{\psi}{T\restr\R}{\nu[\X_1\mapsto X_{n,1},\dots,\X_k\mapsto X_{n,k}]\restr\R}\qquad\!\mbox{for $n\in G$.}
	\end{align*}
	The induction hypothesis (used with the valuation $\nu[\X_1\mapsto X_{n,1},\dots,\X_k\mapsto X_{n,k}]$ for any $n\in G$)
	gives us $\sigma=\Comp_{a,\psi}(S',\sigma_\L,\sigma_\R)$.
	For every $m\in\Nat$ we can find $n\in G$ such that $n\geq 2m+1$;
	then $\pht{\psi}{T\restr\L}{\nu[\X_1\mapsto X_{n,1},\dots,\X_k\mapsto X_{n,k}]\restr\L}=\sigma_\L$ and $|X_{n,i}\restr\L|\geq m$ for all $i\in I_\L$
	($X_{n,i}$ has at least $2m+1$ elements because $i\in I$, one of them may be the root, and at least half of the other elements is in the left subtree by definition of $I_\L$).
	This implies that $\sigma_\L\in\rho_{\L,I}$, by definition of the $\phi$-type.
	Likewise $\sigma_\R\in\rho_{\R,I}$.
	By definition of $\rho_I$ this gives us $\sigma\in\rho_I$ as required.

	The right-to-left inclusion is completely straightforward.
	Indeed, take some $\sigma\in\rho_I$.
	The definition of $\rho_I$ gives us a set $S'$ such that $S\setminus\set{\X_1,\dots,\X_k}\subseteq S'\subseteq S\cup\set{\X_1,\dots,\X_k}$,
	sets $I_\L,I_\R\subseteq\interval{1,k}$ such that $I_\L\cup I_\R=I$, and types $\sigma_\L\in\rho_{\L,I_\L}$ and $\sigma_\R\in\rho_{\R,I_\R}$
	such that $\sigma=\Comp_{a,\psi}(S',\sigma_\L,\sigma_\R)$.
	By definition of the two $\psi$-types, $\sigma_\L$ and $\sigma_\R$, for every $n$ there are sets $X_{\L,1},\dots,X_{\L,k}$ and $X_{\R,1},\dots,X_{\R,k}$ such that
	$\pht{\psi}{T\restr\L}{(\nu\restr\L)[\X_1\mapsto X_{\L,1},\dots,\X_k\mapsto X_{\L,k}]}=\sigma_\L$,
	and $\pht{\psi}{T\restr\R}{(\nu\restr\R)[\X_1\mapsto X_{\R,1},\dots,\X_k\mapsto X_{\R,k}]}=\sigma_\R$,
	and $|X_{\L,i}|\geq n$ for all $i\in I_\L$, and $|X_{\R,i}|\geq n$ for all $i\in I_\R$.
	We now take $X_1,\dots,X_k$ such that $X_i\restr\L=X_{\L,i}$, and $X_i\restr\R=X_{\R,i}$, and $\epsilon\in X_i$ if, and only if, $\X_i\in S'$, for all $i\in\interval{1,k}$.
	By the induction hypothesis we then have $\pht{\psi}{T}{\nu[\X_1\mapsto X_1,\dots,\X_k\mapsto X_k]}=\Comp_{a,\psi}(S',\sigma_\L,\sigma_\R)=\sigma$.
	Because additionally $|X_i|\geq n$ for all $i\in I=I_\L\cup I_\R$, we obtain that $\sigma\in\rho'_I$, as required.
\end{proof}

The next \lcnamecref{lem:type-empty} says that one can find a type of the empty tree.
In the empty tree, a valuation has to map every variable to the empty set; we denote such a valuation by $\varnothing$.
This \lcnamecref{lem:type-empty} is trivial: we simply follow the definition of $\pht{\phi}{\bot}{\varnothing}$.

\begin{fact}\label{lem:type-empty}
	For ever formula $\phi$, one can compute $\pht{\phi}{\bot}{\varnothing}$.
\end{fact}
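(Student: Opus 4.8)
The plan is to proceed by induction on the structure of $\phi$, reading off $\pht{\phi}{\bot}{\varnothing}$ directly from the definition of the type in each case. The single observation that drives every case is that the empty tree $\bot$ has no nodes, so the only finite set of nodes is $\emptyset$; in particular the only valuation on $\bot$ is $\varnothing$ itself, and every update $\varnothing[\X\mapsto X]$ is forced to take $X=\emptyset$ and hence equals $\varnothing$.

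For the atomic formulae nothing needs to be computed beyond evaluating the semantics on empty sets: $a(\X)$ and $\X\subseteq\Y$ hold vacuously, so their type is $\true$, whereas $\X\child_d\Y$ lands in the $\mathsf{empty}$ case because $\varnothing(\X)=\varnothing(\Y)=\emptyset$. The Boolean cases follow at once from the induction hypothesis: for $\phi=\psi_1\land\psi_2$ I would output the pair of the two recursively computed types, and for $\phi=\neg\psi$ I would output the recursively computed type unchanged.

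The two quantifier cases are the only ones that require a moment's thought, and even there the absence of nodes collapses everything. For $\phi=\efin\X.\psi$ the bound variable ranges only over $\emptyset$, so $\pht{\phi}{\bot}{\varnothing}=\set{\pht{\psi}{\bot}{\varnothing}}$, a singleton whose element is supplied by the induction hypothesis. For $\phi=\U(\X_1,\dots,\X_k).\psi$ the sets $X_1,\dots,X_k$ are again forced to be empty: for the index $I=\emptyset$ the cardinality constraints are vacuous, so the corresponding component equals $\set{\pht{\psi}{\bot}{\varnothing}}$, while for every nonempty $I$ no choice can meet $|X_i|\geq n$ once $n\geq 1$, so the corresponding component is $\emptyset$. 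This determines the whole $2^k$-tuple explicitly and closes the induction. There is no genuine obstacle here; the recursion simply bottoms out on a structure with no nodes, and each step is a finite, effective manipulation of the data returned by the induction hypothesis, so $\pht{\phi}{\bot}{\varnothing}$ is computable.
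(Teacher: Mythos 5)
Your proof is correct and is exactly the paper's argument: the paper dismisses this fact as trivial by ``simply following the definition of $\pht{\phi}{\bot}{\varnothing}$'', and your case analysis (including the $\mathsf{empty}$ outcome for $\X\child_d\Y$ and the empty components for nonempty $I$ in the $\U$ case) is the explicit unfolding of that definition.
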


\section{Decidability of model-checking}

In this section we show how to evaluate \wmsouk formulae over trees generated by recursion schemes, that is, we prove \cref{thm:decidability}.
To this end, we first introduce three kinds of operations on recursion schemes, known to be computable.
Then, we show how a sequence of these operations can be used to evaluate our formulae.

\subparagraph{\mso reflection.}

The property of logical reflection for recursion schemes comes from Broadbent, Carayol, Ong, and Serre~\cite{reflection-selection}.
They state it for sentences of $\mu$-calculus,
but $\mu$-calculus and \mso are equivalent over infinite trees~\cite{EmersonJutla}.

Consider a tree $T$, and an \mso sentence $\phi$ (we skip a formal definition of \mso, assuming that it is standard).
We define $\refl_\phi(T)$ to be the tree having the same domain as $T$, and such that every node $u$ thereof is labeled by the pair $(a_u,b_u)$,
where $a_u$ is the label of $u$ in $T$, and $b_u$ is $\true$ if $\phi$ is satisfied in $T\restr u$ and $\false$ otherwise.
In other words, $\refl_\phi(T)$ adds, in every node of $T$, a mark saying whether $\phi$ holds in the subtree starting in that node.
Consult \cref{fig:mso-refl} for an example.
\begin{figure}
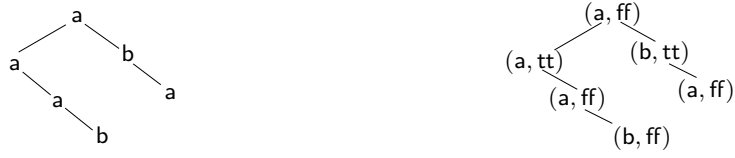

	\centering
	\begin{minipage}[c]{0.49\textwidth}
		\centering
		\def\svgscale{0.5}\import{pics/}{mso-refl-1.pdf_tex_ok}\vspace{-1ex}
	\end{minipage}
	\begin{minipage}[c]{0.49\textwidth}
		\centering
		\def\svgscale{0.5}\import{pics/}{mso-refl-2.pdf_tex_ok}\vspace{-1ex}
	\end{minipage}
	\caption{An example tree $T$ (left), and the corresponding tree $\refl_\phi(T)$ (right) obtained for an MSO sentence $\phi$ saying ``the right child of the root has label $\a$''}
	\label{fig:mso-refl}
\end{figure}

\begin{theorem}[\mso reflection {\cite[Theorem 7.3(2)]{reflection-selection}}]\label{fact:reflection}
	Given a recursion scheme $\Gg$ generating a tree $T$, and an \mso sentence $\phi$, one can construct a recursion scheme $\Gg_\phi$
	generating the tree $\refl_\phi(T)$.
\end{theorem}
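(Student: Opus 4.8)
The plan is to go through tree automata. First I would invoke Rabin's theorem to convert the \mso sentence $\phi$ into an equivalent alternating parity tree automaton $\mathcal{A}$ over the alphabet $\AlA$, so that for every tree $S$ we have $S\models\phi$ if, and only if, $\mathcal{A}$ accepts $S$ from its initial state. Since $\refl_\phi$ records, at every node $u$, whether $\phi$ holds in the subtree $T\restr u$, the task reduces to annotating each node $u$ with a single bit saying whether $\mathcal{A}$ accepts $T\restr u$ from its initial state. More generally I would compute, at every node, the whole set of automaton states from which the subtree starting there is accepted; the desired bit is then obtained by testing membership of the initial state.

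Second, I would exploit an intersection-type characterisation of acceptance in the style of Kobayashi and Ong. The idea is to equip the simple types of the scheme with refinements built from the states and priorities of $\mathcal{A}$, so that a ground-type closed term $t$ admits the refinement $q$ exactly when $\mathcal{A}$ accepts the tree generated by $t$ from state $q$. The key finiteness is that, because $\mathcal{A}$ has finitely many states and priorities, each simple type carries only finitely many refinements; hence the set of valid typings of every nonterminal lives in a finite lattice and can be computed. The parity condition is woven into the types through priority annotations, and the computation of the valid typings is the solution of a finite parity game over this lattice (equivalently, a nested fixpoint), matching the extremal priority seen infinitely often along each branch of a run.

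Third, having computed the type information, I would build the reflected scheme $\Gg_\phi$ by a product/annotation construction. The rules of $\Gg_\phi$ mirror the reductions of $\Gg$, but carry along, in the nonterminals and in the emitted terminals, the refinement data; whenever a reduction of $\Gg$ produces a terminal $a$ at some node $u$, the corresponding reduction of $\Gg_\phi$ emits the annotated terminal $(a,b_u)$, where $b_u=\true$ precisely when the set of states accepting $T\restr u$ contains the initial state of $\mathcal{A}$. Because the refinement attached to each subterm determines this set locally at the moment the node is emitted, the annotation is produced correctly during the rewriting, and $\Gg_\phi$ generates $\refl_\phi(T)$.

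The hard part is the correct treatment of the parity acceptance condition, as opposed to a mere reachability or safety condition. A run of $\mathcal{A}$ on $T$ is in general infinite, and acceptance along each branch depends on the extremal priority occurring infinitely often; this is exactly why the refinements must record priorities and why the computation of valid typings is a nested, alternating fixpoint rather than a single least or greatest one. Proving soundness and completeness of the type system against the acceptance game — that a term is typable with $q$ if, and only if, $\mathcal{A}$ accepts from $q$ — is the technical core, and would rely on an analysis of the (collapsible pushdown) evaluation of the scheme together with the game semantics of the automaton.
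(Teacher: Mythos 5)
This statement is not proved in the paper at all: it is imported verbatim from Broadbent, Carayol, Ong, and Serre (the cited Theorem 7.3(2)), modulo the standard translation between $\mu$-calculus and \mso over trees. So there is no in-paper argument to compare yours against; what can be assessed is whether your outline is a viable reconstruction of the external result. It is, but note that it follows a genuinely different route from the cited proof. The original argument goes through the equi-expressivity of recursion schemes with collapsible pushdown automata and the fact that winning regions of parity games over collapsible pushdown graphs are ``regular'' in the stack, which lets one annotate each emitted node with the answer; your outline instead follows the type-theoretic road (Kobayashi--Ong intersection types refined by automaton states and priorities, then a type-directed product construction), which is the alternative proof known from later work. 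Both are legitimate; the CPDA route localises the difficulty in the analysis of parity games on configuration graphs, while the type route localises it in the soundness and completeness of the priority-annotated type system.

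As a standalone proof, however, your text is a roadmap rather than an argument: the entire technical core --- that a closed ground-type term is typable with $q$ if, and only if, the automaton accepts the generated tree from $q$, with the parity condition handled by a nested alternating fixpoint over the finite lattice of refinements --- is stated as ``the hard part'' and deferred. That equivalence is precisely the multi-page content of the works you would be invoking, and the reflection construction additionally needs a non-obvious step you only gesture at: justifying that the globally computed typings determine, \emph{for every node $u$ of the value tree}, the exact set of states from which $T\restr u$ is accepted at the moment $u$ is emitted (one direction of the type equivalence applied uniformly to all subtrees along the outermost-fair reduction). Since the paper treats the whole theorem as a black box, the honest conclusion is that your proposal correctly identifies a known proof strategy but does not itself constitute a proof; citing the result, as the paper does, or citing the type-based reflection literature for your variant, is the appropriate resolution.
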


\subparagraph{SUP reflection.}

The SUP reflection is the heart of our proof (where ``SUP'' stands for simultaneous unboundedness property).
In order to talk about this property, we need a few more definitions.
By $\#_a(V)$ we denote the number of $a$-labeled nodes in a (finite) tree $V$.
For a set of (finite) trees $\Ll$ and a set of letters $A$, we define a predicate $\SUP_A(\Ll)$,
which holds if for every $n\in\Nat$ there is some $V_n\in \Ll$ such that for all $a\in A$ it holds that $\#_a(V_n)\geq n$.

Originally, in the simultaneous unboundedness property we consider devices recognizing a set of finite trees, unlike recursion schemes, which generate a single infinite tree.
We use here an equivalent formulation, in which the set of finite trees is encoded in a single infinite tree.
To this end, we use two special letters: $\nd$, denoting a nondeterministic choice (disjunction between two children),
and $\nd_\bot$, denoting that there is no choice (empty disjunction).
We write $T\rednd V$ if $V$ is obtained from $T$ by choosing some $\nd$-labeled node $u$ and some its child $v$, and attaching $T\restr v$ in place of $T\restr u$.
In other words, $\rednd$ is the smallest relation such that $\nd\symb{T_\L,T_\R}\rednd T_d$ for $d\in\set{\L,\R}$,
and $a\symb{T_\L,T_\R}\rednd a\symb{T_\L',T_\R}$ if $T_\L\rednd T_\L'$,
and $a\symb{T_\L,T_\R}\rednd a\symb{T_\L,T_\R'}$ if $T_\R\rednd T_\R'$.
For a tree $T$, $\Ll(T)$ is the set of all finite trees $V$ such that $\#_\nd(V)=\#_{\nd_\bot}(V)=0$ and $T\rednd^* V$.
See \cref{fig:example-nd} for an example.
\begin{figure}
	\centering
	\begin{minipage}[c]{0.49\textwidth}
		\centering
		\def\svgscale{0.5}\import{pics/}{nd-1.pdf_tex_ok}\vspace{-1ex}
	\end{minipage}
	\begin{minipage}[c]{0.24\textwidth}
		\centering
		\def\svgscale{0.5}\import{pics/}{nd-2.pdf_tex_ok}\vspace{-1ex}\\[2ex]
		\def\svgscale{0.5}\import{pics/}{nd-4.pdf_tex_ok}\vspace{-1ex}
	\end{minipage}
	\begin{minipage}[c]{0.24\textwidth}
		\def\svgscale{0.5}\import{pics/}{nd-3.pdf_tex_ok}\vspace{-1ex}\\[2ex]
		\def\svgscale{0.5}\import{pics/}{nd-5.pdf_tex_ok}\vspace{-1ex}
	\end{minipage}
	\caption{An example tree $T$, and four trees in $\Ll(T)$ (right).
		Additionally, $\Ll(T)$ contains $\bot$, the tree with empty domain, obtained by choosing the right child in the topmost $\nd$-labeled node.
		Note that no tree in $\Ll(T)$ contains $\d$, because $\d$ in $T$ is followed by $\nd_\bot$, which is forbidden in trees in $\Ll(T)$.}
	\label{fig:example-nd}
\end{figure}
We then say that $T$ satisfies the \emph{simultaneous unboundedness property with respect to} a set of letters $A$ if $\SUP_A(\Ll(T))$ holds,
that is, if for every $n\in\Nat$ there are trees in $\Ll(T)$ having at least $n$ occurrences of every letter from $A$.

Let $T$ be a tree over an alphabet $\AlA$.
We define $\refl_\SUP(T)$ to be the tree having the same domain as $T$,
and such that every node $u$ thereof, having in $T$ label $a_u$, is labeled by
\begin{itemize}
\item	the pair $(a_u,\{A\subseteq\AlA\mid \SUP_A(\Ll(T\restr u))\})$, if $a_u\not\in\set{\nd,\nd_\bot}$, and
\item	the original letter $a_u$, if $a_u\in\set{\nd,\nd_\bot}$.
\end{itemize}
In other words, $\refl_\SUP(T)$ adds, in every node $u$ of $T$ (except for $\nd$- and $\nd_\bot$-labeled nodes) and for every set $A$ of letters,
a mark saying whether $T\restr u$ has the simultaneous unboundedness property with respect to $A$.

Consider, for example, the tree $T$ from \cref{fig:example-sup}.
The tree $\refl_\SUP(T)$ has the same shape as $T$.
Every node $u$ having label $\a$ in $T$ gets label $(\a,\set{\emptyset,\set{\a},\set{\b},\set{\c},\set{\a,\b},\set{\a,\c}})$.
Note that the set does not contain $\set{\b,\c}$ nor $\set{\a,\b,\c}$: in $\Ll(T\restr u)$ there are no trees having simultaneously many occurrences of $\b$ and many occurrences of $\c$.
Nodes $u$ having in $T$ label $\b$ or $\c$ are simply relabeled to $(\b,\set{\emptyset})$ or $(\c,\set{\emptyset})$, respectively,
because $\Ll(T\restr u)$ contains only a single tree, with a fixed number of nodes.
\begin{figure}
	\centering
	\def\svgscale{0.5}\import{pics/}{sup-1.pdf_tex_ok}\vspace{-1ex}
	\caption{A tree $T$ illustrating SUP reflection}
	\label{fig:example-sup}
\end{figure}

\begin{theorem}[SUP reflection {\cite[Theorem 10.1]{types-diagonal-journal}}]\label{fact:sup-reflection}
	Given a recursion scheme $\Gg$ generating a tree $T$, one can construct a recursion scheme $\Gg_\SUP$ generating the tree $\refl_\SUP(T)$.
\end{theorem}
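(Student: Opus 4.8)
The plan is to derive SUP reflection in the same way MSO reflection (\cref{fact:reflection}) is obtained, namely by attaching to every node a piece of finite information that is computed by a type system and then baking this computation into the scheme. Since the paper treats schemes only through cited facts, I would first pass to the equivalent simply-typed $\lambda Y$-calculus presentation, so that the node $u$ of $T$ and its subtree $T\restr u$ correspond to a well-identified ground-type subterm $M_u$ produced along the reduction of $\Gg$, with $\Ll(T\restr u)$ being the set of finite trees reachable from the tree generated by $M_u$ via $\rednd^*$. The central object is then a finite \emph{intersection type system} in which a type assigned to a ground-type term $M$ records, for each $A\subseteq\AlA$, whether $\SUP_A$ holds of the trees generated by $M$; at higher types the system tracks how a function transforms this quantitative information about its arguments. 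The key structural property I would establish is that, for each simple type, only finitely many such types are derivable, so the whole abstraction is finite.

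The heart of the argument is the correctness of this type system, split into soundness and completeness. Soundness---that a derivable type genuinely witnesses the asserted SUP behaviour---is the routine direction, by induction on derivations together with the reduction semantics. Completeness is the delicate direction, and it is here that the quantitative flavour of the diagonal problem enters: to certify $\SUP_A(\Ll(T\restr u))$ one must exhibit, for every $n$, a tree in $\Ll(T\restr u)$ carrying at least $n$ occurrences of \emph{every} $a\in A$ simultaneously, which forces the derivation to iterate a single productive fragment that pumps all counters in $A$ at once. I would encode this by marking, for each $a\in A$, whether a subderivation is guaranteed to contribute arbitrarily many $a$-labelled nodes, and then extract such a pumpable derivation from the (infinite) generation of $T\restr u$ by a pigeonhole/Ramsey argument over the finitely many type annotations occurring along it. The special letters are handled at ground type: a derivation resolves each $\nd$ by selecting one child and each $\nd_\bot$ by the empty tree, matching the definition of $\Ll(\cdot)$.

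Once the type system is shown to compute, at every node $u$, exactly the set $\set{A\subseteq\AlA\mid\SUP_A(\Ll(T\restr u))}$, the reflection step mirrors \cref{fact:reflection}: I would annotate the scheme so that each nonterminal additionally carries the (finitely many) types under which it is invoked, and so that whenever a rule emits a letter $a\notin\set{\nd,\nd_\bot}$ the label is replaced by the pair of $a$ and the computed set of admissible $A$, while $\nd$- and $\nd_\bot$-labelled outputs are left unchanged. Finiteness of the type set keeps this annotation a finite blow-up of $\Gg$, and the resulting scheme $\Gg_\SUP$ generates precisely $\refl_\SUP(T)$.

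The step I expect to be the main obstacle is the design and the completeness proof of the type system. Capturing \emph{simultaneous} unboundedness of all letters of $A$---rather than unboundedness of each letter separately---requires a single infinitary iteration that increases every counter in $A$ together, and proving that such an iteration can always be found whenever $\SUP_A$ holds is the technically demanding combinatorial core; everything else (finiteness, soundness, and the final annotation) is comparatively mechanical.
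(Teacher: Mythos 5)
The first thing to note is that the paper does not prove \cref{fact:sup-reflection} at all: it is imported as a black box from prior work (Theorem~10.1 of the cited journal paper on types and the diagonal problem), exactly as \cref{fact:reflection} is imported from Broadbent et al. So there is no in-paper argument to compare yours against; the only honest comparison is with the cited external proof. Your outline does track the broad strategy of that work --- a finite intersection type system whose ground types record, for each $A\subseteq\AlA$, whether $\SUP_A$ holds of the generated language, with soundness by induction on derivations and completeness via a pumping/pigeonhole extraction, followed by a reflection step that bakes the finitely many types into the nonterminals. In that sense the proposal is pointed in the right direction.

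However, as a proof it has a genuine gap, and you have named it yourself: the completeness direction, i.e., showing that whenever $\SUP_A(\Ll(T\restr u))$ holds one can find a single derivation fragment whose iteration increases the count of \emph{every} letter of $A$ simultaneously, is precisely the hard content of the cited result, and your text only asserts that ``a pigeonhole/Ramsey argument over the finitely many type annotations'' will produce it. This is not routine: naive pumping gives you, for each $a\in A$ separately, a fragment producing many $a$'s, and combining these into one simultaneously productive iteration is where the known proofs need substantial extra machinery (carefully designed flags/markers on derivations, a nontrivial measure on which the induction runs, and a correctness argument that occupies most of the cited paper). A secondary point worth flagging: the reflection step is also not quite ``mechanical,'' because the set $\set{A\mid\SUP_A(\Ll(T\restr u))}$ depends on the whole subtree below $u$, so the annotation cannot be computed top-down along the production of the tree; one needs the type information to be propagated in a way compatible with the scheme's higher-order structure (this is what the intersection types at higher orders are for, and it is where the finite-blow-up claim actually has to be argued). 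So: right roadmap, but the core lemma is assumed rather than proved, and the paper itself deliberately avoids reproving it by citing the external theorem.
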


\begin{remark}\label{remark1}
	In the introduction we have described an easier variant of the simultaneous unboundedness property,
	called a \emph{word variant}.
	In this variant, every node with label other than $\nd$ has at most one child;
	then choosing a tree in $\Ll(T)$ corresponds to choosing a branch of $T$
	(and trees in $\Ll(T)$ consist of single branches, hence they can be seen as words).
	Although the word variant of SUP is more commonly known than the \emph{tree variant} described in this section,
	\cref{fact:sup-reflection} holds also for the more general tree variant, as presented above.
\end{remark}

\subparagraph{Transducers.}

A \emph{(deterministic, top-down) finite tree transducer} is a tuple $\Ff=(\AlA,\allowbreak\AlB,\allowbreak Q,\allowbreak q_0,\allowbreak\delta)$, where
$\AlA$ is a finite input alphabet, $\AlB$ is a finite output alphabet, $Q$ is a finite set of states, $q_0\in Q$ is an initial state,
and $\delta$ is a transition function mapping $Q\times(\AlA\cup\set{\bot})$ to finite trees over the alphabet $\AlB\cup(Q\times\set{\L,\R})$.
Letters from $Q\times\set{\L,\R}$ are allowed to occur only in leaves of trees $\delta(q,a)$ with $a\in\AlA$
(internal nodes of these trees, and all nodes of trees $\delta(q,\bot)$ are labeled by letters from $\AlB$).
Moreover, it is assumed that that there is no sequence $(q_1,a_1,d_1),(q_2,a_2,d_2),\dots,(q_n,a_n,d_n)$
such that $\delta(q_i,a_i)=(q_{(i\bmod n)+1},d_i)\symb{\bot,\bot}$ for all $i\in\interval{1,n}$.

For an input tree $T$ over $\AlA$ and a state $q\in Q$, we define an output tree $\Ff_q(T)$ over $\AlB$.
Namely $\Ff_q(a\symb{T_\L,T_\R})$ is the tree obtained from $\delta(q,a)$ by substituting $\Ff_r(T_d)$ for every leaf labeled with $(r,d)\in Q\times\set{\L,\R}$;
additionally, $\Ff_q(\bot)$ simply equals $\delta(q,\bot)$ (recall that this tree has no labels from $Q\times\set{\L,\R}$).
In other words, while being in state $q$ over an $a$-labeled node of the input tree, the transducer produces a tree prefix specified by $\delta(q,a)$,
where instead of outputting an $(r,\L)$-labeled (or $(r,\R)$-labeled) leaf, it rather continues by going to the left (respectively, right) child in the input tree, in state $r$;
when $\Ff$ leaves the domain of the input tree, it still has a chance to output something, namely $\delta(q,\bot)$, and then it stops.
In the root we start from the initial state, that is, we define $\Ff(T)=\Ff_{q_0}(T)$.
To make the above definition formal, we can define $\Ff_q(T)(v)$, the label of $\Ff_q(T)$ in a node $v\in\set{\L,\R}^k$, by induction on the depth $k$,
simultaneously for all input trees $T$ and states $q\in Q$.
Transitions $\delta(q,a)$ with $(r,d)$ immediately in the root are a bit problematic,
because we go down along the input tree without producing anything in the output tree;
we have assumed, however, that such transitions do not form a cycle, so after a few (at most $|Q|$) steps we necessarily advance in the output tree.

Note that transducers need not be linear.
For example, we may have $\delta(q,a)=a\symb{a\symb{(q,\L),\allowbreak(q,\L)},\allowbreak a\symb{(q,\R),(q,\R)}}$,
which creates two copies of the tree produced out of the left subtree, and two copies of the tree produced out of the right subtree.

We have the following \lcnamecref{fact:transducer}:

\begin{theorem}\label{fact:transducer}
	Given a finite tree transducer $\Ff=(\AlA,\AlB,Q,q_0,\delta)$ and a recursion scheme $\Gg$ generating a tree $T$ over the alphabet $\AlA$,
	one can construct a recursion scheme $\Gg_\Ff$ generating the tree $\Ff(T)$.
\end{theorem}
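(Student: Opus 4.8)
The plan is to prove \cref{fact:transducer} by realizing the transducer $\Ff$ as a type-preserving syntactic transformation of the $\lambda Y$-term that defines $T$, using the equivalence between recursion schemes and the simply-typed $\lambda Y$-calculus~\cite{schemes-lY}. Under this equivalence, $\Gg$ corresponds to a closed term $M$ of the ground tree type $o$, built from application, $\lambda$-abstraction, the fixpoint combinators $Y_\tau\colon(\tau\to\tau)\to\tau$, the constant $\bot\colon o$, and one binary constructor $a\colon o\to o\to o$ for every letter $a\in\AlA$; the tree $T$ is the limit of the (productive) head reduction of $M$. It suffices to produce, by the same equivalence, a closed ground-type $\lambda Y$-term generating $\Ff(T)=\Ff_{q_0}(T)$, and then read off $\Gg_\Ff$.

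First I would set up a \emph{state-indexed translation}. Writing $Q=\set{q_1,\dots,q_m}$, the intention is that a ground term denoting a finite tree $V$ is replaced by the $m$-tuple $(\Ff_{q_1}(V),\dots,\Ff_{q_m}(V))$ of its transducts. Accordingly I define $o^\sharp$ to be the type of $m$-tuples of output trees (encoded in the calculus either by adding finite product types, which are eliminable, or by the usual Church encoding over the base type, so that $\Gg_\Ff$ stays a genuine scheme), and $(\sigma\to\tau)^\sharp=\sigma^\sharp\to\tau^\sharp$. On terms the translation is homomorphic --- $(\lambda x^\sigma.N)^\sharp=\lambda x^{\sigma^\sharp}.N^\sharp$, $(NP)^\sharp=N^\sharp\,P^\sharp$, variables and $Y_\tau\mapsto Y_{\tau^\sharp}$ --- and the only interesting cases are the constructors. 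The constant $\bot$ is sent to the tuple $(\Ff_{q_1}(\bot),\dots,\Ff_{q_m}(\bot))=(\delta(q_1,\bot),\dots,\delta(q_m,\bot))$, a tuple of fixed finite trees over $\AlB$. Each constructor $a\colon o\to o\to o$ is sent to a finite term $a^\sharp$ which, given the tuples $u$ and $w$ associated with the two subtrees, returns the tuple whose $q$-th entry is $\delta(q,a)$ with every leaf labeled $(r,\L)$ replaced by the $r$-th component of $u$ and every leaf labeled $(r,\R)$ replaced by the $r$-th component of $w$. This is exactly the defining clause of $\Ff_q(a\symb{T_\L,T_\R})$, and $a^\sharp$ is built purely from $\AlB$-constructors, tupling, and projections (no $Y$ is needed).

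Then I would prove correctness: the translation commutes with reduction --- routinely for $\beta$-steps (substitution lemma) and for $Y$-unfolding (since $(Y_\tau N)^\sharp=Y_{\tau^\sharp}N^\sharp$ reduces to $(N(Y_\tau N))^\sharp$) --- so that the $q$-th component of the tuple produced by $M^\sharp$ generates $\Ff_q(T)$, with the constructor clause above supplying precisely the top-down action of $\Ff_q$ at each node in the limit. Projecting the $q_0$-component of $M^\sharp$ yields a closed ground-type $\lambda Y$-term generating $\Ff_{q_0}(T)=\Ff(T)$, from which the equivalence gives the desired scheme $\Gg_\Ff$.

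The step I expect to be the main obstacle is \emph{preservation of productivity} in the presence of $\varepsilon$-transitions. When $\delta(q,a)$ is the single leaf $(r,d)\symb{\bot,\bot}$, the $q$-th component of $a^\sharp$ emits no output symbol: it is merely a projection copying the $r$-th component of a child's tuple. A chain of such copies must be shown to terminate, for otherwise $M^\sharp$ would compute a non-productive value and $\Gg_\Ff$ would fail to generate a total tree. Here the acyclicity assumption built into the definition of $\Ff$ --- that no sequence $(q_1,a_1,d_1),\dots,(q_n,a_n,d_n)$ cyclically satisfies $\delta(q_i,a_i)=(q_{(i\bmod n)+1},d_i)\symb{\bot,\bot}$ --- is exactly what is needed: it makes the $\varepsilon$-move relation on states acyclic, hence a DAG, so that at most $|Q|$ copy steps occur before a genuine $\AlB$-constructor is produced. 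Consequently $\Ff(T)$ is a well-defined (possibly infinite) tree and the translated term is productive, so the construction indeed yields a recursion scheme.
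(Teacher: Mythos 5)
Your route is genuinely different from the paper's. The paper proves \cref{fact:transducer} by passing to collapsible pushdown systems and composing such a system with $\Ff$ (a product-style construction), deferring the details to prior work; you instead work on the $\lambda Y$-term side with a homomorphic, state-indexed tupling translation ($o$ becomes $o^{|Q|}$, each constructor $a$ becomes a finite term assembling the $|Q|$ transducts from the children's tuples). This is a standard and essentially sound way to close $\lambda Y$-definable trees under deterministic top-down transductions, and it is more self-contained than the paper's citation; the price is that you must eliminate product types while staying within the shape of a recursion scheme (routine, but worth a sentence: $Y$ at product type becomes mutual recursion, i.e., several nonterminals) and re-prove commutation with reduction up to the Böhm-tree limit. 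Your treatment of the $\varepsilon$-transitions is correct: the stated assumption makes the relation $q\mathrel{R}r\iff\exists a.\,\delta(q,a)=(r,\cdot)\symb{\bot,\bot}$ acyclic, so copy chains have length at most $|Q|$ and productivity is not lost there.

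There is, however, one genuine gap: the handling of $\bot$ arising from divergence rather than from the explicit constant. Your translation sends the \emph{constant} $\bot$ to $(\delta(q_1,\bot),\dots,\delta(q_m,\bot))$, but a general recursion scheme may fail to produce a node because the corresponding subcomputation never reaches head normal form; at such a position $T$ has an empty subtree, so $\Ff(T)$ must contain $\delta(q,\bot)$ there, whereas your translated term diverges at the same position and outputs nothing. The discrepancy is not vacuous for this paper, since the transducers used later set $\delta(\tau,\bot)=\nd_\bot\neq\bot$ for most states $\tau$. The fix is a preprocessing step you should state explicitly: for instance, first apply \cref{fact:reflection} with the sentences ``the root has a left (resp.\ right) child'' so that every node of $T$ is annotated with which of its subtrees are empty, and then let $a^\sharp$ for an annotated letter substitute the constant tree $\delta(r,d)$-free value $\delta(r,\bot)$ in place of $\pi_r$ of the corresponding child's tuple whenever that child is marked absent (the divergent argument is then simply discarded, which is harmless in the Böhm-tree semantics); the empty input tree is a separate trivial case. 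With that addition your argument goes through.
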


This \lcnamecref{fact:transducer} follows from the equivalence between recursion schemes and collapsible pushdown systems~\cite{collapsible}, as it is straightforward to compose a collapsible pushdown system with $\Ff$.
A formal proof can be found for instance in Parys~\cite[Appendix A]{types-diagonal-journal}.

\subparagraph{Sequences of operations.}

We consider sequences of operations of the form $O_1,O_2,\dots,O_n$, where every $O_i$ is either an \mso sentence $\phi$, or the string ``$\SUP$'', or a finite tree transducer $\Ff$.
Having a tree $T$, we can apply such a sequence of operations to it.
Namely, we take $T_0=T$, and for every $i\in\interval{1,n}$, as $T_i$ we take
\begin{itemize}
\item	$\refl_\phi(T_{i-1})$ if $O_i=\phi$ is an \mso sentence,
\item	$\refl_\SUP(T_{i-1})$ if $O_i=\SUP$, and
\item	$\Ff(T_{i-1})$ if $O_i=\Ff$ is a finite tree transducer.
\end{itemize}
As the result we take $T_n$.
We implicitly assume that whenever we apply a finite tree transducer to some tree, then the tree is over the input alphabet of the transducer;
likewise, we assume that while computing $\refl_\phi(T_{i-1})$, the formula uses letters from the alphabet of $T_{i-1}$.

Using the aforementioned closure properties (\cref{fact:reflection,fact:sup-reflection,fact:transducer}) we can apply the operations on the level of recursion schemes generating our tree:

\begin{proposition}\label{lem:sequence}
	Given a recursion scheme $\Gg$ generating a tree $T$, and a sequence of operations $O_1,O_2,\dots,O_n$ as above,
	one can construct a recursion scheme $\Gg'$ generating the result of applying $O_1,O_2,\dots,O_n$ to $T$.
\end{proposition}

\subparagraph{Main theorem.}

Let $\AlA$ be the alphabet used by \wmsouk formulae under consideration. We prove the following theorem:

\begin{theorem}\label{thm:create-sequence}
	Given a \wmsouk sentence $\phi$, one can compute a sequence of operations $O_1,O_2,\dots,O_n$,
	such that for every tree $T$ over $\AlA$,
	by applying $O_1,O_2,\dots,O_n$ to $T$ we obtain $\true\symb{\bot,\bot}$ if $\phi$ is true in $T$, and $\false\symb{\bot,\bot}$ otherwise.
\end{theorem}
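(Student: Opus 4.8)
The plan is to prove, by induction on the structure of the formula, the following stronger \emph{reflection} statement: for every subformula $\psi$ of $\phi$ with free variables $\X_1,\dots,\X_m$, one can compute a sequence of operations that, applied to any tree over the alphabet $\AlA\times\Pp(\set{\X_1,\dots,\X_m})$ — which we read as a tree $T$ over $\AlA$ together with a valuation $\nu$ encoding, in each node's label, the set of variables whose value contains that node — returns the same tree with each node $u$ additionally labelled by $\pht{\psi}{T\restr u}{\nu\restr u}$. Since $\Pht\psi$ is finite (\cref{prop:pht-finite}), this is a well-defined finite annotation. The theorem then follows by taking $\psi=\phi$ (so $m=0$ and no valuation has to be encoded), and appending one last transducer that, using the function $\mathit{tv}_\phi$ of \cref{prop:pht-2-form} applied to the root label, outputs $\true\symb{\bot,\bot}$ or $\false\symb{\bot,\bot}$.

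For the atomic cases the annotation is a finite, \mso-definable property of the annotated subtree: for $\psi$ of the form $a(\X)$, $\X\child_d\Y$ or $\X\subseteq\Y$ and for every candidate value $\tau\in\Pht\psi$ the statement ``$\pht{\psi}{\cdot}{\cdot}=\tau$'' is an \mso sentence over $\AlA\times\Pp(\set{\X_1,\dots,\X_m})$, so reflecting all of them with \cref{fact:reflection} and collecting the one-hot outcomes with a transducer (\cref{fact:transducer}) produces the required label. The case $\psi=\neg\psi'$ is immediate, since $\pht{\neg\psi'}{\cdot}{\cdot}=\pht{\psi'}{\cdot}{\cdot}$. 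For $\psi=\psi_1\land\psi_2$ I would run the two sequences obtained inductively for $\psi_1$ and $\psi_2$ one after another — arranging, as one always can, that every operation preserves the labels already present — and then pair the annotations with a transducer, using $\Pht\psi=\Pht{\psi_1}\times\Pht{\psi_2}$.

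The heart of the proof, and the main obstacle, is the quantifier case $\psi=\U(\X_1,\dots,\X_k).\psi'$; here \cref{fact:sup-reflection} enters. I would first invoke the induction hypothesis for $\psi'$ with the extra variables all valuated to the empty set, obtaining at every node $u$ the \emph{empty-$X$ type} $\pht{\psi'}{T\restr u}{(\nu[\X_1\mapsto\emptyset,\dots,\X_k\mapsto\emptyset])\restr u}$. Then, for each fixed target $\sigma\in\Pht{\psi'}$, I would build a transducer $\Ff_\sigma$ that outputs, using the letters $\nd$ and $\nd_\bot$, a tree whose finite $\rednd$-reducts encode exactly the choices of finite sets $X_1,\dots,X_k$ inside a subtree that are consistent with $\psi'$-type $\sigma$. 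Working top-down, $\Ff_\sigma$ keeps in its state the type still required for the current subtree (initially $\sigma$), guesses via emitted $\nd$-branches which of $\X_1,\dots,\X_k$ contain the current node (reading the fixed free variables from the label) together with the types required of the two children, checks with $\Comp_{a,\psi'}$ of \cref{lem:compositionality} that these combine to the required type, and emits one fresh marker $a_i$ for each chosen $i$; whenever it decides to put nothing from any $X_i$ below the current node, it instead compares the required type with the stored empty-$X$ type, continuing (with $\nd_\bot$ on a mismatch). Because the chosen sets are finite, the active part of each reduct is finite, while the infinite remainder is accounted for by the boundary comparisons against empty-$X$ types, and the number of $a_i$-labelled nodes of a reduct equals $|X_i|$. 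The crux lemma is then that, for $A=\set{a_i\mid i\in I}$, the predicate $\SUP_A(\Ll(\Ff_\sigma(\cdot)\restr u))$ holds if, and only if, $\sigma$ belongs to the $I$-th component of $\pht{\psi}{T\restr u}{\nu\restr u}$ — exactly the unboundedness condition in the definition of the $\U$-type. Applying \cref{fact:sup-reflection} to $\Ff_\sigma(\cdot)$ therefore reads off one slice of the desired type; running this for every $\sigma$ and assembling the slices with a transducer yields the full label $(\rho_I)_{I\subseteq\interval{1,k}}$.

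I expect two points to require the most care. The first is bookkeeping: each application of $\Ff_\sigma$ reshapes the tree, so $\Ff_\sigma$ must keep the original tree recoverable (for instance embedded along a distinguished spine) so that the annotation produced by the subsequent \cref{fact:sup-reflection} can be projected back onto the original nodes and merged into their labels, without disturbing the data needed for the remaining targets $\sigma$ or for the outer formula. The second is the proof of the crux equivalence, where one must match a family of witnessing selections (one per $n$) against the simultaneous unboundedness of the marker counts; the forward direction mirrors the pigeonhole argument already used in the proof of \cref{lem:compositionality}. Finally, the case $\psi=\efin\X.\psi'$ is a degenerate instance of the same construction — it is the component $I=\emptyset$ above, where $\SUP_\emptyset(\Ll)$ reduces to nonemptiness of $\Ll$ — so it can alternatively be discharged with \mso reflection testing whether some $\rednd$-reduct of $\Ff_\sigma(\cdot)$ avoids $\nd_\bot$.
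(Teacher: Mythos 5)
Your proposal is correct and follows essentially the same route as the paper: an induction computing type annotations at every node, with the $\U$ case handled by a transducer that unfolds the compositionality relation of \cref{lem:compositionality} into a nondeterministic tree whose marker counts track the sizes of the chosen sets, followed by SUP reflection and a cleanup pass; your ``crux lemma'' is exactly the paper's Claim. The only difference is cosmetic: the paper fixes the empty valuation throughout and threads auxiliary data via an extra alphabet $\AlB$, whereas you encode the valuation of free variables in the node labels, but since the currently quantified variables are guessed on the fly in both versions, the two invariants coincide where it matters.
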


Having a recursion scheme generating either $\true\symb{\bot,\bot}$ or $\false\symb{\bot,\bot}$, we can easily check what is generated:
we just repeatedly apply rules of the recursion scheme.
Thus \cref{thm:decidability} is an immediate consequence of \cref{thm:create-sequence,lem:sequence}.

\begin{remark}
	Note that in \cref{thm:create-sequence} we do not assume that $T$ is generated by a recursion scheme;
	the theorem holds for any tree $T$.
	Thus our decidability result, \cref{thm:decidability}, can be immediately generalized from the class of trees generated by recursion schemes
	to any class of trees that is effectively closed under the considered three types of operations
	(i.e., any class for which \cref{fact:reflection,fact:sup-reflection,fact:transducer} remain true).
\end{remark}

We now formulate a variant of \cref{thm:create-sequence} suitable for induction.
On the input side, we have to deal with formulae with free variables (subformulae of our original sentence).
On the output side, it is not enough to produce the truth value; we rather need to produce trees decorated by logical types.
While logical types in general depend on the valuation of free variables, we consider here only a very special valuation mapping all variables to the empty set;
recall that we denote such a valuation by $\varnothing$.
Additionally, in the input tree we have to allow presence of some additional labels (used to store types with respect to other subformulae):
we suppose that we have a tree $T$ over an alphabet $\AlA\times\AlB$, where $\AlA$ is our fixed alphabet used by \wmsouk formulae,
and $\AlB$ is some other auxiliary alphabet.
Then by $\pi_\AlA(T)$ we denote the tree over $\AlA$ having the same domain as $T$, with every node thereof relabeled from $(a,b)\in\AlA\times\AlB$ to $a$.

\begin{lemma}\label{lem:create-sequence}
	Given a \wmsouk formula $\phi$ and an auxiliary alphabet $\AlB$, one can compute a sequence of operations $O_1,O_2,\dots,O_n$,
	such that for every tree $T$ over $\AlA\times\AlB$,
	by applying $O_1,O_2,\dots,O_n$ to $T$ we obtain a tree having the same domain as $T$,
	such that every node $u$ thereof is labeled by the pair $(\ell_u,\pht{\phi}{\pi_\AlA(T)\restr u}{\varnothing})$,
	where $\ell_u\in\AlA\times\AlB$ is the label of $u$ in $T$.
\end{lemma}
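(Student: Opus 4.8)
The plan is to proceed by induction on the structure of $\phi$, maintaining the invariant that at every node the original label $\ell_u\in\AlA\times\AlB$ is preserved while the desired type $\pht{\phi}{\pi_\AlA(T)\restr u}{\varnothing}$ is appended; the auxiliary alphabet $\AlB$ is used both to keep the original labels untouched and to thread partial results between successive operations. The non-quantifier cases are immediate. Under the empty valuation the atomic types are \emph{constant}: $\pht{a(\X)}{T}{\varnothing}=\pht{\X\subseteq\Y}{T}{\varnothing}=\true$ and $\pht{\X\child_d\Y}{T}{\varnothing}=\mathsf{empty}$, so a single transducer appending this constant suffices. For $\phi=\neg\psi$ the type coincides with the $\psi$-type, hence the sequence for $\psi$ works verbatim. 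For $\phi=\psi_1\land\psi_2$ I first run the sequence for $\psi_1$ (storing its type inside $\AlB$), then the sequence for $\psi_2$ over the enlarged auxiliary alphabet, and finally a transducer pairing the two stored types.

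The heart of the proof is the two quantifier cases. By \cref{lem:compositionality}, the type $\pht{\psi}{T\restr u}{\varnothing[\X_1\mapsto X_1,\dots]}$ is obtained by folding $\Comp_{a,\psi}$ upwards, the valuation entering only through the root-membership set at each node. Crucially, the quantified sets $X_1,\dots,X_k$ are \emph{finite}, so only finitely many nodes of $T\restr u$ have a marked descendant; below each node with no marked descendant the valuation is empty, and there the inner type is exactly $\pht{\psi}{\cdot}{\varnothing}$. Hence $\pht{\efin\X.\psi}{T\restr u}{\varnothing}$ (resp.\ the entries of $\pht{\U(\X_1,\dots,\X_k).\psi}{T\restr u}{\varnothing}$) is determined by a \emph{finite} bottom-up fold of $\Comp_{a,\psi}$ over the finite region spanned by a marking, whose frontier leaves carry the empty-valuation types. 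These frontier types are computable by the induction hypothesis applied to $\psi$ and stored in the labels.

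I then realise the ranging over markings by a transducer building a tree in which the choice of a finite marking is encoded by $\nd$-branching: at each node it nondeterministically either stops, emitting a leaf carrying the precomputed $\pht{\psi}{\cdot}{\varnothing}$, or commits the node's membership in each $X_i$ (emitting, for the $\U$ case, a fresh counter letter $c_i$ whenever the node is put into $X_i$) and recurses into both children. A finite tree $V\in\Ll$ of this tree then corresponds precisely to a finite marking, and folding $\Comp_{a,\psi}$ over $V$ computes the associated $\psi$-type. For $\phi=\efin\X.\psi$ the statement ``type $\sigma$ is achievable in the subtree at $u$'' is an existential over a finite marking followed by the deterministic fold, hence \mso-definable once the frontier types sit in the labels, so \cref{fact:reflection} computes, for every $\sigma$, whether $\sigma$ belongs to the $\efin$-type at each node. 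For $\phi=\U(\X_1,\dots,\X_k).\psi$ I additionally, for each candidate type $\sigma$, prune the tree (using \mso reflection to detect which $\nd$-choices can still be completed to a fold equal to $\sigma$, and inserting $\nd_\bot$ elsewhere) so that the surviving $V$ are exactly those folding to $\sigma$; then $\SUP_A$ with $A=\set{c_i\mid i\in I}$, provided by \cref{fact:sup-reflection}, holds at $u$ iff $\sigma\in\rho_I$. Ranging over the finitely many $\sigma\in\Pht\psi$ and $I\subseteq\interval{1,k}$ and reading the reflected answers back onto the original domain with a final transducer yields the full $\U$-type tuple.

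The main obstacle is the $\U$ case: one must design the $\nd$-encoding so that the finite trees of $\Ll$ correspond to finite markings with the infinite unmarked remainder correctly accounted for by the precomputed empty-valuation types, and so that the restriction of $\Ll$ to trees folding to a fixed $\sigma$ is itself a regular operation realisable by the three available transformations; only then does the single simultaneous-unboundedness query over the counter letters $c_i$ match exactly the ``for every $n$ there are $X_i$ with $|X_i|\geq n$ for all $i\in I$'' condition defining $\rho_I$. The remaining work is bookkeeping: threading all partial information through $\AlB$ and translating between the original tree domain and the larger domain of the $\nd$-encoding, both of which are routine for transducers.
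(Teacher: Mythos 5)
Your overall architecture coincides with the paper's: use the induction hypothesis to decorate the tree with empty-valuation $\psi$-types, encode the choice of a finite marking as the language $\Ll(\cdot)$ of an $\nd$-tree produced by a transducer that emits a counter letter for each membership in $X_i$, decide the unboundedness coordinates with SUP reflection, and read the answers back with \mso reflection and a final transducer. (Your treatment of $\efin\X.\psi$ directly via \mso reflection is a legitimate shortcut; the paper instead reduces it to $\U\X.\psi$ and projects onto the $I=\emptyset$ coordinate.)

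There is, however, a genuine gap exactly at the point you flag as ``the main obstacle'': restricting $\Ll$ to the markings whose fold equals a \emph{fixed} target type $\sigma$. You propose to build one uncommitted $\nd$-tree and then, for each $\sigma$, prune by using \mso reflection to detect which $\nd$-choices ``can still be completed to a fold equal to $\sigma$''. This is unsound, because completability to $\sigma$ is not a property of a subtree: at a branching node the admissible fold values of the left and right subtrees must be \emph{correlated} (the target $\sigma$ may be reachable via $\Comp_{a,\psi}(S,\tau_\L^1,\tau_\R^1)$ and via $\Comp_{a,\psi}(S,\tau_\L^2,\tau_\R^2)$, and a tree whose left part folds to $\tau_\L^1$ while its right part folds to $\tau_\R^2$ survives any pruning based on local completability yet need not fold to $\sigma$). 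Since \mso reflection only labels a node according to properties of the subtree rooted there, it cannot express this cross-subtree correlation, so the pruned language strictly contains the intended one; the surplus trees can then make $\SUP_{A_I}$ hold spuriously, corrupting the computed coordinates $\rho_I$. The paper avoids this by threading the target type through the transducer's \emph{state}: in state $\tau$ at a node with label $\ell$ it emits a disjunction, placed \emph{above} the branching, over all triples $(S,\tau_\L,\tau_\R)$ with $\Comp_{\pi_\AlA(\ell),\psi}(S,\tau_\L,\tau_\R)=\tau$, and then recurses into the left and right children in states $\tau_\L$ and $\tau_\R$ respectively; this forces every finite tree of $\Ll(\Ff_\tau(\cdot))$ to fold to $\tau$ by construction (plus a ``here'' leaf available only when $\tau$ is the stored empty-valuation type). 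Replacing your pruning step by this state-threading construction, and keeping a copy of the original tree alongside the gadgets so that the SUP answers can be transported back to the original nodes, turns your sketch into the paper's proof.
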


\cref{thm:create-sequence} is an immediate consequence of \cref{lem:create-sequence}.
Indeed, let us use \cref{lem:create-sequence} with a singleton alphabet $\AlB$;
for such an alphabet we identify $\AlA$ with $\AlA\times\AlB$.
By applying operations $O_1,\dots,O_n$ obtained from \cref{lem:create-sequence} we obtain a tree with the root labeled by $(a,\tau)$ for $\tau=\pht{\phi}{T}{\varnothing}$.
Recall that, by \cref{prop:pht-2-form}, we have a function $\mathit{tv}_\phi$ such that $\mathit{tv}_\phi(\pht{\phi}{T}{\varnothing})=\true$ if, and only if, $T,\varnothing\models\phi$.
Thus, after all the operations $O_1,\dots,O_n$, we can simply apply a transducer $\Ff$ that reads the root's label $(a,\tau)$
and returns the tree $\true\symb{\bot,\bot}$ if $\mathit{tv}_\phi(\tau)=\true$, and the tree $\false\symb{\bot,\bot}$ otherwise.
There is a small exception if the original tree $T$ has empty domain: then there is no root at all, in particular no root from which we can read the $\phi$-type $\tau$.
Thus, if the transducer $\Ff$ sees an empty tree, it should rather use $\tau=\pht{\phi}{\bot}{\varnothing}$, which is known by \cref{lem:type-empty}.

\begin{proof}[Proof of \cref{lem:create-sequence}]
	The proof is by induction on the structure of $\phi$.
	We have several cases depending on the shape of $\phi$.

	Recall that in this lemma we only consider the valuation $\varnothing$ mapping all variables to the empty set.
	Because of that, if $\phi$ is of the form $a(X)$ or $X\subseteq Y$, then the $\phi$-type $\pht{\phi}{\pi_\AlA(T)\restr u}{\varnothing}$ is $\true$ for every tree $T$ and node $u$ thereof.
	It is thus enough to return (as the only operation $O_1$) a transducer that appends $\true$ to the label of every node of $T$.
	Similarly, if $\phi=(X\child_d Y)$, then the $\phi$-type $\pht{\phi}{\pi_\AlA(T)\restr u}{\varnothing}$ is always $\mathsf{empty}$.
	For $\phi=(\neg\psi)$ the situation is also trivial: we can directly use the induction hypothesis since
	$\pht{\phi}{\pi_\AlA(T)\restr u}{\varnothing}=\pht{\psi}{\pi_\AlA(T)\restr u}{\varnothing}$.

	Suppose that $\phi=(\psi_1\land\psi_2)$.
	The induction hypothesis for $\psi_1$ gives us a sequence of operations $O_1,O_2,\dots,O_n$ that appends $\pht{\psi_1}{\pi_\AlA(T)\restr u}{\varnothing}$
	to the label of every node $u$ of $T$.
	The resulting tree $T'$ is over the alphabet $\AlA\times\AlB\times\Pht{\psi_1}$, which can be seen as $\AlA\times\AlB'$ for $\AlB'=\AlB\times\Pht{\psi_1}$;
	we have $\pi_\AlA(T')=\pi_\AlA(T)$.
	We can thus apply the induction hypothesis for $\psi_2$ to the resulting tree $T'$;
	it gives us a sequence of operations $O_{n+1},O_{n+2},\dots,O_{n+m}$ that appends $\pht{\psi_2}{\pi_\AlA(T)\restr u}{\varnothing}$
	to the label of every node $u$ of $T'$.
	The tree obtained after applying all the $n+m$ operations is as needed: in every node thereof we have appended the pair
	containing the $\psi_1$-type and the $\psi_2$-type, and such a pair is precisely the $\phi$-type.

	The case of $\phi=\efin\X.\psi$ is handled by a reduction to the case of $\phi'=\U\X.\psi$.
	Indeed, recall that the type for $\U(\X_1,\dots,\X_k)$ is a tuple of $2^k$ coordinates indexed by sets $I\subseteq\interval{1,k}$;
	in the case of a single variable $\X_1=\X$, there are only two coordinates, one for $I=\emptyset$, and the other for $I=\set{1}$.
	The coordinate for $I=\emptyset$ in $\pht{\U\X.\psi}{T'}{\varnothing}$ is simply $\set{\sigma\mid\exists X.\pht{\psi}{T'}{\nu[\X\mapsto X]}=\sigma}$,
	that is, the $\phi$-type $\pht{\efin\X.\psi}{T'}{\nu}$.
	Thus, we can take the sequence of operations $O_1,O_2,\dots,O_n$ from the forthcoming case of $\phi'=\U\X.\psi$, which appends the $\phi'$-type,
	and then add a simple transducer that removes the second coordinate of this type.

	Finally, suppose that $\phi=\U(\X_1,\dots,\X_k).\psi$.
	By the induction hypothesis we have a sequence of operations $O_1,O_2,\dots,O_n$ that appends the $\psi$-type $\pht{\psi}{\pi_\AlA(T)\restr u}{\varnothing}$
	to the label of every node $u$ of $T$.
	Let $T^1$ be the tree obtained from $T$ by applying these operations.

	As a first step, to $T^1$ we apply a transducer $\Ff$ defined as follows.
	Its input alphabet is $\AlA'=\AlA\times\AlB\times\Pht{\psi}$, the alphabet of $T^1$,
	its output alphabet is $\AlA'\cup\set{?,\#,\nd,\nd_\bot,\X_1,\dots,\X_k}$,
	and its set of states is $\set{q_0}\cup\Pht{\psi}$.
	Having a letter $\ell=(a,b,\tau)\in\AlA'$, let $\pi_\AlA(\ell)=a$ and $\pi_{\Pht{\psi}}(\ell)=\tau$.
	Coming to transitions, first for every triple $(S,\tau_\L,\tau_\R)$, where $S=\set{\X_{i_1},\dots,\X_{i_m}}\subseteq\set{\X_1,\dots,\X_k}$ and $\tau_\L,\tau_\R\in\Pht{\psi}$
	we define
	\begin{align*}
		\mathit{sub}(S,\tau_\L,\tau_\R)=\X_{i_1}\symb{\bot,\X_{i_2}\symb{\bot,\dots\X_{i_m}\symb{\bot,\#\symb{(\tau_\L,\L),(\tau_\R,\R)}}\dots}}\,.
	\end{align*}
	Moreover, for every $\ell\in\AlA'$ and $\tau\in\Pht\psi$, let $\mathit{here}(\ell,\tau)=\bot$ if $\tau=\pi_{\Pht{\psi}}(\ell)$ and $\mathit{here}(\ell,\tau)=\nd_\bot\symb{\bot,\bot}$ otherwise.
	In order to define $\delta(\tau,\ell)$, we consider all triples $(S_1,\tau_{\L,1},\tau_{\R,1}),\dots,\allowbreak (S_s,\tau_{\L,s},\tau_{\R,s})$
	for which $\Comp_{\pi_{\AlA}(\ell),\psi}(S_i,\tau_{\L,i},\tau_{\R,i})=\tau$ (assuming some fixed order in which these triples are listed).
	Then, we take
	\begin{align*}
		\delta(\tau,\ell)={}?\symb{\bot,\nd\symb{\mathit{sub}(S_1,\tau_{\L,1},\tau_{\R,1}),\nd\symb{\mathit{sub}(S_2,\tau_{\L,2},\tau_{\R,2}),\dots&\\
			&\hspace{-5em}\nd\symb{\mathit{sub}(S_s,\tau_{\L,s},\tau_{\R,s}),\mathit{here}(\ell,\tau)}\dots}}}\,.
	\end{align*}
	Additionally, we consider the list $\tau_1,\dots,\tau_r$ of all $\psi$-types from $\Pht{\psi}$ (listed in some fixed order), and we define
	\begin{align*}
		\delta(q_0,\ell)=\ell\symb{(q_0,\L),\#\symb{(q_0,\R),\#\symb{\delta(\tau_1,\ell),\#\symb{\delta(\tau_2,\ell),\dots\#\symb{\delta(\tau_r,\ell),\bot}\dots}}}}\,.
	\end{align*}
	For the empty tree we define
	\begin{align*}
		\delta(q_0,\bot)=\bot, \qquad
		\delta(\pht{\psi}{\bot}{\varnothing},\bot)=\bot, \qquad\mbox{and}\qquad
		\delta(\tau,\bot)=\nd_\bot\qquad\mbox{for }\tau\neq\pht{\psi}{\bot}{\varnothing}\,.
	\end{align*}

	The ``main part'' of the result $\Ff(T^1)$, produced using the state $q_0$ is an almost unchanged copy of $T^1$;
	there is only a technical change, that a new $\#$-labeled node is inserted between every node and its right child, so that
	the right child is moved to the left child of this new right child.
	But additionally, below the new $\#$-labeled right child of every node $u$ of $T^1$,
	there are $|\Pht{\psi}|$ modified copies of $T^1\restr u$, attached below a branch of $\#$-labeled nodes (cf.~\cref{fig:trans-1}).
	\begin{figure}
		\centering
		\def\svgscale{0.5}\import{pics/}{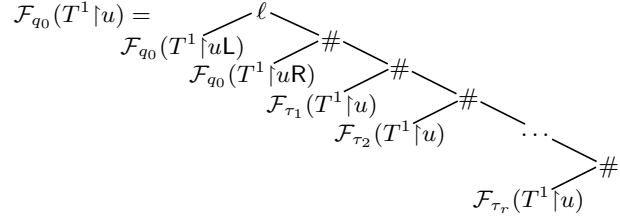}\vspace{-1ex}
		\caption{An illustration of $\Ff_{q_0}(T^1\restr u)$.
			Here, $\ell$ is the label of $u$ in $T^1$, and $\tau_1,\dots,\tau_r$ are all possible $\psi$-types.}
		\label{fig:trans-1}
	\end{figure}
	For each $\psi$-type $\tau$ we have such a copy, namely $\Ff_\tau(T^1\restr u)$, responsible for checking whether the type of $\pi_\AlA(T)\restr u$ can be $\tau$.
	The tree $\Ff_\tau(T^1\restr u)$ is a disjunction (formed by $\nd$-labeled nodes) of all possible triples $(S,\tau_\L,\tau_\R)$ such that
	types $\tau_\L$ and $\tau_\R$ in children of $u$, together with $S$ being the set of those variables among $\X_1,\dots,\X_k$ that contain $u$,
	result in type $\tau$ in $u$ (cf.~\cref{fig:trans-2}).
	\begin{figure}
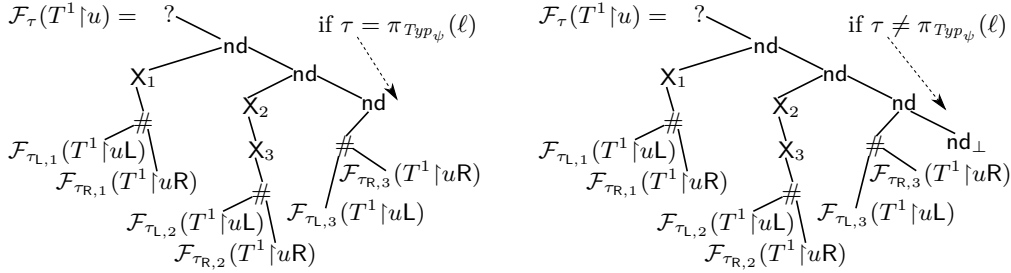

	\centering
	\begin{minipage}[c]{0.49\textwidth}
		\centering
		\def\svgscale{0.5}\import{pics/}{trans-2.pdf_tex_ok}\vspace{-1ex}
	\end{minipage}
	\begin{minipage}[c]{0.49\textwidth}
		\centering
		\def\svgscale{0.5}\import{pics/}{trans-3.pdf_tex_ok}\vspace{-1ex}
	\end{minipage}
	\caption{An illustration of $\Ff_\tau(T^1\restr u)$.
		We assume that there are exactly three triples $(S,\tau_\L,\tau_\R)$ such that $\Comp_{\pi_\AlA(\ell),\psi}(S,\tau_\L,\tau_\R)=\tau$,
		namely $(\set{\X_1},\tau_{\L,1},\tau_{\R,1})$, $(\set{\X_2,\X_3},\tau_{\L,2},\tau_{\R,2})$, and $(\emptyset,\tau_{\L,3},\tau_{\R,3})$,
		for $\ell$ being the label of $u$ in $T^1$.
		We have two cases depending on whether the $\psi$-type written in $\ell$ is $\tau$ or not.}
	\label{fig:trans-2}
	\end{figure}
	We output the variables from $S$ in the resulting tree, so that they can be counted, and then we have subtrees $\Ff_{\tau_\L}(T^1\restr{u\L})$ and $\Ff_{\tau_\R}(T^1\restr{u\R})$,
	responsible for checking whether the type in the children of $u$ can be $\tau_\L$ and $\tau_\R$.
	Additionally, the $\mathit{here}$ subtree allows to finish immediately if $\tau$ is the $\psi$-type of $T^1\restr u$ under the empty valuation.
	Formally, we have the following claim: 

	\begin{restatable}{claim}{claimm}\label{claim}
		For every $\psi$-type $\tau$, numbers $n_1,\dots,n_k\in\Nat$, and node $u$, the following two statements are equivalent:
		\begin{itemize}
		\item	there exist sets $X_1,\dots,X_k$ of nodes of $T\restr u$ such that $\pht{\psi}{\pi_\AlA(T)\restr u}{\varnothing[\X_1\mapsto X_1,\dots,\X_k\mapsto X_k]}=\tau$ and
			$|X_i|=n_i$ for $i\in\interval{1,k}$, and
		\item	there exists a tree $V\in\Ll(\Ff_\tau(T^1\restr u))$ such that $\#_{\X_i}(V)=n_i$ for $i\in\interval{1,k}$.
		\end{itemize}
	\end{restatable}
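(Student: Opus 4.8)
The plan is to prove both implications of the equivalence by induction on the height of the node $u$ (equivalently, on the structure of the subtree $T \restr u$), reading off the semantics of $\Ll(\Ff_\tau(T^1 \restr u))$ directly from the transducer $\Ff$ defined above. The guiding intuition is that the transducer was built precisely so that a run of the nondeterministic disjunctions inside $\Ff_\tau(T^1 \restr u)$ traces out a choice of a triple $(S, \tau_\L, \tau_\R)$ witnessing $\Comp_{\pi_\AlA(\ell),\psi}(S,\tau_\L,\tau_\R) = \tau$ at every node, together with a recursive choice of witnessing sets in the two subtrees; and the $\X_i$-labeled nodes emitted by the $\mathit{sub}$ gadget are exactly the markers counting membership of $u$ in the set $X_i$. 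So a tree $V \in \Ll(\Ff_\tau(T^1 \restr u))$ with $\#_{\X_i}(V) = n_i$ encodes a global assignment of sets $X_1,\dots,X_k$ of the prescribed cardinalities realizing type $\tau$, and conversely.

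First I would fix notation and unwind one level of the transducer. At the root $u$ with label $\ell$, the tree $\Ff_\tau(T^1 \restr u) = \delta(\tau,\ell)$ is a $?$-rooted node whose relevant child is an $\nd$-branch listing, over all triples $(S_i,\tau_{\L,i},\tau_{\R,i})$ with $\Comp_{\pi_\AlA(\ell),\psi}(S_i,\tau_{\L,i},\tau_{\R,i}) = \tau$, the subtree $\mathit{sub}(S_i,\tau_{\L,i},\tau_{\R,i})$, and finally the $\mathit{here}(\ell,\tau)$ alternative. Choosing a tree in $\Ll$ means resolving each $\nd$: either we pick some branch $i$, which forces the $\X$-markers from $S_i$ to be emitted and then recurses via $(\tau_{\L,i},\L)$ and $(\tau_{\R,i},\R)$ into $\Ff_{\tau_{\L,i}}(T^1 \restr{u\L})$ and $\Ff_{\tau_{\R,i}}(T^1 \restr{u\R})$, or we fall through to $\mathit{here}(\ell,\tau)$, which survives in $\Ll$ (contributing no $\X$-markers and no subtrees) precisely when $\tau = \pi_{\Pht\psi}(\ell)$, i.e.\ when $\tau$ is already the empty-valuation $\psi$-type of $T^1 \restr u$. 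The crucial bookkeeping observation is that $\#_{\X_i}(\mathit{sub}(S,\tau_\L,\tau_\R)) = 1$ iff $\X_i \in S$, and that the $\#$-labeled scaffolding is invisible to the $\X_i$-counts and to $\Ll$ (it carries no $\nd$ and no $\X_i$). Thus $\#_{\X_i}(V)$ decomposes additively into the contribution $[\epsilon \in X_i]$ at $u$ plus the contributions from the left and right recursive subtrees.

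For the forward direction, given $X_1,\dots,X_k$ realizing $\tau$ at $u$, I would apply \Cref{lem:compositionality}: setting $S = \set{\X_i \mid \epsilon \in X_i}$, $\tau_\L = \pht{\psi}{\pi_\AlA(T)\restr{u\L}}{\varnothing[\dots\mapsto X_i\restr\L]}$, and $\tau_\R$ analogously, \cref{eq:compos} gives $\Comp_{\pi_\AlA(\ell),\psi}(S,\tau_\L,\tau_\R) = \tau$, so $(S,\tau_\L,\tau_\R)$ is one of the listed triples. The induction hypothesis at $u\L$ and $u\R$ (applied to the restricted sets $X_i \restr \L$ and $X_i \restr \R$, of cardinalities $n_i^\L$ and $n_i^\R$ with $n_i^\L + n_i^\R + [\epsilon\in X_i] = n_i$) yields trees $V_\L \in \Ll(\Ff_{\tau_\L}(T^1\restr{u\L}))$ and $V_\R \in \Ll(\Ff_{\tau_\R}(T^1\restr{u\R}))$ with matching counts; assembling them under the chosen $\nd$-branch produces the required $V$. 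The reverse direction inverts this: a given $V$ picks either a triple (recurse by the hypothesis and reconstruct $X_i$ via $X_i \restr \L$, $X_i \restr \R$, and $\epsilon \in X_i \iff \X_i \in S$, then invoke \cref{eq:compos} to confirm the type is $\tau$) or the $\mathit{here}$ branch (in which case $\tau$ is the empty-valuation type, witnessed by all $X_i = \emptyset$, forcing every $n_i = 0$). The base case $u \notin \dom(T)$, handled by the $\delta(\cdot,\bot)$ transitions, is immediate: $\Ll(\Ff_\tau(\bot))$ is nonempty iff $\tau = \pht{\psi}{\bot}{\varnothing}$, matching the unique empty assignment.

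The main obstacle I anticipate is not any single deep argument but the careful matching of the $\X_i$-count bookkeeping against set cardinalities across the $\#$-scaffolding, together with correctly handling the $\mathit{here}$ alternative so that it fires exactly when no proper recursion is needed. In particular one must check that the $\#$- and $?$-labeled nodes genuinely contribute nothing to $\Ll$ and to $\#_{\X_i}$, and that the two sources of a witnessing set — a triple branch versus the $\mathit{here}$ shortcut — are never conflated, since the $\mathit{here}$ branch survives in $\Ll$ only when $\tau$ equals the stored empty-valuation type. Once this accounting is set up cleanly, the inductive step is an essentially mechanical translation between the compositional description of types from \Cref{lem:compositionality} and the structure of $\delta(\tau,\ell)$.
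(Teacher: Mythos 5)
Your reconstruction of the mechanics is faithful to the construction: unwinding one level of $\delta(\tau,\ell)$, matching the $\nd$-disjunction against the triples $(S,\tau_\L,\tau_\R)$ with $\Comp_{\pi_\AlA(\ell),\psi}(S,\tau_\L,\tau_\R)=\tau$, the additive bookkeeping $\#_{\X_i}(V)=[\epsilon\in X_i]+\#_{\X_i}(V_\L)+\#_{\X_i}(V_\R)$, the role of $\mathit{here}$ and of the $\delta(\cdot,\bot)$ transitions, and the appeal to \cref{eq:compos} are all exactly what the paper does. However, there is a genuine gap in your choice of induction measure. You propose induction ``on the height of the node $u$ (equivalently, on the structure of the subtree $T\restr u$)'', but $T$ is in general an \emph{infinite} tree (that is the whole point of the model-checking setting), so $T\restr u$ has no finite height and structural induction on it is not well-founded. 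The recursion in your inductive step descends from $u$ to $u\L$ and $u\R$, and nothing in your measure guarantees this descent terminates.

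The paper resolves this by using two different, genuinely finite measures for the two implications: for left-to-right, induction on the maximal depth of nodes occurring in $X_1\cup\dots\cup X_k$ (finite because the sets are finite); for right-to-left, induction on the height of the finite tree $V\in\Ll(\Ff_\tau(T^1\restr u))$. This is where the $\mathit{here}$ alternative is not just a shortcut but the essential base case: once all restrictions $X_i\restr w$ become empty, the fact that $T^1$ stores the empty-valuation $\psi$-type in its labels lets one stop via $\mathit{here}(\ell,\tau)=\bot$ instead of recursing forever down the infinite tree. You do observe that $\mathit{here}$ fires exactly when $\tau$ equals the stored type, but you treat it as an optional branch to be ``not conflated'' with the triple branches, rather than as the mechanism that makes the induction terminate. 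With the measure corrected (and the corresponding base cases: $u\notin\dom(T)$, and $u\in\dom(T)$ with all $X_i$ empty), your argument goes through and coincides with the paper's.
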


	\begin{claimproof}
		Let us concentrate on the left-to-right implication.
		The proof is by induction on the maximal depth of nodes in the $X_i$ sets.
		We have three cases.
		First, it is possible that $u$ is not a node of $T$.
		Then, all the sets $X_i$ have to be empty, so we have $\tau=\pht{\psi}{\bot}{\varnothing}$,
		and hence $\Ff_\tau(T^1\restr u)=\delta(\tau,\bot)=\bot$ (recall that $T$ and $T^1$ have the same domain).
		The set $\Ll(\bot)$ contains the tree $\bot$ which indeed has no $\X_i$ labeled nodes, as needed.
	
		Second, it is possible that $u$ is a node of $T$, but all the sets $X_i$ are empty.
		Let $\ell$ be the label of $u$ in $T^1$.
		By construction of $T^1$, we have $\pi_{\Pht{\psi}}(\ell)=\pht{\psi}{\pi_\AlA(T)\restr u}{\varnothing}=\tau$.
		On the rightmost branch of $\Ff_\tau(T^1\restr u)$, after a $?$-labeled node and a few $\nd$-labeled nodes, we have the subtree $\mathit{here}(\ell,\tau)$,
		which is $\bot$ by the above equality.
		We can return the tree $?\symb{\bot,\bot}$, which belongs to $\Ll(\Ff_\tau(T^1\restr u))$.
	
		Finally, suppose that our sets are not all empty.
		Then necessarily $u$ is inside $T$ (and $T^1$); let $\ell$ be the label of $u$ in $T^1$
		(by construction of $T^1$, the label of $u$ in $T$ consists of the first two coordinates of $\ell$).
		Consider $S=\set{\X_i\mid \epsilon\in X_i}$ and $\tau_d=\pht{\psi}{\pi_\AlA(T)\restr{ud}}{\varnothing[\X_1\mapsto X_1,\dots,\X_k\mapsto X_k]\restr d}$ for $d\in\set{\L,\R}$.
		By the induction hypothesis, there are trees $V_d\in\Ll(\Ff_{\tau_d}(T^1\restr{ud}))$ such that $\#_{\X_i}(V_d)=|X_i\restr d|$ for $i\in\interval{1,k}$.
		Due to \cref{eq:compos} we have $\tau=\Comp_{\pi_{\AlA}(\ell),\psi}(S,\tau_\L,\tau_\R)$.
		This means that $\delta(\tau,\ell)$, below a $?$-labeled node and a few $\nd$-labeled, produces a subtree using $\mathit{sub}(S,\tau_\L,\tau_\R)$.
		We define $V$ by choosing this subtree.
		Then, there are some $\X_i$-labeled nodes, for all $\X_i\in S$ (that is, for those sets $X_i$ that contain the root of $T\restr u$).
		Below them, we have the tree $\#\symb{\Ff_{\tau_\L}(T^1\restr{u\L}),\Ff_{\tau_\R}(T^1\restr{u\R})}$;
		in its left subtree we choose $V_\L$, and in its right subtree we choose $V_\R$.
		This way, we obtain a tree $V\in\Ll(\Ff_\tau(T^1\restr u))$, where the number of $\X_i$-labeled nodes is indeed $|X_i|=n_i$, for all $i\in\interval{1,k}$.
	
		We skip the proof of the right-to-left implication, as it is analogous
		(this time, the induction is on the height of the tree $V$).
	\end{claimproof}

	Let $T^2=\Ff(T^1)$.
	As the next operation after $\Ff$, we use $\SUP$.
	Let $T^3=\refl_\SUP(T^2)$.
	The $\SUP$ operation attaches a label to every node of $T^3$ (except for $\nd$-labeled nodes), but we are interested in these labels only in nodes originally (i.e., in $T_2$) labeled by ``$?$''.
	Every such node is the root of a subtree $\refl_\SUP(\Ff_\tau(T^1\restr u))$ for some node $u$ of $T^1$;
	it becomes labeled by $(?,\Uu)$, where $\Uu=\set{A\subseteq\AlA'\mid\SUP_A(\Ll(\Ff_\tau(T^1\restr u)))}$.
	Recall that $\phi=\U(\X_1,\dots,\X_k).\psi$ and that the $\phi$-type is a tuple of $2^k$ coordinates, indexed by sets $I\subseteq\interval{1,k}$.
	Consider such a set $I$, and take $A_I=\set{\X_i\mid i\in I}$.
	By definition of $\SUP_{A_I}$, the label $\Uu$ contains $A_I$ if, and only if, for every $n\in\Nat$
	the language $\Ll(\Ff_\tau(T^1\restr u))$ contains trees with at least $n$ occurrences of every element of $A_I$.
	By \cref{claim} this is the case if, and only if, for every $n\in\Nat$
	there exist sets $X_1,\dots,X_k$ of nodes of $T\restr u$ such that
	$\pht{\psi}{\pi_\AlA(T)\restr u}{\varnothing[\X_1\mapsto X_1,\dots,\X_k\mapsto X_k]}=\tau$ and
	$|X_i|\geq n$ for all $i\in I$.
	This, in turn, holds if, and only if, the $I$-coordinate of the $\phi$-type $\pht{\phi}{\pi_\AlA(T)\restr u}{\varnothing}$ contains $\tau$.
	(The case of $I=\emptyset$ is a bit delicate, but one can see that the proof works without any change also in this case.)

	The above means that all the $\phi$-types we wished to compute are already present in $T^3$, we only have to move them to correct places.
	To this end, for every $\psi$-type $\tau_i$, and for every set $I\subseteq\interval{1,k}$ we append to our sequence of operations a formula $\theta_{i,I}$
	saying that the node $\R^{i+1}\L$ has label of the form $(?,\Uu,\dots)$ with $A_I\in\Uu$
	(note that this node in $\Ff_{q_0}(T^1\restr u)$ is the $?$-labeled root of $\Ff_{\tau_i}(T^1\restr u)$;
	the operation $\SUP$ appends a set $\Uu$ to this label, and operations $\theta_{i',I'}$ applied so far append some additional coordinates that we ignore).

	After that, we already have all $\phi$-types in correct nodes, but in a wrong format;
	we also have additional nodes not present in the original tree $T$.
	To deal with this, at the end we apply a transduction $\Ff'$, which
	\begin{itemize}
	\item	removes all nodes labeled by $(\#,\dots)$ and their right subtrees, hence leaving only nodes present in the original tree $T$;
	\item	the remaining nodes have labels of the form $(a,b,\tau,\Uu,v_{i_1,I_1},\dots,v_{i_s,I_s})$;
		we relabel them to $(a,b,(\set{\tau_i\mid v_{i,I}=\true})_{I\subseteq\interval{1,k}})$.
	\end{itemize}
	This last transduction produces a tree exactly as needed.
\end{proof}

\section{Expressivity}

In this section we prove our second main result, \cref{thm:expressivity}, saying that the simultaneous unboundedness property can be expressed in \wmsouk, but not in \wmsou.
The positive part of this statement is easy:

\begin{proposition}
	For every set of letters $A$ there exists a \wmsouk sentence $\phi$ which holds in a tree $T$ if, and only if, $\SUP_A(\Ll(T))$ holds.
\end{proposition}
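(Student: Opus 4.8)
The plan is to construct, for a fixed set of letters $A=\set{a_1,\dots,a_k}$, a \wmsouk sentence that directly captures the meaning of $\SUP_A(\Ll(T))$: for every $n$ there is a tree $V\in\Ll(T)$ with at least $n$ occurrences of each $a_i$. The key observation is that a tree $V\in\Ll(T)$ is determined by a choice of one child at every $\nd$-labeled node that is kept in the selection, so rather than quantifying over trees $V$ directly (which \wmsouk cannot do, as it only quantifies over finite sets of nodes), I would encode such a $V$ by the finite set $Y$ of nodes of $T$ that survive the reduction $T\rednd^* V$. Thus the first step is to write an auxiliary formula $\mathit{sel}(Y)$ asserting that $Y\subseteq\dom(T)$ is a valid selection: it is prefix-closed, it contains the root (or is empty), at every $\nd$-labeled node of $Y$ exactly one child is in $Y$, at every non-$\nd$ node of $Y$ all its actual children are in $Y$, and no $\nd_\bot$-labeled node is in $Y$. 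All these are first-order/\mso-expressible conditions over finite sets, hence available in \wmsouk.

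Next, for each letter $a_i\in A$ I would use a variable $\X_i$ to collect exactly the $a_i$-labeled nodes of the selected tree, i.e.\ I would add the constraint that $\X_i\subseteq Y$, that $a_i(\X_i)$ holds, and that $\X_i$ contains \emph{all} $a_i$-labeled nodes of $Y$ (the last clause ensures $|\X_i|=\#_{a_i}(V)$, so that making $\X_i$ large genuinely forces many $a_i$'s in $V$). Bundling these, I define a formula
\begin{align*}
	\psi(Y,\X_1,\dots,\X_k)\ \equiv\ \mathit{sel}(Y)\ \land\ \bigwedge_{i=1}^{k}\big(\X_i\subseteq Y\ \land\ a_i(\X_i)\ \land\ \text{``}\X_i=\set{v\in Y\mid T(v)=a_i}\text{''}\big)\,.
\end{align*}
Then the whole property is expressed by existentially quantifying away the selection witness $Y$ inside the \U quantifier for the tuple $(\X_1,\dots,\X_k)$, roughly
\begin{align*}
	\phi\ \equiv\ \U(\X_1,\dots,\X_k).\,\efin Y.\,\psi(Y,\X_1,\dots,\X_k)\,.
\end{align*}
Unfolding the semantics, $\phi$ holds iff for every $n$ there are sets $X_1,\dots,X_k$, each of size at least $n$, and a selection $Y$, witnessing a tree $V\in\Ll(T)$ with $\#_{a_i}(V)=|X_i|\geq n$ for each $i$, which is exactly $\SUP_A(\Ll(T))$.

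The main obstacle I anticipate is faithfully expressing the selection predicate $\mathit{sel}(Y)$ and, in particular, the clause forcing $\X_i$ to be the complete set of $a_i$-nodes of $Y$, since the \U quantifier only guarantees that the $\X_i$ are large, not that they exhaust the $a_i$'s of the chosen $V$; if one merely required $\X_i\subseteq Y$ with $a_i(\X_i)$, then a spurious large $\X_i$ could be fabricated inside a fixed selection without reflecting genuine unboundedness in $\Ll(T)$. I would resolve this by phrasing completeness as the \mso-style condition that $Y$ contains no $a_i$-labeled node outside $\X_i$, i.e.\ $\forall$-over-subsets using the $\efin$ quantifier and negation, which keeps everything within the finite-set fragment. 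A minor technical point is the correspondence between the number of surviving $\nd$-choices and the arity of children under the first-child/next-sibling conventions fixed in the preliminaries, but since all of these are plain regular conditions on finite node sets, they pose no real difficulty beyond careful bookkeeping.
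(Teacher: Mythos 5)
Your construction is essentially identical to the paper's proof: the paper also encodes a tree of $\Ll(T)$ by a finite set $\Y$ satisfying the same selection conditions and writes $\phi=\U(\X_1,\dots,\X_k).\efin\Y.(\bigwedge_i(a_i(\X_i)\land\X_i\subseteq\Y)\land\psi(\Y))$. The only difference is your extra clause forcing $\X_i$ to exhaust the $a_i$-labeled nodes of $\Y$, which is harmless but unnecessary --- since $\X_i\subseteq\Y$ and $a_i(\X_i)$ already give $|\X_i|\leq\#_{a_i}(V)$, a large $\X_i$ cannot be ``fabricated'' without the selected tree genuinely containing many $a_i$'s, so the worry you raise does not arise.
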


\begin{proof}
	Let $A=\set{a_1,\dots,a_k}$.
	We take
	\begin{align*}
		\phi=\U(\X_1,\dots,\X_k).\efin\Y.(a_1(\X_1)\land\dots\land a_k(\X_k)\land\X_1\subseteq\Y\land\dots\land\X_k\subseteq\Y\land\psi(\Y)),
	\end{align*}
	where $\psi(\Y)$ expresses the fact that $\Y$ contains nodes of a single tree from $\Ll(T)$, together with their $\nd$-labeled ancestors, that is,
	that for every node $v$ of $\Y$,
	\begin{itemize}
	\item	the parent of $v$, if exists, belongs to $\Y$;
	\item	if $v$ has label $\nd$, then exactly one child of $v$ belongs to $\Y$
		(strictly speaking: there is a direction $d\in\set{\L,\R}$ such that a child in this direction, if exists, belongs to $\Y$,
		and the child in the opposite direction, if exists, does not belong to $\Y$);
	\item	$v$ does not have label $\nd_\bot$; and
	\item	if $v$ has label other than $\nd$, then all children of $v$ belong to $\Y$.
	\end{itemize}
	It is easy to write the above properties in \wmsouk.
	Then $\phi$ expresses that for every $n\in\Nat$ there exist sets $X_1,\dots,X_k$ of nodes of some $V\in\Ll(T)$ such that $|X_i|\geq n$
	and nodes of $X_i$ have label $a_i$, for all $i\in\interval{1,k}$;
	this is precisely the simultaneous unboundedness property with respect to the set $A=\set{a_1,\dots,a_k}$.
\end{proof}

In the remaining part of this section we prove that SUP with respect to $\set{\a,\b}$ cannot be expressed in \wmsou
(i.e., without using the \U quantifier for tuples of variables).
We prove this already for the word variant of SUP (cf.~\cref{remark1}), which is potentially easier to be expressed than SUP in its full generality.

Our proof is by contradiction.
Assume thus that there is a sentence $\phi_\SUP$ of \wmsou that holds exactly in those trees $T$ for which $\SUP_{\set{\a,\b}}(T)$ is true.
Having $\phi_\SUP$ fixed, we take a number $N$ such that $|\Pht\phi|\leq N$ and $|\Pht{\efin\X.\phi}|\leq N$ for all subformulae $\phi$ of $\phi_\SUP$
(recall that $\Pht\phi$ is a set containing all possible $\phi$-types).

Based on $N$, we now define two trees, $T_1$ and $T_2$, such that $\SUP_{\set{\a,\b}}(T_2)$ but not $\SUP_{\set{\a,\b}}(T_1)$,
and we show that they are indistinguishable by $\phi_\SUP$.
We achieve that by demonstrating their type equality as stated in \cref{T1T2typeequal}, which by \cref{prop:pht-2-form} gives their indistinguishability by the \wmsou sentence $\phi_\SUP$.

\begin{definition}[$T_1$ and $T_2$]\label{def:t1t2}
	\begin{figure}
		\centering
		\def\svgscale{0.5}\import{pics/}{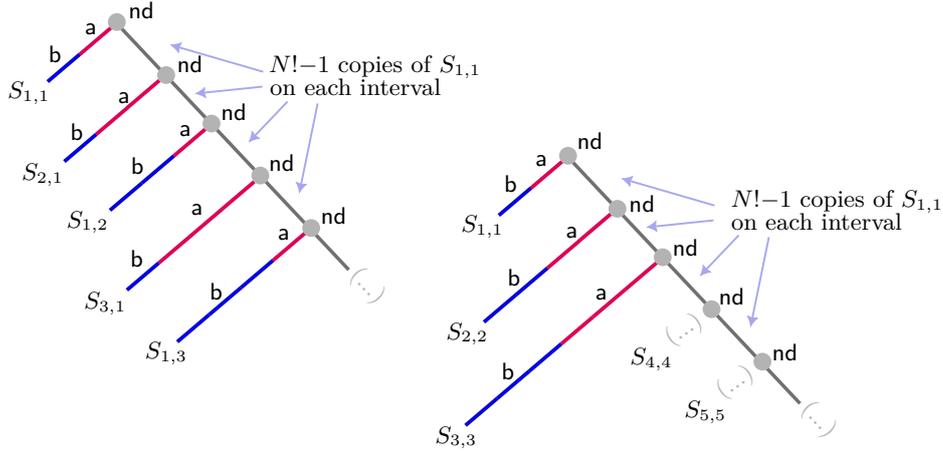}\vspace{-1ex}
		\caption{$T_1$ (left) and $T_2$ (right)}
		\label{fig:T1T2def}
	\end{figure}
	We define $T_1$ as a tree with an infinite rightmost path (that we call its \emph{trunk}), containing $\nd$-labeled nodes.
	For each integer $k \geq 0$, there is a leftward path called \emph{vault} attached to the $(kN!+1)$-th node of the trunk.
	If $k$ is even, we denote the vault as $S_{1, \frac k 2 + 1}$, and otherwise as $S_{\frac{k+1} 2 + 1, 1}$.
	Each vault $S_{m,n}$ consists of two parts: the upper sub-path of length $mN!$, where every node has label $\a$,
	and the lower sub-path of length $nN!$, where every node has label $\b$ (cf.~\cref{fig:T1T2def}).

	To each node of the trunk that does not have a vault attached we attach a copy of $S_{1,1}$.
 	Note that we do not call these copies vaults; only the original $S_{1,1}$ starting at the root of $T_1$ is a vault.

	The definition of $T_2$ is similar to that of $T_1$, except that this time the vault associated with each $k$ is $S_{k,k}$,
	still starting at depth $kN!+1$ and having $kN!$ nodes with label $\a$ followed by $kN!$ nodes with label $\b$.
\end{definition}

The technical core of our proof lies in the following two \lcnamecrefs{lem:T1-T2}:

\begin{restatable}{lemma}{lemTT}\label{lem:T1-T2}
	Let $\psi$ be such that $|\Pht{\efin\X.\psi}|\leq N$.
	If for all $k',\ell'\in\Nat$ we have $\pht{\psi}{T_1\restr\R^{k'N!}}{\varnothing}=\pht{\psi}{T_2\restr\R^{\ell' N!}}{\varnothing}$,
	then for all $k,\ell\in\Nat$ and $\tau\in\Pht\psi$ there exists a function $f\colon\Nat\to\Nat$ such that $\lim_{n\to\infty}f(n)=\infty$ and for all $n\in\Nat$,
	\begin{align*}
		\exists X_1\subseteq\dom(T_1\restr\R^{kN!}).\,|X_1|=n\land\pht{\psi}{T_1\restr\R^{kN!}}{\varnothing[\X\mapsto X_1]}=\tau&\\
		&\hspace{-20em}\Longrightarrow\qquad \exists X_2\subseteq\dom(T_2\restr\R^{\ell N!}).\,f(n)\leq|X_2|<\infty\land\pht{\psi}{T_2\restr\R^{\ell N!}}{\varnothing[\X\mapsto X_2]}=\tau.
	\end{align*}
\end{restatable}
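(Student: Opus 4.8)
The plan is to transfer a realizing set from $T_1$ to $T_2$ while controlling its size, and the argument rests on two preliminary moves. First, it suffices to prove the qualitative statement that whenever $\tau$ is realized in $T_1\restr\R^{kN!}$ by sets of unbounded cardinality, the same holds in $T_2\restr\R^{\ell N!}$ (together with the trivial transfer of nonemptiness). From this the required $f$ with $\lim_{n\to\infty}f(n)=\infty$ is read off: if the set of cardinalities achievable in $T_2$ is unbounded one simply takes $f=\mathrm{id}$, while if it is bounded the implication forces the $T_1$-side cardinalities to be bounded too, so only finitely many $n$ constrain $f$. Second, I will work throughout through \cref{lem:compositionality}: both trees consist of an infinite rightmost trunk of $\nd$-labeled nodes with a subtree $S_{p,q}$ (a vault, or an $S_{1,1}$ filler) hanging to the left of each trunk node, so $\pht{\psi}{\cdot}{\cdot}$ is computed as a right fold up the trunk, the contribution of trunk node $j$ being a transformation of $\Pht\psi$ that depends only on the vault-type in its left subtree and on whether node $j$ is selected.

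The engine of the proof is a periodicity argument made possible by the scaling by $N!$. Since $|\Pht{\efin\X.\psi}|\le N$ forces $|\Pht\psi|\le\log_2 N\le N$, every transformation of $\Pht\psi$ induced by a uniform run (consecutive nodes of a single label, all selected or all unselected) has tail and period bounded by $N$; as $N!$ is divisible by every integer up to $N$, the transformation induced by $N!$ such nodes is idempotent, and one may freely insert or delete blocks of $N!$ identically-treated nodes of a run without changing the resulting type. Two consequences drive the construction: (i) every vault $S_{p,q}$ with $p,q\ge 1$ realizes exactly the same set of vault-types (grow or shrink the all-$\a$ and all-$\b$ parts by multiples of $N!$), so any vault-type realized somewhere in $T_1$ can be realized in the subtree hanging at \emph{any} chosen trunk position of $T_2$; and (ii) a vault-type realized with a selected run of length $\ge N!$ can be realized with an arbitrarily large selection, provided the hosting vault is long enough, by pumping that run.

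The construction then proceeds as follows. Starting from an unbounded family of sets realizing $\tau$ in $T_1\restr\R^{kN!}$, I apply a pigeonhole/Ramsey argument to isolate a single pumpable mechanism causing the unboundedness: after restricting to an infinite subfamily with a fixed \emph{characteristic} (in the spirit of the one used in the proof of \cref{lem:compositionality}), the growth localises either as an arbitrarily long selected run inside one vault, or as the repetition of a fixed lightly-selected trunk pattern whose trunk transformation is idempotent. I then reproduce this mechanism in $T_2\restr\R^{\ell N!}$, but relocated to an arbitrarily deep trunk position $j^\ast=i^\ast N!$. The key point is that, as $i^\ast\to\infty$, the accumulated transformation over the empty prefix $0,\dots,j^\ast-1$ and the $\varnothing$-type of the tail below $j^\ast$ both stabilise (again by idempotency at the $N!$-scale), and, crucially, their stabilised values coincide in $T_1$ and $T_2$ because by hypothesis the types $\pht{\psi}{T_1\restr\R^{k'N!}}{\varnothing}$ and $\pht{\psi}{T_2\restr\R^{\ell'N!}}{\varnothing}$ agree at all $N!$-positions. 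Hence placing, at the deep position $j^\ast$, a large selection realizing the vault-type produced by the mechanism yields overall type $\tau$ while routing the bulk of $X_2$ into the huge balanced vault $S_{i^\ast+\ell,\,i^\ast+\ell}$, so $|X_2|$ grows with $i^\ast$ and the bound $|X_2|\ge f(|X_1|)$ follows.

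The main obstacle is exactly this quantitative step: securing $\lim f=\infty$ requires showing that largeness of $X_1$ always stems from a mechanism that $T_2$ can not only imitate but \emph{inflate}, and here the asymmetry of the trees must be used in the favourable direction. A vault of $T_1$ is unbalanced and can host only a long $\a$-run or a long $\b$-run, whereas the balanced vaults $S_{i,i}$ of $T_2$ host both; thus every pumpable run available in $T_1$ is available, and can be made arbitrarily long, in a sufficiently deep vault of $T_2$. Reconciling the ``concentrated in one vault'' and ``spread over many vaults'' regimes under a single $f$, and verifying that relocation to depth $j^\ast$ preserves the fold exactly — including the delicate $I=\emptyset$ boundary and the $S_{1,1}$ fillers attached to vault-less trunk nodes — is where the genuine bookkeeping lies.
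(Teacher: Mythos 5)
Your opening reduction (it suffices that realizability of $\tau$ and \emph{unbounded} realizability of $\tau$ both transfer from $T_1\restr\R^{kN!}$ to $T_2\restr\R^{\ell N!}$, after which $f$ is read off by a case distinction) is valid, and your pumping engine --- idempotency of the $N!$-fold transformation induced on $\Pht\psi$ and on $\Pp(\Pht\psi)$ --- is a legitimate substitute for the paper's device of finding two levels of equal type within distance $N$ and replicating the fragment between them (\cref{MainLemmaForULeftLemma5}). The gap is in the core step: the extraction and relocation of ``a single pumpable mechanism causing the unboundedness'' is asserted, not proved, and it is exactly where the difficulty of the lemma sits. Concretely: (a) since every fixed vault of $T_1$ is finite, growth of $|X_1^{(n)}|$ necessarily migrates into ever deeper, $n$-dependent parts of the tree, and the sets need not contain long selected runs nor repeat any fixed trunk pattern, so your pigeonhole/Ramsey step producing a clean inflatable witness is not obviously available; (b) your relocation computes the type of $T_2$ as a stabilised \emph{empty-selected} prefix transformation applied to an inflated vault type at depth $j^\ast$ and the tail $\varnothing$-type, which silently discards the rest of $X_1$ --- the non-growing elements scattered over other positions also contribute to $\tau$ and must be reproduced in $T_2$ at matching positions; and (c) the ``trivial transfer of nonemptiness'' you invoke is essentially the equality of $\efin\X.\psi$-types of the two trees, which is not trivial and already requires the full transfer machinery.

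The paper avoids all of this by never isolating a mechanism: it transfers $X_1$ to $X_2$ positionally. The trunks and the $S_{1,1}$ fillers of $T_1$ and $T_2$ coincide, so non-vault elements are copied verbatim; the content of each vault $S_{1,i}$ or $S_{i,1}$ of $T_1$ is converted into the corresponding vault $S_{j,j}$ of $T_2$ by stretching (\cref{stretch-path}), which loses no elements whenever the target part is at least as long. Since the vaults of $T_2$ outgrow those of $T_1$ after the first $\max(0,k-2\ell)$ positions, the total possible loss is a constant $c_{k,\ell}$, and $f(n)=n-c_{k,\ell}$ works with no case analysis on where the growth of $X_1$ lives. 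To salvage your route you would have to actually prove the localisation/normal-form claim and account for the background part of $X_1$ during relocation; as written, the proposal does not establish the lemma.
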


\begin{restatable}{lemma}{lemTTT}\label{lem:T2-T1}
	Let $\psi$ be such that $|\Pht{\efin\X.\psi}|\leq N$.
	If for all $k',\ell'\in\Nat$ we have $\pht{\psi}{T_1\restr\R^{k'N!}}{\varnothing}=\pht{\psi}{T_2\restr\R^{\ell'N!}}{\varnothing}$,
	then for all $k,\ell\in\Nat$ and $\tau\in\Pht\psi$ there exists a function $f\colon\Nat\to\Nat$ such that $\lim_{n\to\infty}f(n)=\infty$ and for all $n\in\Nat$,
	\begin{align*}
		\exists X_1\subseteq\dom(T_1\restr\R^{kN!}).\,f(n)\leq|X_1|<\infty\land\pht{\psi}{T_1\restr\R^{kN!}}{\varnothing[\X\mapsto X_1]} = \tau&\\
		&\hspace{-20em}\Longleftarrow\qquad \exists X_2\subseteq\dom(T_2\restr\R^{\ell N!}).\,|X_2|=n\land\pht{\psi}{T_2\restr\R^{\ell N!}}{\varnothing[\X\mapsto X_2]} = \tau.
	\end{align*}
\end{restatable}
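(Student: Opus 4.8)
The plan is to prove this lemma by reduction to Lemma~\ref{lem:T1-T2}, exploiting the near-symmetry between the two statements. Both lemmata assume the same hypothesis (that $\pht{\psi}{T_1\restr\R^{k'N!}}{\varnothing}=\pht{\psi}{T_2\restr\R^{\ell'N!}}{\varnothing}$ for all $k',\ell'$), and both conclude an implication between the existence of a set of prescribed size in one tree and a set of unboundedly-growing size in the other. The only difference is the direction: Lemma~\ref{lem:T1-T2} transports a set of \emph{exact} size $n$ from $T_1$ to a set of size $\geq f(n)$ in $T_2$, whereas Lemma~\ref{lem:T2-T1} transports a set of exact size $n$ from $T_2$ to a set of size $\geq f(n)$ in $T_1$. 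So I expect the proof of Lemma~\ref{lem:T2-T1} to be almost verbatim the proof of Lemma~\ref{lem:T1-T2} with the roles of $T_1$ and $T_2$ interchanged, and I would first check whether the argument used for the former is genuinely symmetric in $T_1$ and $T_2$, or whether it relies on some structural feature specific to $T_1$.

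If the argument is symmetric, I would simply state that the proof is analogous and point to the swapped roles. But I suspect it is \emph{not} fully symmetric, because $T_1$ and $T_2$ differ structurally: in $T_1$ the vaults $S_{m,n}$ have one coordinate fixed to $1$ (so one of the $\a$- or $\b$-sub-paths always has the minimal length $N!$), whereas in $T_2$ the vaults $S_{k,k}$ grow in both coordinates simultaneously. This asymmetry is exactly what makes $\SUP_{\set{\a,\b}}(T_2)$ hold while $\SUP_{\set{\a,\b}}(T_1)$ fails, so the type-transport argument in each direction must handle the available ``room for growth'' differently. Concretely, to produce a \emph{large} set in $T_1$ (as Lemma~\ref{lem:T2-T1} requires) one can only exploit unboundedness along the trunk or along the unboundedly-long $\a$-sub-paths, since the $\b$-sub-paths attached to each vault are bounded; the function $f$ must be built accordingly.

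The core technical step, in either direction, will be the following: given a witnessing set $X_2$ of size $n$ in $T_2\restr\R^{\ell N!}$ realizing the type $\tau$, decompose it according to where its nodes fall (on the trunk, or inside particular vaults), use the type-equality hypothesis together with the compositionality of types (Proposition~\ref{lem:compositionality}) to find, vault by vault and along the trunk, matching fragments of a set $X_1$ in $T_1$ that realize the same local types, and then use a pigeonhole/Ramsey-style argument over the bounded type space (the bound $N$ on $|\Pht{\efin\X.\psi}|$ is what makes the finitely-many-types counting work) to guarantee that as $n\to\infty$ the reassembled set $X_1$ can be taken arbitrarily large, yielding $f(n)\to\infty$. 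The main obstacle I anticipate is precisely the construction of $f$ and the verification that $\lim_{n\to\infty}f(n)=\infty$: one must argue that a large $X_2$ cannot be realized using only a \emph{bounded} amount of material in $T_1$, which requires pinning down how the type $\tau$ forces nodes to be distributed across the trunk segments of length $N!$ and leveraging the factorial spacing $kN!$ together with the divisibility properties that make the $N!$-step shifts type-preserving. Establishing that the transported set is genuinely finite (the $|X_1|<\infty$ clause) should be routine once the decomposition is set up, but ensuring unbounded growth is where the real care lies.
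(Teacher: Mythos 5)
Your overall framework (decompose $X_2$ according to where its nodes fall, transport it piece by piece using compositionality and a pigeonhole argument over the type space of size at most $N$) matches the paper's, and you are right that the argument is not symmetric to that of \cref{lem:T1-T2}. But there are two genuine gaps. First, you misread the structure of $T_1$: its vaults alternate between $S_{j,1}$ (long $\a$-part, short $\b$-part) and $S_{1,j}$ (short $\a$-part, long $\b$-part), so $T_1$ does contain unboundedly long $\b$-sub-paths --- just never in the same vault as a long $\a$-sub-path. Under your reading (``the $\b$-sub-paths attached to each vault are bounded''), a set $X_2$ consisting almost entirely of $\b$-labeled vault nodes could not be transported to a large $X_1$ at all, and the lemma would be false; so the plan as stated cannot be completed.

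Second, and more importantly, you explicitly defer the construction of $f$ (``ensuring unbounded growth is where the real care lies'') without supplying the idea that makes it work. The paper's proof is a three-way majority argument: (i) if at least half of $X_2$ lies outside the vaults, those elements transfer unchanged (the vault parts are rebuilt with no size guarantee, exactly as in \cref{lem:T1-T2}), giving $|X_1|\geq|X_2|/2$; (ii) otherwise, if $\a$ is the majority label among the vault elements of $X_2$, map each vault $S_{i,i}$ of $T_2$ onto $S_{i,1}$ of $T_1$, contracting only the $\b$-part, so that at least $|X_2|/4$ elements, minus a constant depending only on $k$ and $\ell$, survive; (iii) symmetrically for $\b$, mapping $S_{i,i}$ onto $S_{1,i}$. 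One then takes $f=\min(f_1,f_2,f_3)$. Cases (ii) and (iii) additionally require re-synchronizing the trunk, since consecutive vaults $S_{i,i}$ in $T_2$ are $N!$ apart while $S_{i,1}$ and $S_{i+1,1}$ in $T_1$ are $2N!$ apart with an extra vault $S_{1,i}$ in between; this uses the trunk-pumping lemma together with the fact that all paths $S_{m,n}$ share the same $\efin\X.\psi$-type. None of these steps appears in your plan, and they are the substance of the proof rather than routine verification.
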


Note that the function $f$ in \cref{lem:T1-T2,lem:T2-T1} may depend on $k$ and $\ell$.
We only sketch here the proof of the above \lcnamecrefs{lem:T2-T1}; a full proof can be found in \cref{app:T1-T2}.

\cref{lem:T1-T2} is slightly easier.
Indeed, suppose first that $k=\ell=0$.
By assumption, in $T_1$ we have a finite set of nodes $X_1$ resulting in a $\psi$-type $\tau$;
based on $X_1$, we have to produce a finite set of nodes $X_2$ in $T_2$, producing the same $\psi$-type $\tau$.
The non-vault nodes of $X_1$ are transferred to $X_2$ without any change; note that the trees $T_1,T_2$ are identical outside of vaults.
When in $T_1$ we have some vault $S_{1,i}$ (or $S_{i,1}$, handled in the same way),
then in the analogous place of $T_2$ we have a vault $S_{j,j}$ with $j\geq i$.
We use a form of pumping to convert $S_{1,i}$ with some nodes marked as elements of $X_1$ into $S_{j,j}$ with marked nodes, which we take to $X_2$;
this is done so that the $\psi$-type does not change.
Namely, we concentrate on $\psi$-types of subtrees of $S_{1,i}$ starting on different levels.
Already in the bottom, $\b$-labeled part of $S_{1,i}$ we can find two levels in distance at most $N$, where the $\psi$-type repeats
(by the pigeonhole principle; recall that the number of possible $\psi$-types is at most $N$).
We then repeat the fragment of $S_{1,i}$ between these two places (together with the set elements marked in it), so that $(j-i)N!$ new nodes are created, and we obtain $S_{1,j}$.
Note that the repeated length, being at most $N$, necessarily divides $N!$.
Because of \cref{lem:compositionality}, such a pumping does not change the $\psi$-type.
In a similar way, we can pump the upper, $\a$-labeled part of $S_{1,j}$, and obtain $S_{j,j}$.
In this way, we convert a finite top part of $T_1$ (with a set $X_1$) into $T_2$ (with a set $X_2$) without changing the $\psi$-type;
the infinite parts located below (where the sets $X_1,X_2$ do not contain any elements) have the same $\psi$-type by the assumption
$\pht{\psi}{T_1\restr\R^{k'N!}}{\varnothing}=\pht{\psi}{T_2\restr\R^{\ell'N!}}{\varnothing}$.
All nodes originally in $X_1$ remained in $X_2$ (possibly shifted), so we have $|X_2|\geq|X_1|$; the \lcnamecref{lem:T1-T2} holds with $f(n)=n$ in this case.

When $k,\ell$ are arbitrary (and we want to change $T_1\restr\R^{kN!}$ into $T_2\restr\R^{\ell N!}$), we proceed in a similar way,
but there is a potential problem that a vault $S_{1,i}$ (or $S_{i,1}$) should be mapped to $S_{j,j}$ with $j<i$;
then we should not stretch the vault, but rather contract it.
But contracting is also possible:
this time we look on $\efin\X.\psi$-types (instead of $\psi$-types) on the shorter target vault $S_{1,j}$;
we can pump the vault as previously, so $S_{1,i}$ and $S_{1,j}$ have the same $\efin\X.\psi$-type.
Because $\efin\X.\psi$-type is a set of all possible $\psi$-types,
we can choose elements of $S_{1,j}$ (and later of $S_{j,j}$) to $X_2$, so that the $\psi$-type is the same as originally in $S_{1,i}$,
although without any guarantees on the size of the new set.
Anyway, the length of vaults in $T_2$ grows two times faster than in $T_1$, so the above problem concerns only the first $\max(0,k-2\ell)$ vaults,
where the number of elements of $X_1$ is bounded by a constant $c_{k,\ell}$ (depending on $k$ and $\ell$).
All further elements of $X_1$ contribute to the size of $X_2$; the \lcnamecref{lem:T1-T2} holds with $f(n)=n-c_{k,\ell}$.

Consider now \cref{lem:T2-T1}, where we have to create a set $X_1$ in $T_1$ based on a set $X_2$ in $T_2$.
There are two cases.
Suppose first that at least half of elements of $X_2$ lie outside of the vaults.
In this case we proceed as previously, appropriately stretching and/or contracting the vaults.
While there is no size guarantee for vault elements, already by counting elements outside of the vaults we obtain $|X_2|\geq\frac{|X_1|}{2}$.

In the opposite case, we check which label is more frequent among the (at least $\frac{|X_2|}{2}$) vault elements of $X_2$.
Suppose this is $\a$ (the case of $\b$ is analogous), and that $k=\ell=0$.
We then map every vault $S_{i,i}$ into $S_{i,1}$, contracting only the $\b$-labeled part; all the $\a$-labeled vault elements of $X_2$ remain in $X_1$.
Because the distance between $S_{i,i}$ and $S_{i+1,i+1}$ in $T_2$ is $N!$, while the distance between $S_{i,1}$ and $S_{i+1,1}$ in $T_1$ is $2N!$,
we also need to stretch the trunk, which is possible using a similar pumping argument
(and we stretch some $S_{1,1}$ into the vault $S_{1,i}$ that should be in the middle between $S_{i,1}$ and $S_{i+1,1}$).

This is almost the end, except that we need to handle arbitrary $k,\ell$.
To this end, we either stretch the initial fragment of the trunk of length $N!$ into multiple such fragments,
or we contract the initial fragment of appropriate length into a fragment of length $N!$, so that the vault lengths become synchronized.

Having \cref{lem:T1-T2,lem:T2-T1} we can conclude that the trees $T_1$ and $T_2$ (cf.\@ \cref{def:t1t2}) have the same types:

\begin{lemma}\label{T1T2typeequal}
	Let $\phi$ be a subformula of $\phi_\SUP$.
	Then for all $k,\ell\in\Nat$ we have $\pht{\phi}{T_1\restr\R^{kN!}}{\varnothing} = \pht{\phi}{T_2\restr\R^{\ell N!}}{\varnothing}$.
\end{lemma}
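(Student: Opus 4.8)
The plan is to proceed by induction on the structure of $\phi$, with the key observation that the statement must be proven \emph{simultaneously} for all subformulae, so that when we reach a $\U$-quantifier or an $\efin$-quantifier we may invoke \cref{lem:T1-T2,lem:T2-T1} applied to the immediate subformula. The atomic cases ($a(\X)$, $\X\child_d\Y$, $\X\subseteq\Y$ evaluated under $\varnothing$) are immediate, since under the empty valuation these types take fixed values independent of the tree. The Boolean cases are equally direct: negation preserves types verbatim, and conjunction yields a pair of types that agree componentwise by the induction hypothesis applied to the two conjuncts. Thus the only substantive cases are the quantifiers.

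For the $\efin$ case, $\phi=\efin\X.\psi$, recall that the type $\pht{\phi}{T}{\varnothing}$ is precisely the \emph{set} of $\psi$-types $\set{\sigma\mid\exists X.\,\pht{\psi}{T}{\varnothing[\X\mapsto X]}=\sigma}$. To show $\pht{\phi}{T_1\restr\R^{kN!}}{\varnothing}=\pht{\phi}{T_2\restr\R^{\ell N!}}{\varnothing}$, I must verify that the two sets of achievable $\psi$-types coincide. The induction hypothesis for $\psi$ gives me the type equalities $\pht{\psi}{T_1\restr\R^{k'N!}}{\varnothing}=\pht{\psi}{T_2\restr\R^{\ell'N!}}{\varnothing}$ that serve as the hypotheses of \cref{lem:T1-T2,lem:T2-T1}. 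Applying \cref{lem:T1-T2} with the given $k,\ell$ and each target type $\tau$: whenever some finite $X_1$ in $T_1\restr\R^{kN!}$ realises $\tau$, the lemma produces a finite $X_2$ in $T_2\restr\R^{\ell N!}$ realising the same $\tau$ (the size bound $f(n)$ is irrelevant here — we only need \emph{some} witness, and finiteness is guaranteed). This gives the left-to-right inclusion of the type-sets; \cref{lem:T2-T1} gives the reverse inclusion. Hence the two $\efin$-types are equal.

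For the $\U$ case, $\phi=\U(\X_1,\dots,\X_k).\psi$, the type is a tuple $(\rho_I)_{I\subseteq\interval{1,k}}$ where each $\rho_I$ collects those $\psi$-types $\sigma$ for which arbitrarily large witnesses exist with every $X_i$, $i\in I$, simultaneously of size at least $n$. Here the size-growth in \cref{lem:T1-T2,lem:T2-T1} becomes essential, not merely finiteness. The point is that the separating property SUP requires joint unboundedness, and the lemmata are engineered so that when a sequence of witnesses in one tree has sizes tending to infinity, the transferred witnesses in the other tree also have sizes tending to infinity, since $\lim_{n\to\infty}f(n)=\infty$. However, the proof as stated uses the $\U$-quantifier only on \emph{single} variables (the lemmata quantify over one set $\X$), so I expect the intended reading is that for the separation it suffices to treat $\phi_\SUP\in\wmsou$, where every $\U$ binds a single variable; thus $k=1$ and the tuple has just the coordinates $I=\emptyset$ and $I=\set{1}$. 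The coordinate $I=\emptyset$ reduces to the $\efin$ case already handled. For $I=\set{1}$, I transfer an unbounded sequence of witnesses in $T_1\restr\R^{kN!}$ for a fixed $\tau$ to an unbounded sequence in $T_2\restr\R^{\ell N!}$ using the $f$ from \cref{lem:T1-T2}, and symmetrically via \cref{lem:T2-T1}.

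The main obstacle, and the step demanding the most care, is ensuring that the hypotheses of \cref{lem:T1-T2,lem:T2-T1} — namely the \emph{universally} quantified type equalities over all $k',\ell'$ — are available at the moment the lemmata are invoked. This is exactly what the simultaneous induction delivers: the induction hypothesis applied to the subformula $\psi$ asserts $\pht{\psi}{T_1\restr\R^{k'N!}}{\varnothing}=\pht{\psi}{T_2\restr\R^{\ell'N!}}{\varnothing}$ for \emph{all} $k',\ell'$, precisely matching the lemma hypothesis. The subtlety is that the lemmata are stated for formulae $\psi$ with $|\Pht{\efin\X.\psi}|\leq N$; I must confirm that every relevant subformula satisfies this, which holds by the choice of $N$ at the outset (it bounds $|\Pht{\efin\X.\phi}|$ for all subformulae $\phi$ of $\phi_\SUP$). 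With this bookkeeping in place, the induction closes and yields the type equality at the top level $\phi=\phi_\SUP$, which by \cref{prop:pht-2-form} forces $T_1$ and $T_2$ to be indistinguishable by $\phi_\SUP$ — contradicting $\SUP_{\set{\a,\b}}(T_2)\land\neg\SUP_{\set{\a,\b}}(T_1)$.
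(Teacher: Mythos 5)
Your proof is correct and follows essentially the same route as the paper's: structural induction with trivial atomic and Boolean cases, Lemmata~\ref{lem:T1-T2} and~\ref{lem:T2-T1} supplying the two inclusions in the $\efin$ case (where only finiteness of the witness matters) and the unboundedness of $f$ handling the second coordinate of the $\U\X.\psi$-type, with the observation that $\phi_\SUP\in\wmsou$ restricts $\U$ to single variables. Your explicit check that the induction hypothesis for $\psi$ over all $k',\ell'$ is exactly the hypothesis required by the two lemmata, and that the bound $|\Pht{\efin\X.\psi}|\leq N$ holds by the choice of $N$, is left implicit in the paper but is the same bookkeeping.
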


\begin{proof}
	We proceed by induction on $\phi$, considering all possible forms of the formula.
	First, note that we only consider the valuation $\varnothing$, mapping all variables to the empty set, so for atomic formulae of the form $a(\X)$ or $\X\subseteq\Y$
	the $\phi$-type is always $\true$, and for $\X\child_d\Y$ the $\phi$-type is always $\mathsf{empty}$.
	For $\phi=\psi_1\land\psi_2$ the $\phi$-type is just the pair containing the $\psi_1$-type and the $\psi_2$-type;
	for them we have the equality $\pht{\psi_i}{T_1\restr\R^{kN!}}{\varnothing} = \pht{\psi_i}{T_2\restr\R^{\ell N!}}{\varnothing}$ by the induction hypothesis.
	Likewise, for $\phi=\neg\psi$ the $\phi$-type equals the $\psi$-type, and we immediately conclude by the induction hypothesis
	$\pht{\psi}{T_1\restr\R^{kN!}}{\varnothing} = \pht{\psi}{T_2\restr\R^{\ell N!}}{\varnothing}$.
	
	Suppose now that $\phi=\efin\X.\psi$.
	Then the $\phi$-type of $T_1\restr\R^{kN!}$ is the set of $\psi$-types $\pht{\psi}{T_1\restr\R^{kN!}}{\varnothing[\X\mapsto X_1]}$
	over all possible finite sets $X_1\subseteq\dom(T_1\restr\R^{kN!})$, and likewise for $T_2$.
	By \cref{lem:T1-T2}, for every $\psi$-type of $T_1\restr\R^{kN!}$ there exists a set $X_2$
	giving the same $\psi$-type for $T_2\restr\R^{\ell N!}$, and conversely by \cref{lem:T2-T1},
	so the two $\phi$-types are equal (recall that $N$ was chosen such that $|\Pht{\efin\X.\psi}|\leq N$ whenever $\psi$ is a subformula of $\phi_\SUP$,
	hence the two \lcnamecrefs{lem:T1-T2} can indeed be applied).
	
	Finally, suppose that $\phi=\U\X.\psi$.
	Then the $\phi$-type consists of two coordinates.
	On the first coordinate we simply have the $\efin\X.\psi$-type---these types are equal for $T_1\restr\R^{kN!}$ and $T_2\restr\R^{\ell N!}$ by the previous case.
	On the second coordinate we have the set of $\psi$-types $\tau$ such that
	$\pht{\psi}{T_1\restr\R^{kN!}}{\varnothing[\X\mapsto X_1]}=\tau$ for arbitrarily large finite sets $X_1$, and likewise for $X_2$.
	But $\pht{\psi}{T_1\restr\R^{kN!}}{\varnothing[\X\mapsto X_1]}=\tau$ for arbitrarily large finite sets $X_1$ if and only if
	$\pht{\psi}{T_2\restr\R^{\ell N!}}{\varnothing[\X\mapsto X_2]}=\tau$ for arbitrarily large finite sets $X_2$, by \cref{lem:T1-T2,lem:T2-T1}.
	This gives us equality of the two $\phi$-types.
	
	Recall that by assumption $\phi_\SUP$ is a formula of \wmsou, without quantification over tuples, so the above exhausts all possible cases.
\end{proof}

\cref{T1T2typeequal} implies in particular that $\pht{\phi_\SUP}{T_1}{\varnothing}=\pht{\phi_\SUP}{T_2}{\varnothing}$,
which by \cref{prop:pht-2-form} means that $\phi_\SUP$ is satisfied in $T_1$ if and only if it is satisfied in $T_2$.
This way we reach a contradiction with the fact that $\phi_\SUP$ should be true in $T_1$, but not in $T_2$.
Thus, the simultaneous unboundedness property for two letters cannot be expressed by a formula $\phi_\SUP$ not involving the \U quantifiers for tuples of variables;
we obtain \cref{thm:expressivity}.

\begin{remark}\label{remark3}
	We have shown that SUP with respect to a two-element set $\set{\a,\b}$ cannot be expressed without quantification over pairs of variables.
	It is easy to believe that using a very similar proof one can show that
	SUP with respect to a $k$-element set cannot be expressed without quantification over $k$-tuples of variables, for every $k\geq 2$.
\end{remark}

\bibliographystyle{plainurl}
\bibliography{bib}
\newpage\appendix





\section{Proof of Lemmata \ref{lem:T1-T2} and \ref{lem:T2-T1}}\label{app:T1-T2}

Before starting the actual proof, we have a sequence of \lcnamecrefs{exlemma1} providing equality of types in some situations.
The first \lcnamecref{exlemma1} is an immediate consequence of \cref{lem:compositionality}:

\begin{lemma}\label{exlemma1}
	Let $U_1,U_2$ be trees, and let $\nu_1,\nu_2$ be valuations.
	If $\epsilon \in \nu_1(\X) \iff \epsilon \in \nu_2(\X)$ for all variables $\X$, and
	$\pht{\phi}{U_1\restr\L}{\nu_1 \restr\L} = \pht{\phi}{U_2\restr\L}{\nu_2\restr\L}$, and $\pht{\phi}{U_1\restr\R}{\nu_1 \restr\R} = \pht{\phi}{U_2\restr\R}{\nu_2 \restr\R}$,
	then $\pht{\phi}{U_1}{\nu_1} = \pht{\phi}{U_2}{\nu_2}$.
\end{lemma}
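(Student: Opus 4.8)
The plan is to read the conclusion straight off the compositionality of types, Proposition~\ref{lem:compositionality}, so that the ``proof'' is essentially a one-step unwinding. First I would fix $a$ to be the common root label of $U_1$ and $U_2$ (this matching of root labels is exactly the hypothesis the argument needs, and it is guaranteed in every situation where the \lcnamecref{exlemma1} is invoked, since the roots of the two fragments being pumped always carry the same letter). Then I instantiate \cref{eq:compos} once for the pair $U_1,\nu_1$ and once for the pair $U_2,\nu_2$, obtaining
\begin{align*}
	\pht{\phi}{U_1}{\nu_1}&=\Comp_{a,\phi}(\set{\X\mid\epsilon\in\nu_1(\X)},\pht{\phi}{U_1\restr\L}{\nu_1\restr\L},\pht{\phi}{U_1\restr\R}{\nu_1\restr\R})\,,\\
	\pht{\phi}{U_2}{\nu_2}&=\Comp_{a,\phi}(\set{\X\mid\epsilon\in\nu_2(\X)},\pht{\phi}{U_2\restr\L}{\nu_2\restr\L},\pht{\phi}{U_2\restr\R}{\nu_2\restr\R})\,.
\end{align*}

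Next I would observe that the three arguments fed to $\Comp_{a,\phi}$ coincide across the two lines: the first sets are equal because $\epsilon\in\nu_1(\X)\iff\epsilon\in\nu_2(\X)$ for every variable, so $\set{\X\mid\epsilon\in\nu_1(\X)}=\set{\X\mid\epsilon\in\nu_2(\X)}$; the second and third arguments agree by the two subtree hypotheses on $\restr\L$ and $\restr\R$. Since $\Comp_{a,\phi}$ is a function, it returns the same value on identical inputs, giving $\pht{\phi}{U_1}{\nu_1}=\pht{\phi}{U_2}{\nu_2}$, as required.

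I do not expect a genuine obstacle, since the whole content is this single application of compositionality; the only points requiring a word of care are edge cases. If one of $U_1,U_2$ is the empty tree, \cref{eq:compos} does not apply literally (there is no root label), but then that tree forces its valuation to be $\varnothing$ and both of its subtrees to be empty, so the subtree-type hypotheses pin down the two types to the common value $\pht{\phi}{\bot}{\varnothing}$ directly. The one assumption I would make explicit is that $U_1$ and $U_2$ share a root label: without it the claim already fails for $\phi=a(\X)$ with $\epsilon\in\nu_1(\X)$, since then $a(\X)$ could hold at the root of one tree but not the other. Under that (always satisfied) proviso, the \lcnamecref{exlemma1} is immediate.
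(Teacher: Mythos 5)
Your proof is correct and takes essentially the same approach as the paper, which gives no further argument and simply declares the lemma an immediate consequence of \cref{lem:compositionality}. Your observation that the statement tacitly requires $U_1$ and $U_2$ to share a root label is a valid catch: the hypothesis is omitted from the lemma as written (and is needed, as your $a(\X)$ counterexample shows), but it is satisfied in every place the lemma is invoked.
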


Next, we have a form of pumping:

\begin{figure}
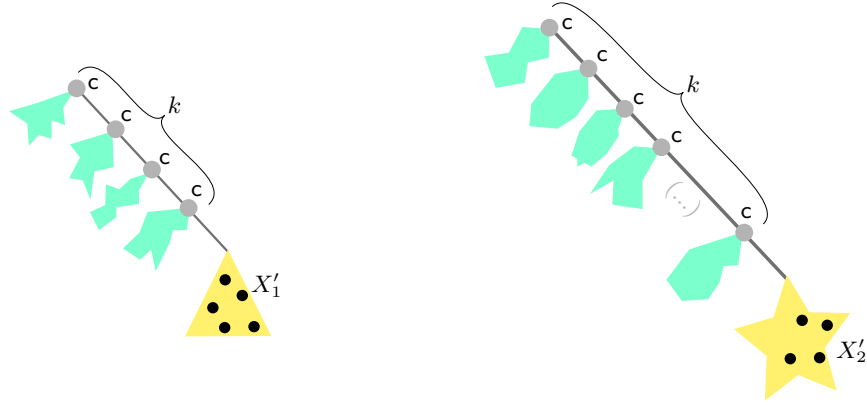

	\centering
	\begin{minipage}[c]{0.49\textwidth}
		\centering
		\def\svgscale{0.5}\import{pics/}{lemma5Left.pdf_tex_ok}\vspace{-1ex}
	\end{minipage}
	\begin{minipage}[c]{0.49\textwidth}
		\centering
		\def\svgscale{0.5}\import{pics/}{lemma5Right.pdf_tex_ok}\vspace{-1ex}
	\end{minipage}
	\caption{Trees $U_1$ (left) and $U_2$ (right) from \cref{MainLemmaForULeftLemma5}.
		All green subtrees have the same $\efin\X.\psi$-type.
		The two yellow subtrees, along with the sets $X_1',X_2'$ in them, have the same $\psi$-type.}
	\label{fig:lemma5}
\end{figure}

\begin{lemma}\label{MainLemmaForULeftLemma5}
	Let $\psi$ be such that $|\Pht{\efin\X.\psi}|\leq N$.
	Let $U_1$ and $U_2$ be trees and let $X_1',X_2'$ be finite sets of nodes such that, for some $k\geq N$ (cf.~\cref{fig:lemma5}),
	\begin{itemize}
	\item	nodes $\R^{i-1}$ of $U_1$ for $i\in[1,k]$ and nodes $\R^{i-1}$ of $U_2$ for $i\in[1,k+N!]$ have all the same label (these two paths are called the \emph{trunks}),
	\item	subtrees $U_1\restr{\R^{i-1}\L}$ for $i\in[1,k]$ and subtrees $U_2\restr{\R^{i-1}\L}$ for $i\in[1,k+N!]$ (i.e., to the left of the trunks) have all the same $\efin\X.\psi$-type, and
	\item	$\pht{\psi}{U_1\restr{\R^k}}{\varnothing[\X \mapsto X'_1]} = \pht{\psi}{U_2\restr{\R^{k+N!}}}{\varnothing[\X \mapsto X'_2]}$.
	\end{itemize}
	Then
	\begin{enumerate}
	\item	for every finite set $X_1\subseteq\dom(U_1)$ such that $X_1\restr{\R^k}=X'_1$,
		there exists a finite set $X_2\subseteq\dom(U_2)$ such that $X_2\restr{\R^{k+N!}}=X'_2$,
		and $\pht{\psi}{U_1}{\varnothing[\X\mapsto X_1]}=\pht{\psi}{U_2}{\varnothing[\X\mapsto X_2]}$, 
		and $|X_1\cap\set{\R^{i-1}\mid i\in[1,k]}|\leq|X_2\cap\set{\R^{i-1}\mid i\in[1,k+N!]}|$,
		and, conversely
	\item	for every finite set $X_2\subseteq\dom(U_2)$ such that $X_2\restr{\R^{k+N!}}=X'_2$,
		there exists a finite set $X_1\subseteq\dom(U_1)$ such that $X_1\restr{\R^k}=X'_1$ 
		and $\pht{\psi}{U_1}{\varnothing[\X\mapsto X_1]}=\pht{\psi}{U_2}{\varnothing[\X\mapsto X_2]}$.
	\end{enumerate}
\end{lemma}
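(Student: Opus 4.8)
The plan is to prove both parts by a pumping argument along the trunks, governed by iterated compositionality (\cref{lem:compositionality}). Throughout, I exploit that all left subtrees share the single $\efin\X.\psi$-type $T_\L$: by definition of the $\efin\X.\psi$-type, this means that the set of $\psi$-types realizable by any one of these left subtrees under some finite marking is exactly $T_\L$, the same set for every trunk position of $U_1$ and of $U_2$. Since moreover all trunk nodes carry the same label $a$, the two trunks are ``interchangeable'': any $\psi$-type from $T_\L$ realized on a left subtree of $U_1$ can also be realized on any left subtree of $U_2$, and vice versa.

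First I would set up the fold of $\psi$-types down a trunk. Given a tree as in the statement and a marking $X$, write $\rho_j=\pht{\psi}{\cdot\restr{\R^j}}{\varnothing[\X\mapsto X\restr{\R^j}]}$ for the $\psi$-type of the subtree rooted at the $j$-th trunk node. \cref{lem:compositionality} gives $\rho_j=\Comp_{a,\psi}(S_j,\lambda_j,\rho_{j+1})$, where $S_j\subseteq\set{\X}$ records whether the trunk node lies in $X$ and $\lambda_j\in T_\L$ is the realized left-subtree type. Because $|\Pht{\psi}|\le|\Pht{\efin\X.\psi}|\le N$, among any $N{+}1$ consecutive trunk positions some value of $\rho$ repeats; a repeat $\rho_{j_1}=\rho_{j_2}$ with $p:=j_2-j_1\le N$ yields a block $[j_1,j_2)$ whose ``incoming'' and ``outgoing'' $\psi$-types coincide, and crucially $p$ divides $N!$.

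For Part (1) (pumping up, since $U_2$'s trunk is longer by $N!$), given $X_1$ on $U_1$ I locate such a block among $\rho_0,\dots,\rho_N$ (available since $k\ge N$) and build $X_2$ via a trunk correspondence that cycles through this block $N!/p$ additional times, thereby covering exactly the $N!$ surplus trunk nodes of $U_2$. Along the correspondence I copy trunk membership and, on each left subtree of $U_2$, choose a marking realizing the required type $\lambda_m\in T_\L$ (possible since that subtree's $\efin\X.\psi$-type is $T_\L$); on the tail I put $X_2\restr{\R^{k+N!}}=X'_2$. Folding from the bottom and using $\rho_{j_1}=\rho_{j_2}$ at every splice, together with the hypothesis that the tail types agree, gives $\pht{\psi}{U_1}{\varnothing[\X\mapsto X_1]}=\pht{\psi}{U_2}{\varnothing[\X\mapsto X_2]}$, the single-step equalities being packaged by \cref{exlemma1}. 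Since every $U_1$ trunk position is used at least once and membership is copied verbatim, the required inequality on trunk-element counts follows. Part (2) (pumping down) is symmetric: given $X_2$ on $U_2$ I excise removable blocks to shrink $U_2$'s trunk to length $k$, transfer trunk memberships and realize the needed left-subtree types on $U_1$, and set $X_1\restr{\R^k}=X'_1$; no count condition is demanded.

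The hard part will be the pump-down direction, specifically excising \emph{exactly} $N!$ trunk nodes while preserving the $\psi$-type: a single short repeat only justifies deleting one block of length $p\le N$. I would handle this either by iterating the deletion $N!/p$ times---legitimate because after each deletion the trunk is still long enough for the pigeonhole step and the tail is never touched---or, more cleanly, by tracking the marking-independent $\efin\X.\psi$-types $\sigma_j=\pht{\efin\X.\psi}{\cdot\restr{\R^j}}{\varnothing}$, which evolve along the trunk under a fixed map $\sigma_j=g(\sigma_{j+1})$ and hence enter a cycle of length $p\le N$ with $p\mid N!$; this directly exposes a length-$N!$ block with matching endpoint types. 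The only other point requiring care is checking that realizing an arbitrary type of $T_\L$ on a \emph{fixed} target left subtree is always possible, which is exactly the content of the shared $\efin\X.\psi$-type; the remaining verifications are routine bookkeeping with \cref{lem:compositionality,exlemma1}.
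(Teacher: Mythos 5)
Your Part (1) is essentially the paper's argument and is fine: pigeonhole on the $\psi$-types $\pht{\psi}{U_1\restr{\R^i}}{\varnothing[\X\mapsto X_1\restr{\R^i}]}$ for $i\in[0,N]$ yields a block of period $p\le N$ (hence $p\mid N!$) that can be replicated to absorb the $N!$ surplus, after which trunk memberships are copied and left-subtree types are re-realized via the shared $\efin\X.\psi$-type.

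The gap is in Part (2): neither of your two fixes works as stated. Iterated deletion fails because after excising one block the next available repeat may have a different period, so the total number of excised trunk nodes is an uncontrolled sum $\sum_i p_i$ that you cannot force to equal $N!$ exactly; the paper explicitly flags this (``it is impossible to ensure that the length of the trunk in the resulting tree will be exactly $k$''). Your cleaner alternative --- tracking the marking-independent types $\sigma_j=\pht{\efin\X.\psi}{U_2\restr{\R^j}}{\varnothing}$ and using eventual periodicity to get $\sigma_0=\sigma_{N!}$ --- founders on the tail constraint: equality of these \emph{sets} only says that $\tau=\pht{\psi}{U_2}{\varnothing[\X\mapsto X_2]}$ is realizable on $U_2\restr{\R^{N!}}$ by \emph{some} finite marking, with no control over that marking's restriction to $\R^{k+N!}$. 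But Item~2 demands $X_1\restr{\R^k}=X'_1$, and your position-by-position transfer to $U_1$ needs the tail $\psi$-types to match, which is guaranteed only for the specific pair $X'_1,X'_2$. The paper's solution is to work with the tail-constrained sets $\Theta_i=\set{\pht{\psi}{U_1\restr{\R^i}}{\varnothing[\X\mapsto X]}\mid X\restr{\R^{k-i}}=X'_1}$ (each an element of $\Pht{\efin\X.\psi}$, though not itself an $\efin\X.\psi$-type) and, crucially, to \emph{stretch $U_1$} at a place where $\Theta$ repeats rather than contract $U_2$, so the exact-length problem never arises; $X_2$ is then transferred to the stretched copy and pulled back through the replication identity. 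Your deterministic-evolution idea can be repaired by replacing $\sigma_j$ with these constrained sets, since they too evolve under a fixed map determined by the trunk label and the common left $\efin\X.\psi$-type, but as written the proposal misses the ingredient that makes the pump-down direction go through.
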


\begin{proof}
	Concentrate first on Item 1.
	As a first step, we construct a tree $U_{1.5}$ together with a set $X_{1.5}$, starting from $U_1$ and $X_1$.
	To this end, for each level $i\in[0,N]$ we consider the $\psi$-type $\pht{\psi}{U_1\restr{\R^i}}{\varnothing[\X\mapsto X_1\restr{\R^i}]}\in\Pht\psi$.
	Because $N\geq|\Pht{\efin\X.\psi}|\geq|\Pht\psi|$, we can find a pair of levels $\ell,m$ (with $0\leq\ell<m\leq N\leq k$)
	such that $\pht{\psi}{U_1\restr{\R^\ell}}{\varnothing[\X\mapsto X_1\restr{\R^\ell}]}=\pht{\psi}{U_1 \restr{\R^m}}{\varnothing[\X\mapsto X_1\restr{\R^m}]}$.
	We then replicate $\frac{N!}{m-\ell}+1$ times the fragment of $U_1$ lying between the nodes $\R^\ell$ (inclusively) and $\R^m$ (exclusively), naming the resulting tree $U_{1.5}$
	(importantly $m-\ell$ is at most $N$, so it divides $N!$).
	Simultaneously we create a set $X_{1.5}$ by taking $X_1$ and adding to it the copies of the nodes from the replicated fragments that belonged to $X_1$
	(of course the part of $X_1$ containing descendants of $\R^m$ is now shifted appropriately, together with the subtree).
	Repeated application of \cref{exlemma1} then implies that $\pht{\psi}{U_{1.5}}{\varnothing[\X \mapsto X_{1.5}]} = \pht{\psi}{U_1}{\varnothing[\X \mapsto X_1]}$
	(for every $i\in[0,\frac{N!}{m-\ell}+1]$, on level $\ell+i(m-\ell)$ we obtain the same $\psi$-type as originally on levels $\ell$ and $m$).

	As a second step, from $X_{1.5}$ we construct $X_2\subseteq\dom(U_2)$.
	The part $X_2\restr{\R^{k+N!}}$ is already specified to be $X'_2$,
	which by the assumption $\pht{\psi}{U_1\restr{\R^k}}{\varnothing[\X \mapsto X'_1]} = \pht{\psi}{U_2\restr{\R^{k+N!}}}{\varnothing[\X \mapsto X'_2]}$
	guarantees $\pht{\psi}{U_{1.5}\restr{\R^{k+N!}}}{\varnothing[\X \mapsto X_{1.5}\restr{\R^{k+N!}}]} = \pht{\psi}{U_2\restr{\R^{k+N!}}}{\varnothing[\X \mapsto X_2\restr{\R^{k+N!}}]}$.
	Among nodes $\R^{i-1}$ for $i\in[1,k+N!]$ (i.e., the trunk) we take to $X_2$ exactly those that were in $X_{1.5}$.
	Consider now a subtree rooted in $\R^{i-1}\L$ for $i\in[1,k+N!]$.
	By assumption, $U_{1.5}\restr{\R^{i-1}\L}$ and $U_2\restr{\R^{i-1}\L}$ have the same $\efin\X.\psi$-type.
	Such a $\efin\X.\psi$-type (for, say, $U_{1.5}$) is defined as a set containing $\psi$-types $\pht{\psi}{U_{1.5}\restr{\R^{i-1}\L}}{\varnothing[\X\mapsto X'']}$
	for all finite sets $X''\subseteq\dom(U_{1.5}\restr{\R^{i-1}\L})$;
	in particular, it contains $\pht{\psi}{U_{1.5}\restr{\R^{i-1}\L}}{\varnothing[\X\mapsto X_{1.5}\restr{\R^{i-1}\L}]}$.
	The equality of $\efin\X.\psi$-types implies that there is also a finite set $X''\subseteq\dom(U_2\restr{\R^{i-1}\L})$ such that
	$\pht{\psi}{U_{1.5}\restr{\R^{i-1}\L}}{\varnothing[\X\mapsto X_{1.5}\restr{\R^{i-1}\L}]}=\pht{\psi}{U_2\restr{\R^{i-1}\L}}{\varnothing[\X\mapsto X'']}$.
	We then choose the part $X_2\restr{\R^{i-1}\L}$ of $X_2$ to be equal $X''$.
	In consequence, the trunks of the two trees look the same, and the subtrees below the trunks have the same $\psi$-types;
	a repeated use of \cref{exlemma1} gives us then $\pht{\psi}{U_{1.5}}{\varnothing[\X \mapsto X_{1.5}]} = \pht{\psi}{U_2}{\varnothing[\X \mapsto X_2]}$.

	Following the established equalities, we conclude that $\pht{\psi}{U_1}{\varnothing[\X \mapsto X_1]} = \pht{\psi}{U_2}{\varnothing[\X \mapsto X_2]}$.
	When counting nodes of the trunks that are in the sets $X_1,X_{1.5},X_2$, we see that their number between $X_1$ and $X_{1.5}$ can only increase
	(we repeat some fragments of $X_1$) and later it remains unchanged; this gives us the inequality required by the statement of the \lcnamecref{MainLemmaForULeftLemma5}.

	Next, come to Item 2.
	We remark that the previous method cannot be applied directly: although on the trunk of $U_2$ we can find points with the same $\psi$-types,
	and we can cut off fragments of $U_2$ between these points,
	it is impossible to ensure that the length of the trunk in the resulting tree will be exactly $k$.
	We thus again start from the smaller tree, $U_1$ (without any fixed set $X_1$ in it), and to each level $i\in[0,N]$ we assign the set of $\psi$-types
	$\Theta_i=\set{\pht{\psi}{U_1\restr{\R^i}}{\varnothing[\X\mapsto X_1]}\mid X_1\subseteq\dom(U_1\restr{\R^i})\land X_1\restr{\R^{k-i}}=X'_1}$
	(here and below we quantify only over finite sets).
	Note that every $\Theta_i$ belongs to $\Pht{\efin\X.\psi}$ (although it is not the $\efin\X.\psi$-type of $U_1\restr{\R^i}$).
	Because $N\geq|\Pht{\efin\X.\psi}|$, we can find a pair of levels $\ell,m$ (with $0\leq\ell<m\leq N\leq k$) for which $\Theta_\ell=\Theta_m$.
	We then construct $U_{1.5}$ from $U_1$ by replicating $\frac{N!}{m-\ell}+1$ times the fragment of $U_1$ lying between the nodes $\R^\ell$ (inclusively) and $\R^m$ (exclusively).
	By \cref{exlemma1} it follows that every $\Theta_i$ is completely determined by $\Theta_{i+1}$ and by the subtree rooted in $\R^i\L$.
	Thus, the replication does not change the set assigned to the root; we have
	\begin{align}
		\set{\pht{\psi}{U_1}{\varnothing[\X\mapsto X_1]}\mid X_1\subseteq\dom(U_1)\land X_1\restr{\R^k}=X'_1}&\nonumber\\
		&\hspace{-12em}=\set{\pht{\psi}{U_{1.5}}{\varnothing[\X\mapsto X_{1.5}]}\mid X_{1.5}\subseteq\dom(U_{1.5})\land X_{1.5}\restr{\R^{k+N!}}=X'_1}.\label{eq:2}
	\end{align}
	
	Having now a set $X_2\subseteq\dom(U_2)$ we transfer it to a set $X_{1.5}\subseteq\dom(U_{1.5})$, so that the $\psi$-type remains the same
	and we have $X_{1.5}\restr{\R^{k+N!}}=X'_1$---%
	in the same way as we previously transferred a set from $U_{1.5}$ to $U_2$.
	Then, by \cref{eq:2} we can find a set $X_1\subseteq\dom(U_1)$, so that the $\psi$-type again remains the same and we have $X_1\restr{\R^k}=X'_1$.
\end{proof}

\begin{observation}\label{MainLemmaForULeftLemma5LeftPath}
	The above lemma holds also when the two directions $\L,\R$ are swapped (then trunks go leftwards);
	the proof is completely symmetric.
\end{observation}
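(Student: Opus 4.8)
The plan is to turn the phrase ``completely symmetric'' into a genuine reduction: I would exhibit a mirror involution on trees that interchanges the two directions and commutes with type computation, so that the leftward statement becomes nothing more than \cref{MainLemmaForULeftLemma5} itself, read on the mirrored trees. First I would define a map $\mathrm{mir}$ by $\mathrm{mir}(a\symb{T_1,T_2})=a\symb{\mathrm{mir}(T_2),\mathrm{mir}(T_1)}$ and $\mathrm{mir}(\bot)=\bot$; equivalently $\mathrm{mir}(T)(w)=T(\bar w)$, where $\bar w$ is the address $w$ with every $\L$ and $\R$ swapped. I would extend it to node sets by $\mathrm{mir}(X)=\set{\bar w\mid w\in X}$ and to valuations by $\mathrm{mir}(\nu)(\X)=\mathrm{mir}(\nu(\X))$; since $\bar{\bar w}=w$, the map $\mathrm{mir}$ is its own inverse. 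For a formula $\psi$, let $\psi^\dagger$ be the result of swapping $\child_\L$ and $\child_\R$ inside every child atom, leaving all other constructs untouched.

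The key observation, to be checked first, is that $\psi$ and $\psi^\dagger$ have literally the same type set, $\Pht{\psi^\dagger}=\Pht\psi$, and hence $\Pht{\efin\X.\psi^\dagger}=\Pht{\efin\X.\psi}$ as well. This holds because $\Pht{}$ is defined by a recursion in which the child-atom clause, $\Pht{\X\child_d\Y}=\set{\true,\mathsf{empty},\mathsf{root},\false}$, does not mention the direction $d$, and all other clauses are unchanged between $\psi$ and $\psi^\dagger$. In particular the hypothesis $|\Pht{\efin\X.\psi}|\leq N$ transfers verbatim to $\psi^\dagger$. With the type sets identified, I would prove by induction on $\psi$ the commutation
\[
	\pht{\psi}{T}{\nu}=\pht{\psi^\dagger}{\mathrm{mir}(T)}{\mathrm{mir}(\nu)}\,.
\]
The only case needing attention is the child atom: $T,\nu\models\X\child_\L\Y$ exactly when $\mathrm{mir}(T),\mathrm{mir}(\nu)\models\X\child_\R\Y$, because a left child at address $w$ sits at the mirrored address $\bar w$ as a right child, whereas the values $\mathsf{empty}$ and $\mathsf{root}$ are assigned by direction-free conditions on $\nu(\X),\nu(\Y)$ and so survive the swap unchanged. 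All remaining constructs are defined by clauses that commute with a bijective relabeling of nodes; the inductive step then follows from the facts that $X\mapsto\mathrm{mir}(X)$ is a bijection on finite node sets preserving cardinality, that it respects restriction to subtrees in the mirrored sense ($\mathrm{mir}(X)\restr\L=\mathrm{mir}(X\restr\R)$ and symmetrically), and that $\mathrm{mir}(\nu)[\X\mapsto\mathrm{mir}(X)]=\mathrm{mir}(\nu[\X\mapsto X])$, which handles both $\efin$ and $\U$.

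Finally I would derive the observation. Given $U_1,U_2,X_1',X_2'$ satisfying the leftward hypotheses for $\psi$ (trunks running along $\L^{i-1}$, side subtrees $U_j\restr{\L^{i-1}\R}$ of equal $\efin\X.\psi$-type, and matching tail $\psi$-types), applying $\mathrm{mir}$ produces $\mathrm{mir}(U_1),\mathrm{mir}(U_2),\mathrm{mir}(X_1'),\mathrm{mir}(X_2')$ satisfying exactly the rightward hypotheses of \cref{MainLemmaForULeftLemma5} for $\psi^\dagger$: the trunks now run along $\R^{i-1}$, the side subtrees sit at $\R^{i-1}\L$, the relevant $\efin\X.\psi^\dagger$-types coincide by the commutation and the identification $\Pht{\efin\X.\psi^\dagger}=\Pht{\efin\X.\psi}$, and $|\Pht{\efin\X.\psi^\dagger}|\leq N$. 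Invoking \cref{MainLemmaForULeftLemma5} yields the witnessing sets, the type equalities, and the cardinality inequality for $\psi^\dagger$ on the mirrored trees; pulling everything back through $\mathrm{mir}=\mathrm{mir}^{-1}$ and the commutation, and using that $\mathrm{mir}$ preserves cardinalities, gives the corresponding conclusion for $\psi$ on $U_1,U_2$. The only mildly subtle point is the bookkeeping in the base case, ensuring that $\mathsf{empty}$ and $\mathsf{root}$ behave correctly and that the identification of type sets is exact; once this is in place, the rest is a routine transport of the statement along the involution, which is precisely the content of the word ``symmetric.'' (Alternatively, one may simply re-read the proof of \cref{MainLemmaForULeftLemma5} with $\L$ and $\R$ interchanged throughout, since its only ingredients, \cref{exlemma1} and \cref{lem:compositionality}, treat the two directions symmetrically.)
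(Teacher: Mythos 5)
Your proposal is correct and matches the paper, whose entire ``proof'' is the one-line assertion of symmetry: you simply make that assertion rigorous via the mirror involution $\mathrm{mir}$, the dualized formula $\psi^\dagger$ (needed because the child atoms $\X\child_d\Y$ mention directions), and the type-commutation $\pht{\psi}{T}{\nu}=\pht{\psi^\dagger}{\mathrm{mir}(T)}{\mathrm{mir}(\nu)}$, with all the transport details (restriction identities, preservation of cardinalities and of the trunk-counting inequality) checking out. Your parenthetical alternative---re-reading the proof of \cref{MainLemmaForULeftLemma5} with $\L$ and $\R$ interchanged, which works because its only ingredients, \cref{exlemma1} and the pigeonhole on types, are direction-neutral---is precisely the argument the paper intends.
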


We now have two corollaries.
Recall the paths $S_{m,n}$ from \cref{def:t1t2}.

\begin{corollary}\label{stretch-path}
	Let $\psi$ be such that $|\Pht{\efin\X.\psi}|\leq N$, and let $m,m',n'n'\in\Nat$.
	Then for all $X_1\subseteq\dom(S_{m,n})$ there exists $X_2\subseteq\dom(S_{m',n'})$ such that
	$\pht{\psi}{S_{m,n}}{\varnothing[\X\mapsto X_1]}=\pht{\psi}{S_{m',n'}}{\varnothing[\X\mapsto X_2]}$;
	if $m\leq m'$, then moreover the number of $\a$-labeled elements of $X_2$ is not smaller than the number of $\a$-labeled elements of $X_1$;
	similarly, if $n\leq n'$, then the number of $\b$-labeled elements of $X_2$ is not smaller than the number of $\b$-labeled elements of $X_1$;
	in consequence, if $m\leq m'$ and $n\leq n'$, then $|X_1|\leq|X_2|$.
\end{corollary}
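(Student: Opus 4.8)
The plan is to reduce to \cref{MainLemmaForULeftLemma5}, applied separately to the two homogeneous segments of the vault, changing the length of each segment in steps of $N!$. Since a vault is a leftward path, I would use the version of \cref{MainLemmaForULeftLemma5} with $\L$ and $\R$ swapped (\cref{MainLemmaForULeftLemma5LeftPath}); in each application the ``trunk'' is the segment being pumped, all side subtrees are $\bot$ (hence trivially share the same $\efin\X.\psi$-type), and the trunk labels are uniformly $\a$ or $\b$. Every segment length is a positive multiple of $N!$, hence at least $N!\geq N$, so the hypothesis $k\geq N$ of \cref{MainLemmaForULeftLemma5} is always met (I tacitly take $m,m',n,n'\geq 1$, which covers all vaults, the smallest being $S_{1,1}$).

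First I would adjust the $\a$-labeled part, transforming $S_{m,n}$ into $S_{m',n}$. Here the $\a$-path is at the root and the $\b$-path sits below it, so I apply \cref{MainLemmaForULeftLemma5} directly to the whole tree, keeping its bottom (the $\b$-path together with the corresponding part of the set) fixed; the bottom-equality hypothesis then holds trivially because the two trees share the very same bottom. If $m\leq m'$ I iterate Item~1 exactly $m'-m$ times to grow the trunk from $mN!$ to $m'N!$; each step preserves the $\psi$-type and, by the trunk inequality of Item~1, does not decrease the number of $\a$-labeled set elements. If $m>m'$ I iterate Item~2 instead to shrink the trunk; this still preserves the $\psi$-type, and no size guarantee is required in this case. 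Throughout, the $\b$-part of the set stays untouched.

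Next I would adjust the $\b$-labeled part, transforming $S_{m',n}$ into $S_{m',n'}$. The difficulty is that the $\b$-path is not rooted at the top of the tree, so I cannot feed the whole tree to \cref{MainLemmaForULeftLemma5}. Instead I apply the lemma to the $\b$-subtree alone (trunk $=$ $\b$-path, side subtrees $\bot$, bottom $\bot$), again iterating Item~1 (when $n\leq n'$, monotone in the number of $\b$-labeled elements) or Item~2 (when $n>n'$). This yields a new $\b$-path of the same $\psi$-type, which I then re-attach below the unchanged $\a$-path; a level-by-level application of \cref{exlemma1} up the $\a$-path lifts the $\psi$-type equality of the $\b$-subtrees to a $\psi$-type equality of the whole trees, while leaving the number of $\a$-labeled elements unchanged.

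Composing the two stages and using transitivity of type equality gives the required $X_2\subseteq\dom(S_{m',n'})$ with $\pht{\psi}{S_{m,n}}{\varnothing[\X\mapsto X_1]}=\pht{\psi}{S_{m',n'}}{\varnothing[\X\mapsto X_2]}$. For the size claims I observe that the two monotonicity assertions are independent: the number of $\a$-labeled elements can only grow in the first stage (when $m\leq m'$) and is preserved in the second, while the number of $\b$-labeled elements is preserved in the first stage and can only grow in the second (when $n\leq n'$); hence $|X_1|\leq|X_2|$ whenever $m\leq m'$ and $n\leq n'$. The main obstacle is the bookkeeping of the second stage---detaching the $\b$-segment, pumping it, and re-lifting the type equality through the $\a$-path via compositionality---together with checking that each monotonicity claim hinges only on its own stage.
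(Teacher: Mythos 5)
Your proof is correct and follows essentially the same route as the paper's: iterated application of the mirrored \cref{MainLemmaForULeftLemma5} (Item~1 to stretch, Item~2 to contract) separately to the homogeneous $\a$- and $\b$-segments in steps of $N!$, with the empty side subtrees supplying the $\efin\X.\psi$-type hypothesis, \cref{exlemma1} lifting the type equality of the modified segment through the untouched one, and the observation that each monotonicity claim depends only on its own stage. The only (immaterial) difference is that the paper adjusts the $\b$-part first and then the $\a$-part, whereas you do the reverse.
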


\begin{proof}
	First, we $|n'-n|$ times use the mirrored version of \cref{MainLemmaForULeftLemma5} (cf.~\cref{MainLemmaForULeftLemma5LeftPath})
	to change the length of the bottom $\b$-labeled part of the path from $nN!$ to $n'N!$ (and we use \cref{exlemma1} to see that the type of the whole path remains unchanged);
	we need Item 1 to stretch the path if $n\leq n'$, and Item 2 to contract the path if $n\geq n'$.
	Next, we use the \lcnamecref{MainLemmaForULeftLemma5} another $|m'-m|$ times to change the length of the top $\a$-labeled part of the path from $mN!$ to $m'N!$.
	Note that on the side of the path we have empty subtrees, which clearly all have the same $\efin\X.\psi$-type.
	When $n\leq n'$ and/or $m\leq m'$, we use Item 1, which ensures that the size of the appropriate part of the set does not decrease.
	It is also important that when we stretch/contract one part of the path, potentially increasing the number of its elements in the set,
	then the other part of the set (contained in the rest of the path) remains unchanged.
\end{proof}

\begin{corollary}\label{the-same-efin-type}
	Let $\psi$ be such that $|\Pht{\efin\X.\psi}|\leq N$, and let $m,m',n'n'\in\Nat$.
	Then $\pht{\efin\X.\psi}{S_{m,n}}{\varnothing}=\pht{\efin\X.\psi}{S_{m',n'}}{\varnothing}$.
\end{corollary}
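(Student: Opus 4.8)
The plan is to unfold the definition of the $\efin\X.\psi$-type and observe that the statement follows almost immediately from \cref{stretch-path}. Recall that, by definition,
\begin{align*}
	\pht{\efin\X.\psi}{S_{m,n}}{\varnothing}=\set{\pht{\psi}{S_{m,n}}{\varnothing[\X\mapsto X]}\mid X\subseteq\dom(S_{m,n})\text{ finite}}\,,
\end{align*}
so the $\efin\X.\psi$-type of a vault is precisely the set of all $\psi$-types realizable by choosing some finite set of marked nodes inside it. The goal thus reduces to showing that the set of realizable $\psi$-types is the same for $S_{m,n}$ and for $S_{m',n'}$.

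First I would prove the inclusion $\pht{\efin\X.\psi}{S_{m,n}}{\varnothing}\subseteq\pht{\efin\X.\psi}{S_{m',n'}}{\varnothing}$. Take any $\sigma$ on the left-hand side; by definition it equals $\pht{\psi}{S_{m,n}}{\varnothing[\X\mapsto X_1]}$ for some finite $X_1\subseteq\dom(S_{m,n})$. Applying \cref{stretch-path} to this $X_1$ yields a finite $X_2\subseteq\dom(S_{m',n'})$ with $\pht{\psi}{S_{m,n}}{\varnothing[\X\mapsto X_1]}=\pht{\psi}{S_{m',n'}}{\varnothing[\X\mapsto X_2]}$, so $\sigma$ is realizable in $S_{m',n'}$ as well, i.e.\ $\sigma\in\pht{\efin\X.\psi}{S_{m',n'}}{\varnothing}$.

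The reverse inclusion follows by the very same argument with the roles of $(m,n)$ and $(m',n')$ interchanged, since \cref{stretch-path} is symmetric in the two paths (it is stated for arbitrary $m,n,m',n'$). Combining the two inclusions gives the desired equality of $\efin\X.\psi$-types. I do not expect any genuine obstacle here: the technical content — the pumping argument that lets one stretch or contract either the $\a$-labeled or the $\b$-labeled part of a vault while preserving the $\psi$-type — has already been carried out in \cref{MainLemmaForULeftLemma5} and packaged into \cref{stretch-path}, so this \lcnamecref{the-same-efin-type} is just a matter of reading off the definition of the $\efin$-type and quoting that corollary twice.
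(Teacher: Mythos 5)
Your proposal is correct and matches the paper's own proof: the paper likewise unfolds the definition of the $\efin\X.\psi$-type as the set of realizable $\psi$-types and invokes \cref{stretch-path} to conclude the set is the same for all pairs $(m,n)$. You merely spell out the two inclusions explicitly, which the paper leaves implicit.
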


\begin{proof}
	By definition, $\pht{\efin\X.\psi}{S_{m,n}}{\varnothing}$ is the set of $\psi$-types $\pht{\psi}{S_{m,n}}{\varnothing[\X\mapsto X]}$ for all possible $X\subseteq\dom(S_{m,n})$;
	by \cref{stretch-path} this set is the same for all pairs $(m,n)$.
\end{proof}

Next, we use \cref{MainLemmaForULeftLemma5} and its corollaries to prove \cref{lem:T1-T2,lem:T2-T1}.
Recall that they talk about trees $T_1,T_2$ from \cref{def:t1t2}.

\lemTT*
\begin{proof}
	Based on $X_1$ we construct $X_2$.
	First, we choose a number $k'$ such that no elements of the finite set $X_1$ are descendants of $\R^{(k'-k)N!}$, and we take $\ell'=\ell+(k'-k)$.
	We also do not take any descendants of $\R^{(\ell'-\ell)N!}$ to $X_2$.
	Then, by assumptions of our lemma,
	\begin{align}\label{eq:3}
		\pht{\psi}{T_1\restr\R^{k'N!}}{\varnothing[\X\mapsto X_1\restr\R^{(k'-k)N!}]}=\pht{\psi}{T_2\restr\R^{\ell' N!}}{\varnothing[\X\mapsto X_2\restr\R^{(\ell'-\ell)N!}]}.
	\end{align}
	Next, elements of $X_1$ lying outside of vaults are taken to $X_2$ without any change.
	Finally, for every $i\in[0,k'-k-1]$ we construct the part $X_2\restr\R^{iN!}\L$ of $X_2$ (lying in a vault $S_{n,n}$ attached on level $(\ell+i)N!$ in $T_2$)
	using \cref{stretch-path}, so that 
	\begin{align}\label{eq:4}
		\pht{\psi}{T_1\restr\R^{(k+i)N!}\L}{\varnothing[\X\mapsto X_1\restr\R^{iN!}\L]}=\pht{\psi}{T_2\restr\R^{(\ell+i)N!}\L}{\varnothing[\X\mapsto X_2\restr\R^{iN!}\L]}.
	\end{align}
	Because we have equality of $\psi$-types on top of the vaults (\cref{eq:4}) and below the considered fragment of the trees (\cref{eq:3}),
	and because above these places the two trees with the sets in them look the same, by \cref{exlemma1} we obtain that 
	$\pht{\psi}{T_1\restr\R^{kN!}}{\varnothing[\X\mapsto X_1]}=\pht{\psi}{T_2\restr\R^{\ell N!}}{\varnothing[\X\mapsto X_2]}$.
	
	If $k+i\leq 2(\ell+i)$ (i.e., $i\geq k-2\ell$), then the vault $S_{1,m}$ or $S_{m,1}$ starting in the node $\R^{(k+i)N!}\L$ of $T_1$
	and the vault $S_{m',m'}$ starting in the node $\R^{(\ell+i)N!}\L$ of $T_2$ satisfy $m\leq m'$;
	in this case \cref{stretch-path} guarantees that $|X_1\restr\R^{iN!}\L|\leq|X_2\restr\R^{iN!}\L|$.
	When $k\leq 2\ell$, the above covers all $i\geq 0$, so we simply have $|X_1|\leq|X_2|$.
	Otherwise, for $i<k-2\ell$ we have $|X_1\restr\R^{iN!}\L|\leq|\dom(T_1\restr\R^{(k+i)N!}\L)|\leq|\dom(T_1\restr\R^{2(k-\ell)N!}\L)|=|\dom(S_{1,k-\ell})|=(k-\ell+1)\cdot N!$
	and $0\leq|X_2\restr\R^{iN!}\L|$;
	we thus obtain $|X_1|-(k-2\ell)\cdot(k-\ell+1)\cdot N!\leq|X_2|$.
	This gives us the thesis for $f(n)=n-\max(0,k-2\ell)\cdot(k-\ell+1)\cdot N!$.
\end{proof}

\lemTTT*
\begin{proof}
	Suppose first that at least half of elements of $X_2$ lie outside of the vaults.
	We then create $X_1$ based on $X_2$ in the same way as previously (i.e., in the proof of \cref{lem:T1-T2}) we created $X_2$ based on $X_1$.
	This time we do not have any size guarantees for parts of the sets lying in the vaults, but the part of $X_2$ lying outside of the vaults remains unchanged in $X_1$,
	so by our assumption we have $|X_1|\geq f_1(|X_2|)$ for $f_1(n)=\frac{n}{2}$.
	
	We now come to the most interesting case, when at least half of elements of $X_2$ lie inside the vaults.
	We then check which label is more frequent among the vault elements of $X_2$.
	Suppose this is $\a$; then we have at least $\frac{|X_2|}{4}$ $\a$-labeled elements of $X_2$ in the vaults.
	The general idea is to map every vault $S_{i,i}$ to $S_{i,1}$, and adjust all the rest accordingly.
	To this end, recall that $S_{i,i}$ is attached on level $iN!$ in $T_2$, and $S_{i,1}$ is attached on level $2i-1$ in $T_1$.
	We first choose a number $\ell'$ such that no element of $X_2$ is a descendant of $\R^{(\ell'-\ell)N!}$ (i.e., lies in the vault $S_{\ell',\ell'}$ or below).
	Setting $k'=2\ell'-1$, we then also take no descendants of $\R^{(k'-k)N!}$ (i.e., lying in the vault $S_{\ell',1}$ or below) to $X_1$,
	which by assumptions of our lemma implies \cref{eq:3}.
	
	We also set $\ell_0=\max(\ell,\frac{k}{2})+1$, which ensures $\ell_0\geq\ell+1$ and $2\ell_0-1\geq k+1$.
	For every $i\in[\ell_0,\ell'-1]$ we create $X_1\restr\R^{(2i-1-k)N!}\L$ (lying in the vault $S_{i,1}$ of $T_1$)
	based on $X_2\restr\R^{(i-\ell)N!}\L$ (lying in the vault $S_{i,i}$ of $T_2$), using \cref{stretch-path},
	so that
	\begin{align*}
		\pht{\psi}{T_1\restr\R^{(2i-1)N!}\L}{\varnothing[\X\mapsto X_1\restr\R^{(2i-1-k)N!}\L]}=\pht{\psi}{T_2\restr\R^{iN!}\L}{\varnothing[\X\mapsto X_2\restr\R^{(i-\ell)N!}\L]}
	\end{align*}
	and that $|X_1\restr\R^{(2i-1-k)N!}\L|$ is not smaller than the number of $\a$-labeled nodes in $X_2\restr\R^{(i-\ell)N!}\L$.
	In the topmost $\ell_0-\ell$ vaults of $X_2$ we could have at most $(\ell_0-\ell)\cdot\ell_0\cdot N!$ nodes labeled by $\a$,
	so we already have $|X_1|\geq f_2(|X_2|)$ for $f_2(n)=\frac{n}{4}-(\ell_0-\ell)\cdot\ell_0\cdot N!$, no matter how the remaining elements of $X_1$ are selected
	(note that $\ell_0$ does not depend on $X_2$, only on $k$ and $\ell$).
	
	Next, for every $i\in[\ell_0,\ell'-1]$ (formally: starting from the bottom, i.e., from the greatest $i$)
	we use \cref{MainLemmaForULeftLemma5} to choose to $X_1$ some nodes of $T_1$ lying between the vaults $S_{i,1}$ and $S_{i+1,1}$
	(the trunk here has length $2N!-1$, with vault $S_{1,i}$ in the middle) based on the part of $X_2$ between the vaults $S_{i,i}$ and $S_{i+1,i+1}$ in $T_2$
	(where the trunk has length $N!-1$) obtaining
	\begin{align*}
		\pht{\psi}{T_1\restr\R^{(2i-1)N!+1}}{\varnothing[\X\mapsto X_1\restr\R^{(2i-1-k)N!+1}]}=\pht{\psi}{T_2\restr\R^{iN!+1}}{\varnothing[\X\mapsto X_2\restr\R^{(i-\ell)N!+1}]}.
	\end{align*}
	Note that the additional vault $S_{1,i}$ and the non-vault paths $S_{1,1}$ have the same $\efin\X.\psi$-type by \cref{the-same-efin-type}.
	The element just above $S_{i,1}$ in $T_1$ is selected to $X_1$ if and only if the element just above $S_{i,i}$ in $T_2$ was in $X_2$,
	which by \cref{exlemma1} gives us
	\begin{align*}
		\pht{\psi}{T_1\restr\R^{(2i-1)N!}}{\varnothing[\X\mapsto X_1\restr\R^{(2i-1-k)N!}]}=\pht{\psi}{T_2\restr\R^{iN!}}{\varnothing[\X\mapsto X_2\restr\R^{(i-\ell)N!}]},
	\end{align*}
	allowing to proceed by induction on $\ell'-i$.
	
	Finally, we need to handle the top part of the trees.
	We again use \cref{MainLemmaForULeftLemma5} to choose to $X_1$ some elements of $T_1$ above the vault $S_{\ell_0,1}$ (the trunk here has length $(2\ell_0-1-k)N!\geq N!$)
	based on the part of $X_2$ above the vault $S_{\ell_0,\ell_0}$ in $T_2$ (where the trunk has length $(\ell_0-\ell)N!\geq N!$),
	obtaining
	\begin{align*}
		\pht{\psi}{T_1\restr\R^{kN!}}{\varnothing[\X\mapsto X_1]}=\pht{\psi}{T_2\restr\R^{\ell N!}}{\varnothing[\X\mapsto X_2]},
	\end{align*}
	as required.
	Note that the difference between the two trunk lengths may be a multiple of $N!$, and may be positive or negative,
	which means that we may need to apply \cref{MainLemmaForULeftLemma5} several times, using either Item 1 or Item 2.
	
	It remains to handle the case when $X_2$ contains more $\b$-labeled vault elements than $\a$-labeled vault elements.
	This case is analogous: this time we map $S_{i,i}$ into $S_{1,i}$, and we obtain $|X_1|\geq f_3(|X_2|)$ for a function $f_3$ similar to $f_2$.
	
	Collecting the above three cases together, we see that the lemma holds for $f(n)=\min(f_1(n),f_2(n),f_3(n))$.
\end{proof}

\end{document}